\newtheorem{thm}{Theorem}
\newtheorem{lemma}{Lemma}
\newtheorem{corollary}{Corollary}
\definecolor{celestialblue}{rgb}{0.29, 0.59, 0.82}
\definecolor{cerulean}{rgb}{0.0, 0.48, 0.65}
\definecolor{cadmiumorange}{rgb}{0.93, 0.53, 0.18}
\DeclareMathOperator*{\argmax}{arg\,max}
\icmltitlerunning{Collective Certified Robustness against Graph Injection Attacks}
\date{}
\begin{document}
\twocolumn[\icmltitle{Collective Certified Robustness against Graph Injection Attacks}
\icmlsetsymbol{equal}{*}

\begin{icmlauthorlist}
\icmlauthor{Yuni Lai}{equal,polyu1}
\icmlauthor{Bailin Pan}{equal,polyu2}
\icmlauthor{Kaihuang Chen}{polyu2}
\icmlauthor{Yancheng Yuan}{polyu2}
\icmlauthor{Kai Zhou}{polyu1}
\end{icmlauthorlist}

\icmlaffiliation{polyu1}{Department of Computing, The Hong Kong Polytechnic University, Hong Kong, China}
\icmlaffiliation{polyu2}{Department of Applied Mathematics, The Hong Kong Polytechnic University, Hong Kong, China}

\icmlcorrespondingauthor{Kai Zhou}{kaizhou@polyu.edu.hk}

\icmlkeywords{Machine Learning, ICML}

\vskip 0.3in
]

\printAffiliationsAndNotice{\icmlEqualContribution} 

\begin{abstract}

We investigate certified robustness for GNNs under graph injection attacks. Existing research only provides \textit{sample-wise} certificates by verifying each node independently, leading to very limited certifying performance. In this paper, we present the first \textit{collective} certificate, which certifies a set of target nodes simultaneously. To achieve it, we formulate the problem as a binary integer quadratic constrained linear programming (BQCLP). We further develop a customized linearization technique that allows us to relax the BQCLP into linear programming (LP) that can be efficiently solved. Through comprehensive experiments, we demonstrate that our collective certification scheme significantly improves certification performance with minimal computational overhead. For instance, by solving the LP within $1$ minute on the Citeseer dataset, we achieve a significant increase in the certified ratio from $0.0\%$ to $81.2\%$ when the injected node number is $5\%$ of the graph size. Our step marks a crucial step towards making provable defense more practical.


\end{abstract}

\section{Introduction}

Graph Neural Networks (GNNs) have emerged as the dominant models for graph learning tasks, demonstrating remarkable success across diverse applications. However, recent studies~\cite{zugner2018netattack,zugner2018metattack,liu2022towards} have revealed the vulnerability of GNNs to adversarial attacks, raising significant concerns regarding their security.
Notably, a new type of attack called Graph Injection Attack (GIA) has raised considerable attention. Unlike the commonly studied Graph Modification Attack (GMA), which involves inserting and deleting edges, GIA will inject carefully crafted malicious nodes into the graph. Recent research~\cite{chen2022understanding,tao2023adversarial,ju2023let} has demonstrated that GIA is not only more cost-efficient but also more powerful than GMA.


To counteract these attacks, significant efforts have been dedicated to robustifying GNNs. Representative defense approaches include adversarial training~\cite{gosch2023adversarial}, the development of robust GNN architectures~\cite{jin2020graph,zhu2019robust,zhang2020gnnguard}, and the detection of adversaries~\cite{zhang2019comparing,zhang2020gcn}. While these approaches are quite effective against \textit{known} attacks, there remains a concern that new adaptive attacks could undermine their robustness.
To tackle the challenge of emerging novel attacks,  researchers have explored \textit{provable defense} approaches~\cite{cohen2019certified,li2023sok,bojchevski2020efficient,scholten2022randomized,schuchardt2023localized} that offer \textit{certified robustness} for GNN models: the predictions of models are theoretically guaranteed to be consistent if the attacker's budget (e.g., the number of edges could be modified) is constrained in a certain range.

\begin{figure}
    \centering
    \includegraphics[width=1\linewidth]{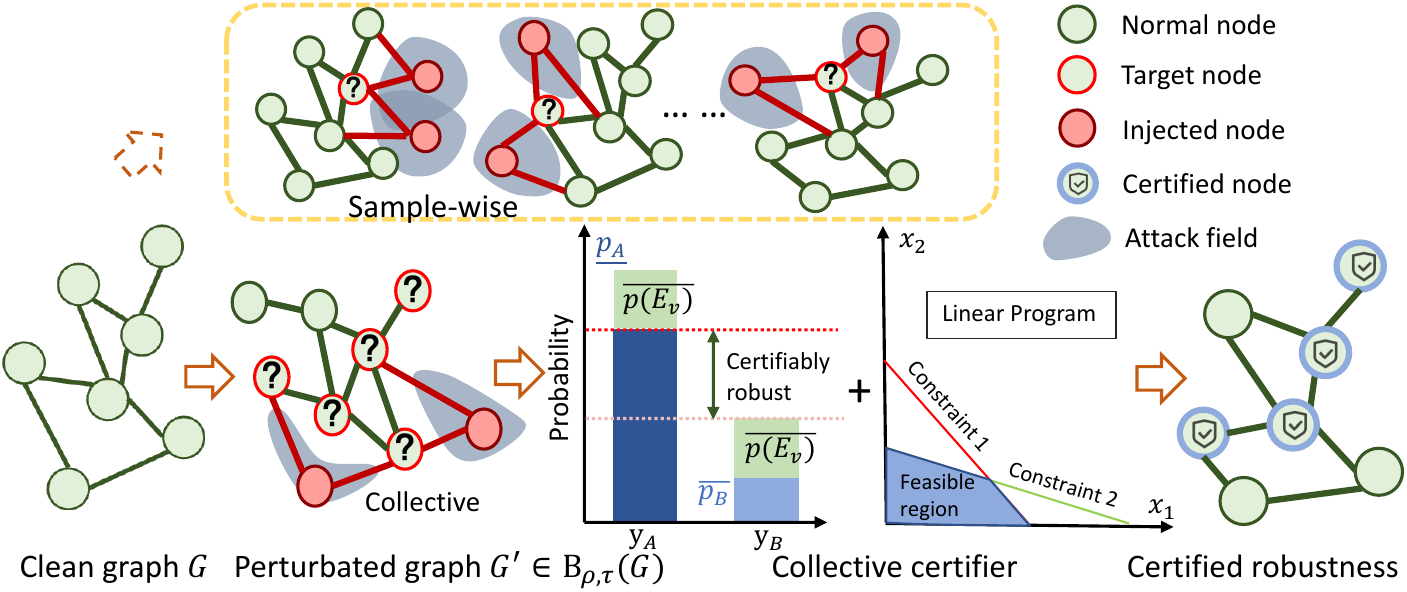}
    \caption{Illustration of collective certification.}
    \label{fig:overview}
    \vspace{-10pt}
\end{figure}

\textbf{Sample-wise vs. Collective certification}\quad The certification against attacks over graphs can be categorized into two types:  \textbf{sample-wise} and \textbf{collective}. Sample-wise certification approaches~\cite{cohen2019certified,bojchevski2020efficient,lai2023node}
essentially verify the prediction for a node \textit{one by one}, assuming that the attacker can craft \textit{a different perturbed graph each time} to attack a single node (Figure.~\ref{fig:overview}, top).
However, in reality, the attacker can only produce \textit{a single perturbed graph} to simultaneously disrupt all predictions for a set of target nodes $\mathbb{T}$  (Figure.~\ref{fig:overview}, bottom). Such a discrepancy makes sample-wise certificates rather pessimistic. 
In contrast, more recent works ~\cite{schuchardt2020collective,schuchardt2023localized} aim to certify the set of nodes at once, providing \textbf{collective certification} that can significantly improve the certifying performance.

In the domain of certifying GNNs, the majority of research works \cite{bojchevski2020efficient,wang2021certified,jia2020certified,jia2022almost,scholten2022randomized} focus on \textit{sample-wise} certification against \textit{GMA}.
The only \textit{collective} certification scheme against GMA proposed by \citet{schuchardt2020collective}, however, is not applicable to GIA. This is because the certification scheme assumes a fixed receptive field of GNNs, while GIA, which involves adding edges after injecting nodes, inevitably expands the receptive field.
Although there are emerging works~\cite{lai2023node,jia2023pore} specifically designed to tackle GIA, they only offer sample-wise certificates, resulting in limited certified performance.

We are therefore motivated to derive the \textbf{first collective certified robustness scheme} for GNNs against GIA.
To achieve collective robustness, we leverage the inherent locality property of GNNs, where the prediction of a node in a $k$-layer message-passing GNN is influenced solely by its $k$-hop neighbors. 
This ensures that injected edges by the attacker only impact a subset of the nodes. We address the collective certification problem by transforming it into a budget allocation problem, considering the attacker's objective of modifying the predictions of as many nodes as possible with a limited number of malicious nodes and maximum edges per node. By overestimating the number of modified nodes, we can certify the consistent classification of the remaining nodes.

However, the above problem yields a binary integer polynomial-constrained program, which is known to be NP-hard. We then propose a \textit{customized} linearization technique to relax the original problem to a Linear Programming (LP), which can be solved efficiently. The LP relaxation provides a lower bound on the achievable certified ratio, ensuring the soundness of the verification process.
We conduct comprehensive experiments to demonstrate the effectiveness as well as the computational efficiency of our collective certification scheme. For example, when the injected node number is $5\%$ of the graph size, our collective robustness models increase the certified ratio from $0.0\%$ to over $80.0\%$ in both Cora-ML and Citeseer datasets, and it only takes about $1$ minutes to solve the collective certifying problem. 

Overall, we propose the first collective certificate for GNNs against graph injection attacks. In particular, it is computationally efficient and can significantly improve the certified ratio. Moreover, this certification scheme is \textit{almost} model-agnostic as it is applicable for any message-passing GNNs.

\section{Background}
\subsection{Graph Node Classification}
We focus our study on graph node classification tasks. Let $G=(\mathcal{V},\mathcal{E},X)\in \mathbb{G}$ represent an undirected graph, where $\mathcal{V}=\{v_1, \cdots, v_n\}$ is the set of $n$ nodes, $\mathcal{E}=\{e_{ij}=(v_i,v_j)\}$ denotes the set of edges with each edge $e_{ij}$ connecting $v_i$ and $v_j$, and $X\in \mathbb{R}^{n\times d}$ represents the features associated with nodes. Equivalently, we can use an adjacency matrix $A\in \{0,1\}^{n\times n}$ with $A_{ij} = 1$ if $e_{ij} \in \mathcal{E}$ and $A_{ij} = 0$ if $e_{ij} \notin \mathcal{E}$ to encode the graph structure of $G$. Each node has its label $y\in\mathcal{Y}=\{1,\cdots, K\}$, but only a subset of these labels are known. The goal of a multi-output graph node classifier $f:\mathbb{G}\rightarrow \{1,\cdots, K\}^n$ is to predict the missing labels given the input graph $G$.

\subsection{Message-Passing Graph Neural Networks}
In this paper, we study certified robustness approaches that are applicable to the most commonly used GNNs that operate under the message-passing framework based on neighbor aggregation. These message-passing GNNs~\cite{kipf2016semi,gilmer2017neural,velivckovic2018graph,geisler2020reliable} encode the local information of each node by aggregating its neighboring node features (i.e., embedding) through various aggregation functions. 
During the inference, the receptive field of a node $v$ in $k$-layer GNN is just its $k$-hops neighbors, and the nodes/edges beyond the receptive field would not affect the prediction of the node when the model is given. This locality enables the application of collective certificates.


\subsection{Certified Robustness from Randomized Smoothing}
Certified robustness aims to provide a theoretical guarantee of the consistency of a model's prediction under a certain perturbation range on the input. Randomized smoothing is a widely adopted and versatile approach for achieving such certification across a range of models and tasks~\cite{jia2020certified,bojchevski2020efficient,li2023sok}. Take the graph model as an example; it adds random noise (such as randomly deleting edges) to the input graph. Then, given any classifier $f$, it builds a smoothed classifier $g$ which returns the ``majority vote" regarding the random inputs. Certification is achieved based on the fact that there is a probability of overlap between the random samples drawn from the clean graph and the perturbed graph, in which the predictions must be the same. 

\section{Problem Statements}
\subsection{Threat Model: Graph Injection Attack}
We focus on providing robustness certificates against graph injection attacks (GIAs) under the \textit{evasion} threat model, where the attack perturbation occurs after the model training. 
The adversaries aim to disrupt the node classifications of a set of target nodes, denoted by $\mathbb{T}$, as many as possible. To this end, it can inject $\rho$ additional nodes $\tilde{\mathcal{V}}=\{\tilde{v}_1,\cdots,\tilde{v}_{\rho}\}$ into the graph. These injected nodes possess \textit{arbitrary node features} represented by the matrix $\tilde{X}\in \mathbb{R}^{\rho \times d}$. Additionally, $\tilde{\mathcal{E}}$ represents the set of edges introduced by the injected nodes. To limit the power of the adversaries and avoid being detected by the defender, we assume that each injected node $\tilde{v}$ is only capable of injecting a maximum of $\tau$ edges. Thus, the degree of each injected node  $\delta(\tilde{v})$ is no more than $\tau$. Let us represent the perturbed graph as $G'$, with its corresponding adjacency matrix denoted as $A'$. We formally define the potential GIA as a perturbations set associated with a given graph $G=(\mathcal{V},\mathcal{E}, X)$:
\begin{align}
\label{eqn:pertb_ball}
    B_{\rho,\tau}(G)&:=\{G'(\mathcal{V}',\mathcal{E}',X')|\mathcal{V}'=\mathcal{V}\cup\tilde{\mathcal{V}}, \mathcal{E}'=\mathcal{E}\cup \tilde{\mathcal{E}}, \nonumber\\ 
    &X'=X \cup \tilde{X},|\tilde{\mathcal{V}}| \leq \rho,\delta(\tilde{v})\leq\tau,\forall \tilde{v} \in \tilde{\mathcal{V}}\}.
\end{align}


Given the absence of a collective certificate to address these types of perturbations, our first contribution is to define the problem of collective robustness.

\subsection{Problem of Collective Certified Robustness}
Following \cite{scholten2022randomized}, we employ randomized smoothing to serve as the foundation of our certification. Intuitively, by adding random noise to the graph, the message from the injected node to a target node has some probability of being intercepted in the randomization, such that the GNN models will not aggregate the inserted node's feature for prediction.
We adopt node-aware bi-smoothing~\cite{lai2023node}, which was proposed to certify against the GIA perturbation, as our smoothed classifier. Given a graph $G$, random graphs are created by a randomization scheme denoted as  $\phi(G)=(\phi_e(G),\phi_n(G))$. It consists of two components: edge deletion smoothing $\phi_e(G)$ and node deletion smoothing $\phi_n(G)$. Specifically, the former randomly deletes each edge with probability $p_e$, and the latter randomly deletes each node (together with its incident edges) with probability $p_n$. Based on these random graphs, a smoothed classifier $g(\cdot)$ is constructed as follows:
\begin{align}
\label{eqn:smooth_g}
    &g_v(G):=\argmax_{y\in \{1,\cdots,K\}}p_{v,y}(G), 
\end{align}
where $p_{v,y}(G):=\mathbb{P}(f_v(\phi(G))=y)$ represents the probability that the base GNN classifier $f$ returned the class $y$ for node $v$ under the smoothing distribution $\phi(G)$, and $g(\cdot)$ returns the ``majority votes'' of the base classifier $f(\cdot)$.



Given a specific attack budget $\rho$ and $\tau$, our objective is to provide certification for the number of target nodes in $\mathbb{T}$ that are guaranteed to maintain consistent robustness against any potential attack. We assume that the attacker's objective is to maximize the disruption of predictions for the target nodes, $\sum_{v\in \mathbb{T}} \mathbb{I}\{g_v(G')\neq g_v(G)\}$, through the allocation of inserting edges. By modeling a worst-case attacker that leads to a maximum number of non-robust nodes, we can certify that the remaining number of nodes is robust.
Such that the collective certification can be formulated as an optimization problem as follows:
\begin{align}
\label{opt:collective_1}
\min_{G'\in B_{\rho,\tau}(G)}\quad  &|\mathbb{T}|-\sum_{v\in \mathbb{T}} \mathbb{I}\{g_v(G')\neq g_v(G)\}, \\
\text{s.t.} \quad
&|\tilde{\mathcal{V}}|\leq \rho,\, \delta(\tilde{v})\leq \tau, \,\forall \tilde{v}\in \tilde{\mathcal{V}}.\nonumber
\end{align}

Typically, when setting the $\mathbb{T}$ as a single node, the problem degrades to a sample-wise certificate.

\section{Collective Certified Robustness}
In this section, we derive the collective certificate for the smoothed classifier with any message-passing GNNs as the base classifier. To ensure the clarity of the presentation, we begin by providing an overview of our approach.

\subsection{Overview}

The derivation of the robustness certificate relies on a \textit{worst-case} assumption: in the message-passing process, if a node receives even a single message from any injected node, its prediction will be altered. It is important to note that this assumption exaggerates the impact of the attack, thereby validating the guarantee of the defense. Accordingly, we define \textbf{message interference} for a node $v$ as the event  $E_v$ that the node $v$ receives \textit{at least} one message from injected nodes in message passing.

The achievement of collective certification then constitutes the following crucial steps. First, we derive an upper bound on the probability of the message interference event, denoted as $p(E_v)$ (Section.~\ref{subsec:Estimate_P_Ev}). Second, we establish the relation between the probability $p(E_v)$ and the prediction probability $p_{v,y}(G)$, which allows us to bound the change of $p_{v,y}(G)$ under the perturbation range $B_{\rho,\tau}(G)$ (Section.~\ref{subsec:bound_change}). Third, we derive the certifying condition for smoothed classifier $g$ based on the results from the previous sections (Section.~\ref{subsec:Certify_Con}). Finally, we formulate the collective certified robustness problem as an optimization problem (Section.~\ref{subsec:Collective}).

\subsection{Condition for Certified Robustness}

\subsubsection{Message interference event}
\label{subsec:Estimate_P_Ev}


We begin by introducing some necessary notations. We use $P_{\tilde{v}v}^k$ to represent all the existing paths from an injected node $\tilde{v}\in\tilde{\mathcal{V}}$ to a testing node $v$, where the length or distance of these paths is smaller than $k$. Each path $q$ in $P_{\tilde{v}v}^k$ consists of a series of linked edges. To simplify notation, we define $\phi_e(A)$ as an equivalent representation of $\phi_e(G)$, where $\phi_e(A)_{ij}=0$ if the edge $(i,j)$ does not exist after the sampling, and $\phi_e(A)_{ij}=1$ if the edge $(i,j)$ remains. Similarly, we represent $\phi_n(G)$ as $\phi_n(A)_i$, where $\phi_n(A)_{i}=0$ indicates the deletion of node $i$, and $\phi_n(A)_{i}=1$ denotes that the node remains unchanged.
%
Then, we formally define the event $E_v$ as:
\begin{align}
\label{eqn:E_v}
    \exists \tilde{v}\in \tilde{\mathcal{V}}:\ &(\exists q \in P_{\tilde{v}v}^k:  (\forall n_i\in q: \phi_n(A')_{n_i}=1) \\ 
    &\land (\forall (i,j)\in q: \phi_e(A')_{ij}=1)). \nonumber
\end{align}
That is at least one path from a malicious node $\tilde{v}$ to the testing node $v$ is effective (all edges and nodes are kept in the smoothing).
Below, our goal is to quantify the probability of $E_v$,  so that we can provide an estimation of the potential impact of injected nodes on the prediction probability.

However, directly estimating the event probability $p(E_v)$ is difficult because we need to find out all the possible paths $P_{\tilde{v}v}^k$ for each node. Similar to \cite{scholten2022randomized}, we have an upper bound for $p(E_v)\leq \overline{p(E_v)}$ by assuming the independence among the paths: 

\begin{lemma} 
\label{thm:P_e_general}
Let $A$ be the adjacency matrix of the perturbed graph with $\rho$ injected nodes, and the injected nodes are in the last $\rho$ rows and columns. With smoothing $p_n>0$ and $p_e>0$, we have the upper bound of $p(E_v)$: 
    \begin{align}
    \label{Eqn:P_e_General}
        &p(E_v)\leq \overline{p(E_v)}\\
        =&1- p_1^{||A_{n:(n+\rho),v}||_1}p_2^{||A_{n:(n+\rho),v}^2||_1}\cdots p_k^{||A_{n:(n+\rho),v}^k||_1},\nonumber
    \end{align}
where $p_i:=1-(\bar{p}_e\bar{p}_n)^{i},\, \forall i\in\{1,2,\cdots,k\}$, and adjacency matrix $A$ 
contains the injected nodes encoded in the $(n+1)^{th}$ to $(n+\rho)^{th}$ row, and $||\cdot||_1$ is $l_1$ norm.
\end{lemma}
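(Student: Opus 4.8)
The plan is to bound $p(E_v)$ by successively over-counting, turning the union over injected nodes and paths into a product that can be written in terms of powers of the adjacency matrix. First I would apply a union bound over the $\rho$ injected nodes: $p(E_v)\le \sum_{\tilde v\in\tilde{\mathcal V}} p(E_v^{\tilde v})$, where $E_v^{\tilde v}$ is the event that some path from $\tilde v$ to $v$ of length at most $k$ survives the smoothing. Actually, to get the clean product form in \eqref{Eqn:P_e_General} it is cleaner to work with complements: $p(E_v) = 1 - p(\overline{E_v})$, and I want a \emph{lower} bound on $p(\overline{E_v})$, i.e.\ the probability that \emph{no} injected-node path survives. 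Following the independence-assumption device of \cite{scholten2022randomized}, I would treat the survival events of the individual paths as if they were independent (this is the step that makes the bound an over-estimate of $p(E_v)$, hence sound), so that $p(\overline{E_v}) \ge \prod_{q} \bigl(1 - p(q \text{ survives})\bigr)$, the product ranging over all paths $q$ of length $\le k$ from any injected node to $v$.

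Next I would compute $p(q\text{ survives})$ for a single path $q$ of length $i$. Such a path has $i$ edges and $i+1$ nodes, but the endpoint $v$ and the injected node $\tilde v$ — we should be careful which nodes are subject to node-deletion smoothing; in the node-aware bi-smoothing each of the relevant intermediate (and endpoint) nodes is kept with probability $\bar p_n$ and each edge with probability $\bar p_e$, and these are independent. Counting $i$ edges and $i$ node-survival events along a length-$i$ path gives $p(q\text{ survives}) = (\bar p_e \bar p_n)^i$, so $1 - p(q\text{ survives}) = p_i$ with $p_i = 1-(\bar p_e\bar p_n)^i$ exactly as defined. Then I would group the paths by length: the number of length-$i$ walks from the injected nodes (rows $n{+}1,\dots,n{+}\rho$) to $v$ is exactly $\|A^i_{n:(n+\rho),\,v}\|_1 = \sum_{\tilde v} (A^i)_{\tilde v v}$, since entries of powers of a $0/1$ adjacency matrix count walks. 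Substituting, $p(\overline{E_v}) \ge \prod_{i=1}^{k} p_i^{\|A^i_{n:(n+\rho),v}\|_1}$, and taking the complement yields exactly \eqref{Eqn:P_e_General}.

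The main obstacle — and the place where rigor must be spent — is justifying that replacing the true joint survival probability of the collection of paths by the product over paths only \emph{decreases} $p(\overline{E_v})$ (equivalently only increases $p(E_v)$), so that the resulting expression is a genuine upper bound. The paths share edges and nodes, so their survival events are positively correlated, and one needs an FKG/association-type argument: the events $\{q \text{ survives}\}$ are all increasing functions of the independent Bernoulli retention variables $\{\phi_e(A')_{ij}\}$ and $\{\phi_n(A')_{n_i}\}$, hence their complements $\{q \text{ does not survive}\}$ are decreasing, and by the FKG inequality the probability of the intersection of decreasing events is at least the product of their probabilities — which is precisely $p(\overline{E_v}) \ge \prod_q (1-p(q\text{ survives}))$. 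A secondary subtlety is that "paths of length $< k$" versus "walks of length $\le k$": walks may revisit nodes, and counting walks rather than simple paths can only add more terms to the product (each factor $p_i \le 1$), so it still yields a valid upper bound on $p(E_v)$; I would note this monotonicity explicitly. With those two points pinned down, the remaining algebra is the routine bookkeeping sketched above.
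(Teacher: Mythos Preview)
Your proposal is correct and follows essentially the same route as the paper: work with the complement $\overline{E_v}$, treat path-survival events as independent to get a product over paths, compute the single-path survival probability as $(\bar p_e\bar p_n)^{|q|}$, and then group paths by length using the walk-counting interpretation of $A^i$. If anything you are more careful than the paper, which simply invokes \cite{scholten2022randomized} for the independence step; your FKG/positive-association argument and your explicit remark that counting walks rather than simple paths only adds factors $\le 1$ (hence preserves the bound) supply rigor that the paper's own proof leaves implicit.
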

\vspace{-8pt}
\begin{proof}
(Sketch) Let $p(\bar{E}_v^{\tilde{v}})$ denote the probability that all paths are intercepted from an injected node $\tilde{v}$ to node $v$ in the case that of considering each path independently. We have $p(\bar{E}_v^{\tilde{v}})=\prod_{q\in P_{\tilde{v}v}^k}(1-(\bar{p}_e\bar{p}_n)^{|q|})$, where $\bar{p}_e:=1-p_e$, $\bar{p}_n:=1-p_n$ and $|q|\in \{1,\cdots,k\}$ represent the length of the path $q\in P_{\tilde{v}v}^k$ from $\tilde{v}$ to $v$. 
Furthermore, $||A_{n:(n+\rho),v}^k||_1$ quantifies the number of paths with a length of $k$ originating from any malicious node and reaching node $v$. Finally, by considering multiple injected nodes, we have $\overline{p(E_v)} = 1-\prod_{\tilde{v}\in\tilde{\mathcal{V}}}p(\bar{E}_v^{\tilde{v}})$. See Appendix.~\ref{Sec:Appendix_A} for complete proof.
\end{proof}

\subsubsection{Bounding the change of prediction}
\label{subsec:bound_change}

Next, we first provide Lemma. \ref{thm:Ev_general_case} to demonstrate that the occurrence of the complement event of $E_v$, denoted as $\bar{E}_v$, is the condition for the consistent prediction of base classifier $f$. Then, we prove that the change of prediction probability for the smoothed classifier $g$ is bounded by $p(E_v)$:

\begin{lemma}
\label{thm:Ev_general_case}
    Given a testing node $v\in G$, perturbation range $B_{\rho,\tau}(G)$, $p_n>0$ and $p_e>0$, we have $f_v(\phi(G))=f_v(\phi(G')),\, \forall G' \in B_{\rho,\tau}(G)$ if event $\bar{E}_v$ occurs:
    \begin{align}
    \label{eqn:bar_E_v}
        \forall \tilde{v}\in \tilde{\mathcal{V}}:\ &(\forall q \in P_{\tilde{v}v}^k:  (\exists n_i\in q: \phi_n(A')_{n_i}=0) \\ 
        &\lor (\exists (i,j)\in q: \phi_e(A')_{ij}=0)). \nonumber
    \end{align}
\end{lemma}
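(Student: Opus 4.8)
The plan is to argue that when $\bar E_v$ occurs, the randomized graph $\phi(G')$ — restricted to the $k$-hop receptive field of $v$ — coincides with the randomized graph $\phi(G)$ obtained by applying the \emph{same} noise realization to the clean graph. Since a $k$-layer message-passing GNN computes $f_v$ only from the subgraph induced by the $k$-hop in-neighbourhood of $v$, equality of these receptive fields immediately forces $f_v(\phi(G)) = f_v(\phi(G'))$. I would first fix an arbitrary realization $\omega$ of the smoothing noise, i.e.\ a choice of which edges survive $\phi_e$ and which nodes survive $\phi_n$, and couple the two smoothing processes on $G$ and $G'$ so that every edge/node present in both graphs receives the same survival decision (this is legitimate because $\mathcal E \subseteq \mathcal E'$, $\mathcal V \subseteq \mathcal V'$, so the noise on the common part is identically distributed).

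The core claim is then: conditioned on $\bar E_v$, no injected node $\tilde v \in \tilde{\mathcal V}$ is reachable from $v$ by a surviving path of length $\le k$ in $\phi(G')$. This is essentially the negation spelled out in \eqref{eqn:bar_E_v}: for every $\tilde v$ and every path $q \in P_{\tilde v v}^k$, either some node on $q$ is deleted by $\phi_n$ or some edge on $q$ is deleted by $\phi_e$, so $q$ is not effective. Hence in $\phi(G')$ the connected portion of the $k$-hop ball around $v$ contains none of the injected nodes $\tilde{\mathcal V}$, and therefore none of the injected edges $\tilde{\mathcal E}$ either (an injected edge is incident to some $\tilde v$, and if it were in the receptive field it would put $\tilde v$ within distance $k-1 \le k$ of $v$ via a surviving path). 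I would make this precise by an induction on the layer index $\ell = 1,\dots,k$: the embedding of $v$ after $\ell$ rounds in $\phi(G')$ depends only on nodes within surviving distance $\ell$, and by the above none of these are injected, so they all belong to $\mathcal V$ and carry the same features and the same surviving adjacency as in $\phi(G)$.

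Putting the two halves together: for the fixed noise realization $\omega$ on which $\bar E_v$ holds, the receptive subgraph of $v$ in $\phi(G')$ equals that in $\phi(G)$, so $f_v(\phi(G')) = f_v(\phi(G))$ for that realization; since $\bar E_v$ is exactly the event that this holds, the statement follows for all $G' \in B_{\rho,\tau}(G)$. I expect the main obstacle to be stating the coupling and the receptive-field argument with enough care to handle the fact that $\phi_n$ deletes a node \emph{together with its incident edges} — one must check that an injected node removed by $\phi_n$ cannot re-enter the picture through some edge that ``survives'' in the edge-smoothing bookkeeping, which is why the effectiveness predicate in \eqref{eqn:E_v} conjoins the node-survival and edge-survival conditions along the \emph{whole} path. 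A secondary subtlety is that $P^k_{\tilde v v}$ must be understood as paths in $G'$ (i.e.\ possibly using injected edges); since $\bar E_v$ quantifies over all of them, the argument is unaffected, but this should be noted explicitly so the reduction to ``no injected node in the receptive field'' is airtight.
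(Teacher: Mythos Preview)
Your proposal is correct and follows essentially the same approach as the paper: the paper's proof is a two-sentence sketch (``every path is intercepted, so the message from each injected node never reaches $v$ and the prediction is unchanged''), and your argument is simply the rigorous version of that sketch, with the coupling of the noise on $G$ and $G'$ and the receptive-field/layer-induction made explicit. The subtleties you flag (coupling on the common part, paths taken in $G'$, node deletion removing incident edges) are exactly the details the paper leaves implicit.
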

\vspace{-5pt}
\begin{proof} 
    For each path $q\in P_{\tilde{v}v}^k$, the message from the injected node $\tilde{v}$ to the target node $v$ is intercepted if at least one of the edges or nodes along the path is deleted. Consequently, if all the paths are intercepted as a result of the smoothing randomization $\phi(G')$, the prediction for the target node $v$ remains unchanged. 
\end{proof}
\vspace{-5pt}
Now, we can establish a bound on the change in prediction probability of the smoothed classifier $g$, which serves as a crucial step for deriving the certifying condition.

\begin{thm}
\label{thm:prob_change}
Given a base GNN classifier $f$ trained on a graph $G$ and its smoothed classifier $g$ defined in \eqref{eqn:smooth_g}, a testing node $v \in G$ and a perturbation range $B_{\rho,\tau}(G)$, let $E_v$ be the event defined in Eq.~\eqref{eqn:E_v}. The absolute change in predicted probability $|p_{v,y}(G)-p_{v,y}(G')|$ for all perturbed graphs $G' \in B_{\rho,\tau}(G)$ is bounded by the probability of the event $E_v$: $|p_{v,y}(G)-p_{v,y}(G')|\leq p(E_v)$.
\end{thm}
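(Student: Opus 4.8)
The plan is to prove the bound by a coupling argument between the two smoothing distributions $\phi(G)$ and $\phi(G')$, which reduces the statement to Lemma~\ref{thm:Ev_general_case}. First I would construct an explicit coupling of $\phi(G)$ and $\phi(G')$ on one probability space. Since $G'$ is obtained from $G$ by adjoining the injected nodes $\tilde{\mathcal V}$ and the injected edges $\tilde{\mathcal E}$ while leaving the original nodes, edges, and features untouched, I draw a single collection of independent Bernoulli variables: one for each original node (kept with probability $1-p_n$), one for each original edge (kept with probability $1-p_e$), and additional independent ones for each injected node and injected edge. Reading off only the original part gives a sample of $\phi(G)$; reading off everything gives a sample of $\phi(G')$. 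Because the node/edge deletions are mutually independent and the injected structure is disjoint from the original one, the marginal laws are exactly the bi-smoothing distributions of $G$ and of $G'$, so this is a legitimate coupling. Every random object below — in particular the event $E_v$ of \eqref{eqn:E_v} and its complement $\bar E_v$ of \eqref{eqn:bar_E_v} — now lives on this common space, and on it $\phi_n(A')_{n_i}=\phi_n(A)_{n_i}$ and $\phi_e(A')_{ij}=\phi_e(A)_{ij}$ for every original node $n_i$ and every original edge $(i,j)$.

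Second, I would invoke Lemma~\ref{thm:Ev_general_case}: on the event $\bar E_v$, no message from any injected node reaches $v$ within $k$ hops, so the $k$-layer message-passing computation at $v$ only touches original nodes and edges, which are identical in $\phi(G)$ and $\phi(G')$ under the coupling. Hence $f_v(\phi(G))=f_v(\phi(G'))$ holds pointwise on $\bar E_v$. Writing $C_y=\{f_v(\phi(G))=y\}$ and $C_y'=\{f_v(\phi(G'))=y\}$, this gives $C_y\cap\bar E_v=C_y'\cap\bar E_v$. A standard decomposition over $E_v$ and $\bar E_v$ then yields
\[
p_{v,y}(G')-p_{v,y}(G)=\mathbb P(C_y'\cap E_v)-\mathbb P(C_y\cap E_v),
\]
and since each term on the right lies in $[0,\mathbb P(E_v)]$, we conclude $|p_{v,y}(G)-p_{v,y}(G')|\le \mathbb P(E_v)=p(E_v)$, which is the claim. (Subsequently combining this with Lemma~\ref{thm:P_e_general} replaces $p(E_v)$ by the computable upper bound $\overline{p(E_v)}$.)

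The main obstacle I anticipate is making the coupling step airtight: verifying that restricting the randomized perturbed graph to its original vertices genuinely reproduces the law of $\phi(G)$ (this is exactly where independence of the deletions and disjointness of the injected part are used), and pinning down precisely that the message-passing output at $v$ is a function only of the sub-sampled receptive field, so that on $\bar E_v$ we get an almost-sure equality of the \emph{outputs} rather than a mere equality in distribution. Once that is in place, the remaining $E_v/\bar E_v$ case split and the bound $\mathbb P(C_y'\cap E_v),\mathbb P(C_y\cap E_v)\in[0,\mathbb P(E_v)]$ are routine.
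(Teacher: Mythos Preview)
Your proposal is correct and follows essentially the same route as the paper: decompose over $E_v$ and $\bar E_v$, use Lemma~\ref{thm:Ev_general_case} to cancel the $\bar E_v$ contributions, and bound each remaining $E_v$-term by $p(E_v)$. The only difference is that you make the coupling between $\phi(G)$ and $\phi(G')$ explicit, whereas the paper handles it informally (remarking that one can ``regard $f_v(\phi(G))$ as taking only the intersected part of $\phi(G')$ as input''); your version is the more rigorous rendering of exactly that step.
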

\vspace{-8pt}
\begin{proof}
    (Sketch) $p_{v,y}(G)-p_{v,y}(G')\leq \mathbb{P}(f_v(\phi(G))=y\land E_v)=p(E_v)\cdot \mathbb{P}(f_v(\phi(G))=y|E_v)\leq p(E_v)$. See Appendix.~\ref{Sec:Appendix_A} for complete proof.
\end{proof}
\vspace{-5pt}

\subsubsection{Certifying Condition}
\label{subsec:Certify_Con}
With the upper bound of the probability change $p_{v,y}(G)$ provided in Theorem.~\ref{thm:prob_change} and upper bound of $p(E_v)$ provided in Lemma.~\ref{thm:P_e_general}, we can derive the certifying condition for smoothed classifier $g$ under a given perturbation range:

\begin{corollary}
\label{thm:certify_condition}
    Given a base GNN classifier $f$ trained on a graph $G$ and its smoothed classifier $g$, a testing node $v \in G$ and a perturbation range $B_{\rho,\tau}(G)$, let $E_v$ be the event defined in Eq.~\eqref{eqn:E_v}. We have $g_v(G')=g_v(G)$ for all perturbed graphs $G' \in B_{\rho,\tau}(G)$ if:
    \begin{equation}
    \label{eqn:certify_condition}
        \overline{p(E_v)}< [p_{v,y^*}(G)-max_{y\neq y^*}p_{v,y}(G)]/2,
    \end{equation}
    where $y^*\in \mathcal{Y}$ is the predicted class of $g_v(G)$.
\end{corollary}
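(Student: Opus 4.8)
The plan is to combine Theorem~\ref{thm:prob_change} with Lemma~\ref{thm:P_e_general} via an elementary margin argument. By the definition of $g$ in \eqref{eqn:smooth_g}, $g_v(G)=y^*$ is equivalent to $p_{v,y^*}(G) > p_{v,y}(G)$ for every $y \neq y^*$ (ties in the $\argmax$ assumed broken consistently), so the relevant quantity is the clean margin $m := p_{v,y^*}(G) - \max_{y \neq y^*} p_{v,y}(G)$, and to conclude $g_v(G')=y^*$ it suffices to show $p_{v,y^*}(G') > p_{v,y}(G')$ for all $y \neq y^*$.

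First I would fix an arbitrary $G' \in B_{\rho,\tau}(G)$ and apply Theorem~\ref{thm:prob_change} class-wise: for the winning class $p_{v,y^*}(G') \ge p_{v,y^*}(G) - p(E_v)$, and for any runner-up class $y \neq y^*$ we get $p_{v,y}(G') \le p_{v,y}(G) + p(E_v)$. Subtracting the second from the first and then maximizing the right-hand side over $y \neq y^*$ yields $p_{v,y^*}(G') - \max_{y \neq y^*} p_{v,y}(G') \ge m - 2p(E_v)$. Next I would invoke Lemma~\ref{thm:P_e_general} to replace $p(E_v)$ by its upper bound $\overline{p(E_v)}$, giving $p_{v,y^*}(G') - \max_{y\neq y^*} p_{v,y}(G') \ge m - 2\overline{p(E_v)}$. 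Under the hypothesis $\overline{p(E_v)} < m/2$ the right-hand side is strictly positive, so $y^*$ is still the strict $\argmax$ of $p_{v,\cdot}(G')$ and hence $g_v(G') = g_v(G)$; since $G'$ was arbitrary in $B_{\rho,\tau}(G)$, the corollary follows.

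The one subtlety to be careful about is that the bound $\overline{p(E_v)}$ of Lemma~\ref{thm:P_e_general} is written through the path counts $||A_{n:(n+\rho),v}^i||_1$ of a \emph{particular} perturbed adjacency matrix, whereas the corollary quantifies over all $G' \in B_{\rho,\tau}(G)$. To make the ``for all $G'$'' statement rigorous, $\overline{p(E_v)}$ should be read as the supremum of that expression over all admissible adjacency matrices (equivalently, the path counts of the worst allocation of the $\rho$ injected nodes with at most $\tau$ edges each); since each $p_i=1-(\bar p_e\bar p_n)^i \in (0,1)$ is decreasing in the exponent, this supremum is attained, and it is precisely the quantity that the optimization in Section~\ref{subsec:Collective} will control. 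Beyond this bookkeeping, no single step is genuinely hard: the two-sided, class-wise use of Theorem~\ref{thm:prob_change} and the monotone substitution from Lemma~\ref{thm:P_e_general} are routine once ordered correctly, and the only care needed is the consistent tie-breaking in $\argmax$, which the strict inequality in the hypothesis makes harmless.
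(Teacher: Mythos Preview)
Your argument is correct and follows the same route as the paper's proof: apply Theorem~\ref{thm:prob_change} to bound $p_{v,y^*}(G')$ from below and each $p_{v,y}(G')$ from above by $\pm p(E_v)$, then rearrange to the margin condition and replace $p(E_v)$ by $\overline{p(E_v)}$. Your closing remark about $\overline{p(E_v)}$ depending on the particular $G'$ and being handled via the worst-case optimization in Section~\ref{subsec:Collective} is exactly how the paper proceeds as well.
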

\vspace{-8pt}
\begin{proof}
    With Theorem.~\ref{thm:prob_change}, we have $g_v(G')=g_v(G)$ if $p_{v,y^*}(G)-\overline{p(E_v)}> max_{y\neq y^*}p_{v,y}(G)+\overline{p(E_v)}$, which is equivalent to $\overline{p(E_v)}< [p_{v,y^*}(G)-max_{y\neq y^*}p_{v,y}(G)]/2$.
\end{proof}
\vspace{-5pt}
Nevertheless, quantifying $\overline{p(E_v)}$ is still challenging due to the unknown paths $P_{\tilde{v}v}^k$ or the perturbed adjacency matrix. To tackle the challenge, we introduce the following collective certifying framework that models the problem of certifying node injection perturbation as an optimization problem. More importantly, 
we can certify a set of nodes at the same time to enhance the certifying performance. 
\subsection{Collective Certification as Optimization}
\label{subsec:Collective}


With Corollary.~\ref{thm:certify_condition}, we know that node $v$ is not certifiably robust if $\overline{p(E_v)}\geq [p_{v,y^*}(G)-max_{y\neq y^*}p_{v,y}(G)]/2$. Under a limited attack budget, the worst-case attacker can lead to a maximum number of non-robust nodes among target nodes in $\mathbb{T}$, which can be formulated as follows:
\begin{align}
\label{opt:collective}
\max_{G'\in B_{\rho,\tau}(G)}\quad  &M=\sum_{v\in \mathbb{T}} \mathbb{I}\{\overline{p(E_v)}\geq c_v/2\}, \\
\text{s.t.} \quad
&|\tilde{\mathcal{V}}|\leq \rho,\, \delta(\tilde{v})\leq \tau, \,\forall \tilde{v}\in \tilde{\mathcal{V}},\nonumber
\end{align}
where $c_v:=p_{v,y^*}(G)-max_{y\neq y^*}p_{v,y}(G)$, is the classification gap of smoothed classifier.
To obtain the certifiably robust node number among all testing nodes, the optimal objective value $M^*$ of \eqref{opt:collective} can serve as an upper bound for non-robust nodes, and hence the remaining $|\mathbb{T}|-M^*$ nodes are certified robust. Plugging in $\overline{p(E_v)}$ with \eqref{Eqn:P_e_General}, and taking the logarithm of the $\overline{p(E_v)}$, we transformed the problem~\eqref{opt:collective} to a binary integer \textit{polynomial-constrained programming} (We put the problem and formulation details in Appendix.~\ref{Sec:Appendix_C}). 

Typically, for two-layer GNNs ($k=2$), we formulate the problem into a binary integer quadratic constrained linear programming problem (BQCLP). Let $A_0$ be the original adjacency matrix of the existing $n$ nodes in the graph $G$, and $A_1$ denote the adjacency matrix from injected $\rho$ malicious nodes to the existing nodes, and $A_2$ be the adjacency matrix representing the internal connection between the malicious nodes. 
Then the problem \eqref{opt:collective2} becomes the BQCLP problem as follows (See Appendix.~\ref{Sec:Appendix_C} for detailed formulation):
\begin{align}
\label{opt:collective-BQP}
\max_{A_1,A_2,\mathbf{m}}&\quad  M=\mathbf{t}^\top \mathbf{m},\\ 
\text{s.t.} \quad 
& \tilde{p_1}A_1^\top\mathbf{1}_{\rho}  + \tilde{p_2}(A_1A_0+A_2A_1)^\top\mathbf{1}_{\rho}\leq \mathbf{C}\circ \mathbf{m},\nonumber\\
&A_1\mathbf{1}_n +A_2 \mathbf{1}_{\rho} \leq \tau,\,A_2^{\top}=A_2, \nonumber\\
& A_1\in\{0,1\}^{\rho\times n},\,A_2\in\{0,1\}^{\rho\times \rho},\,\mathbf{m}\in\{0,1\}^{n},\nonumber
\end{align}
where $\mathbf{t}$ is a constant zero-one vector that encodes the position of the target node set $\mathbb{T}$, $\mathbf{m}$ is a vector that indicates whether the nodes are non-robust, $\tilde{p_1}=log(p_1)$ and $\tilde{p_2}=log(p_2)$ are two negative constants, $\mathbf{C}\in \mathbb{R}^n$ is a vector with negative constant elements $log(1-\frac{c_v}{2})$, $\mathbf{1}_n$ denotes all-ones vector with length $n$, $\top$ represents matrix transposition, and $\circ$ denotes element-wise multiplication.

\section{Effective Optimization Methods}
\label{Sec:Linearization}
The BQCLP problem~\eqref{opt:collective-BQP} is non-convex and known to be NP-hard. In this section, we introduce \textit{two} effective methods to relax problem~\eqref{opt:collective-BQP} to a Linear Programming (LP) to solve it efficiently. The first method (termed \textbf{Collective-LP1}) relies on standard techniques to avoid quadratic terms; the second method (termed \textbf{Collective-LP2}) employs a novel customized reformulation that can significantly improve the solution quality and computational efficiency.

\subsection{Standard Linear Relaxation (Collective-LP1)}
To solve problem~\eqref{opt:collective-BQP} efficiently, one common solution is to replace the quadratic terms in the constraint with linear terms by introducing extra slack variables.
We adopt the standard technique~\cite{wei2020tutorials} to address the quadratic terms in $A_2A_1$.
Specifically, let $A_{2(ij)}$ denotes the element of $i^{th}$ row and $j^{th}$ column in matrix $A_2$ and $A_{1(jv)}$ denotes the element in matrix $A_1$. For each quadratic term $A_{2(ij)}A_{1(jv)}$ ($\forall i\in \{1,\cdots,\rho\},\forall j \in \{1,\cdots,\rho\},\forall v \in \{1,\cdots,n\}$) in $A_2A_1$, we can equivalently reformulate $Q_{v(ij)}:=A_{2(ij)}A_{1(jv)}$ with corresponding constraints: $Q_{v(ij)}\in \mathbb{B}$, $Q_{v(ij)}\leq A_{2(ij)}$, $Q_{v(ij)}\leq A_{1(jv)}$, and $A_{2(ij)}+A_{1(jv)}-Q_{v(ij)}\leq 1$. 
We further relax all the binary constraints to the box constraints $[0,1]$, leading to an LP as follows: 
\begin{align}
\label{opt:collective-linear1}
\max_{A_1,A_2,m,\atop Q_1,Q_2,\cdots,Q_n}&\quad  M=\mathbf{t}^\top \mathbf{m},\\ 
\text{s.t.} \quad 
& \tilde{p_1}A_1^\top\mathbf{1}_{\rho} + \tilde{p_2} A_0^\top A_1^\top\mathbf{1}_{\rho} + \tilde{p_2}O \leq \mathbf{C}\circ \mathbf{m},\nonumber\\
&A_1\mathbf{1}_n +A_2 \mathbf{1}_{\rho} \leq \tau,\,A_2^{\top}=A_2, \nonumber\\
&Q_v=(Q_{v(ij)})_{\rho \times \rho},\, v \in \{1,2,\cdots,n\}, \nonumber\\
&O=[\mathbf{1}_{\rho}^\top Q_1\mathbf{1}_{\rho},\mathbf{1}_{\rho}^\top Q_2\mathbf{1}_{\rho},\cdots, \mathbf{1}_{\rho}^\top Q_n\mathbf{1}_{\rho}]^\top, \nonumber\\
&Q_v \leq \mathbf{1}_{\rho} {[A_{1(:,v)}]}^\top,\, Q_v\leq A_2,\,Q_v\in [0,1]^{\rho\times \rho},\nonumber\\
&\mathbf{1}_{\rho} {[A_{1(:,v)}]}^\top+A_2-Q_v \leq 1,\nonumber\\
& A_1\in[0,1]^{\rho\times n},\,A_2\in[0,1]^{\rho\times \rho},\,\mathbf{m}\in[0,1]^{n}.\nonumber
\end{align}
The more detailed formulation of problem \eqref{opt:collective-linear1} is supplied in Appendix.~\ref{Sec:Appendix_C}. This transformation makes our collective robustness problem solvable in polynomial time. 

\paragraph{Validity of relaxation for certification.}
It is important to note that the relaxed LP problem always has a larger feasible region than the original BQCLP problem. As a result, the optimal $\bar{M}^*$ (i.e., the maximum number of non-robust nodes) of the relaxed problem is always greater than the original problem. That is, the number of robust nodes ($|\mathbb{T}|-\bar{M}^*$) certified by the relaxed problem is always smaller or equal to that obtained from the original problem, such that the relaxation always yields sound verification.

Nevertheless, this technique results in introducing $O(\rho^2 |\mathbb{T}|)$ (extra) variables among the matrix $O$. To improve efficiency, we next design a more efficient reformulation that only requires $O(\rho |\mathbb{T}|)$ extra variables.

\subsection{Customized Linear Relaxation (Collective-LP2)}
To reduce the number of the extra variables, we notice that there is a vector in the quadratic term $A_1^\top A_2^\top \mathbf{1}_{\rho}$, and we can first combine the $A_2^\top \mathbf{1}_{\rho}$ to reduce the dimension.  We define a vector variable $\mathbf{z}:=A_2^\top \mathbf{1}_{\rho}$ to replace the term $A_2^\top \mathbf{1}_{\rho}$ in the problem~\eqref{opt:collective-BQP}. Then we can reformulate it as:
\begin{align}
\label{opt:collective-BQP-z}
\max_{A_1,z,\mathbf{m}}\quad & M=\mathbf{t}^\top \mathbf{m},\\ 
\text{s.t.} \quad 
& \tilde{p_1}A_1^\top\mathbf{1}_{\rho}  + \tilde{p_2}A_0^\top A_1^\top \mathbf{1}_{\rho}+\tilde{p_2}A_1^\top \mathbf{z}\leq \mathbf{C}\circ \mathbf{m},\nonumber\\
&A_1\mathbf{1}_n + \mathbf{z} \leq \tau, \,A_1\in\{0,1\}^{\rho\times n},\nonumber\\
& \mathbf{z}\in \{0,1,\cdots,\min(\rho,\tau)\}^{\rho\times 1},\,\mathbf{m}\in\{0,1\}^{n}.\nonumber
\end{align}
To linearize the problem, we need to deal with the quadratic term $A_1^\top \mathbf{z}$. If a binary variable $x\in \mathbb{B}$, and a continuous variable $z\in [0,u]$, then $w:=xz$ is equivalent to~\cite{wei2020tutorials}: $w\leq ux, w\leq z, ux+z-w\leq u, 0\leq w$. To apply it, we first relax the $\mathbf{z}$ to $[0,min(\tau,\rho)]$. Assuming that $\tau \leq \rho$, for each quadratic term $A^\top_{1(ij)}z_j$ ($\forall i\in \{1,\cdots,n\},\forall j \in \{1,\cdots,\rho\}$) in $A_1^\top \mathbf{z}$, we create a substitution variable $Q_{(ij)}=A^\top_{1(ij)}z_j$ with corresponding constraints: $0 \leq Q_{(ij)}$, $Q_{(ij)}\leq \tau A^\top_{1(ij)}$, $Q_{(ij)}\leq z_j$, and $\tau A^\top_{1(ij)} + z_j -Q_{(ij)} \leq \tau$. 
We further relax all the binary constraints to $[0,1]$ interval constraints. Then the problem~\eqref{opt:collective-BQP} can be relaxed to an LP as follows:
\begin{align}
\label{opt:collective-linear2}
\max_{A_1,m,\atop Q \in \mathbb{R}^{n \times \rho}}\quad & M=\mathbf{t}^\top \mathbf{m},\\ 
\text{s.t.} \quad 
& \tilde{p_1}A_1^\top\mathbf{1}_{\rho} + \tilde{p_2} A_0^\top A_1^\top\mathbf{1}_{\rho} + \tilde{p_2} Q \mathbf{1}_{\rho} \leq \mathbf{C}\circ \mathbf{m},\nonumber\\
&A_1\mathbf{1}_n + \mathbf{z} \leq \tau,\,A_1\in[0,1]^{\rho\times n}, \nonumber\\
&Q \leq \tau A_{1}^\top,\,Q\leq \mathbf{1}_n \mathbf{z}^\top,\nonumber\\
&\tau A^\top_{1}+\mathbf{1}_n \mathbf{z}^\top-Q \leq \tau,\nonumber\\
&Q \in[0,\tau]^{n\times \rho},\,\mathbf{z}\in  [0,\tau]^{\rho\times 1},\,\mathbf{m}\in[0,1]^{n}.\nonumber
\end{align}
We put the detailed formulation in Appendix.~\ref{Sec:Appendix_C}. Next, we analyze the complexity of problem~\eqref{opt:collective-linear1} and \eqref{opt:collective-linear2}.
\subsection{Comparison of Computational Complexity}

For problem~\eqref{opt:collective-linear1}, in the first constraints, the rows corresponding to the nodes that do not belong to the target node set $\mathbb{T}$ will not affect the objective $M$. Although we define $n$ matrix $Q_v$ for the sake of convenience, only $|\mathbb{T}|$ of them are actually effective.  For the node with $t_i=0$, the value $m_i$ will not affect the objective $M$, such that we can always set $m_i=0$, and the first constraint always holds. Hence, there are $O(3\rho^2 |\mathbb{T}|+\rho^2+\rho+|\mathbb{T}|)$ effective linear constraints, and $O(\rho^2 |\mathbb{T}|+\rho^2+\rho n +|\mathbb{T}|)$ effective variables. 

For problem~\eqref{opt:collective-linear2}, similar to~\eqref{opt:collective-linear1}, only $|\mathbb{T}|$ rows of $Q$ are actually effective.  
There are $O(3\rho |\mathbb{T}|+\rho +|\mathbb{T}|)$ effective linear constraints, and $O(\rho n+\rho |\mathbb{T}|+|\mathbb{T}|)$ effective variables. 
Our well-designed formulation makes the collective problem scalable regarding the number of injected nodes $\rho$ or the target node number $|\mathbb{T}|$. In the next section, we show that this improved LP formulation is both more efficient and effective by experimental evaluation.

\section{Experimental Evaluation}


In this section, we conduct a comprehensive evaluation of our proposed collective certificate. Given the absence of other collective baselines for graph injection attacks (GIA), we compare our collective certification \textbf{Collective-LP1} and \textbf{Collective-LP2}, with the existing \textbf{Sample-wise} approach~\cite{lai2023node}. We present a detailed analysis of the experimental results, highlighting the strengths and advantages of our collective certification methods.

\subsection{Experimental Setup}
\label{sec:experiments}
\paragraph{Datasets and Base Model.} We follow the literature~\cite{schuchardt2020collective,lai2023node} on certified robustness and evaluate our methods on two graph datasets: Cora-ML~\cite{bojchevski2017deep} and Citeseer~\cite{sen2008collective}. The Cora-ML dataset contains $2,810$ nodes, $7,981$ edges, $7$ classes, and the Citeseer contains $2,110$ nodes, $3,668$ edges, $6$ classes. We employ two representative message-passing GNNs, Graph Convolution Network (GCN)~\cite{kipf2016semi} and Graph Attention Network (GAT)~\cite{velivckovic2017graph}, with a hidden layer size of $64$ as our base classifiers. 
We use $50$ nodes per class for training and validation respectively, while the remaining as testing nodes. We also train the base model with random noise augmentation following ~\cite{lai2023node}.  

\paragraph{Threat Models and Certificate.} We set the degree constraint per injected node as the average degree of existing nodes, which are $6=\lceil 5.68 \rceil$ and $4=\lceil 3.48 \rceil$ respectively on Cora-ML and Citeseer datasets. We evaluate our proposed collective certificate with various amounts of injected nodes $\rho \in\{20,50,80,100,120,140,160\}$. Grid search is employed to find the suitable smoothing parameters $p_e$ and $p_n$ from $0.5$ to $0.9$ respectively. We exclude those parameters that lead to poor accuracy that are worse than the Multilayer Perceptron (MLP) model which does not depend on graph structure. Following~\cite{bojchevski2020efficient,lai2023node}, we employ Monte Carlo to estimate the smoothed classifier with a sample size of $N=100,000$. We apply the Clopper-Pearson confidence interval with Bonferroni correction to obtain the lower bound of $p_A$ and upper bound of $p_B$. We set the confidence level as $\alpha=0.01$. Due to the overwhelming computation cost of the original collective certifying problem known as NP-hard, we solve our proposed relaxed LP problems by default. All our collective certifying problem is solved using MOSEK~\cite{mosek} through the CVXPY~\cite{diamond2016cvxpy} interface.

\paragraph{Evaluation Metrics.}Among the testing nodes that are correctly classified, we randomly select $100$ nodes as the target node set $\mathbb{T}$. We report the \textbf{\textit{certified ratio}} on the target nodes set, which is the ratio of nodes that are certifiably robust under a given threat model. We repeat $5$ times with different random selections and report the average results. Additionally, we evaluate the global attack scenario in which the $\mathbb{T}$ is all the nodes in the graph in Appendix.~\ref{Sec:Appendix_global}. 

\subsection{Effectiveness of Collective Certified Robustness}
In this section, we aim to verify the effectiveness of our proposed collective approach in enhancing the certified robustness performance. 

\subsubsection{Comparing Collective with Sample-wise.} 
\begin{figure}[hbt]
\centering
    \subfigure[Certified Ratio (GCN)]{
    \includegraphics[width=0.19\textwidth,height=2.8cm]{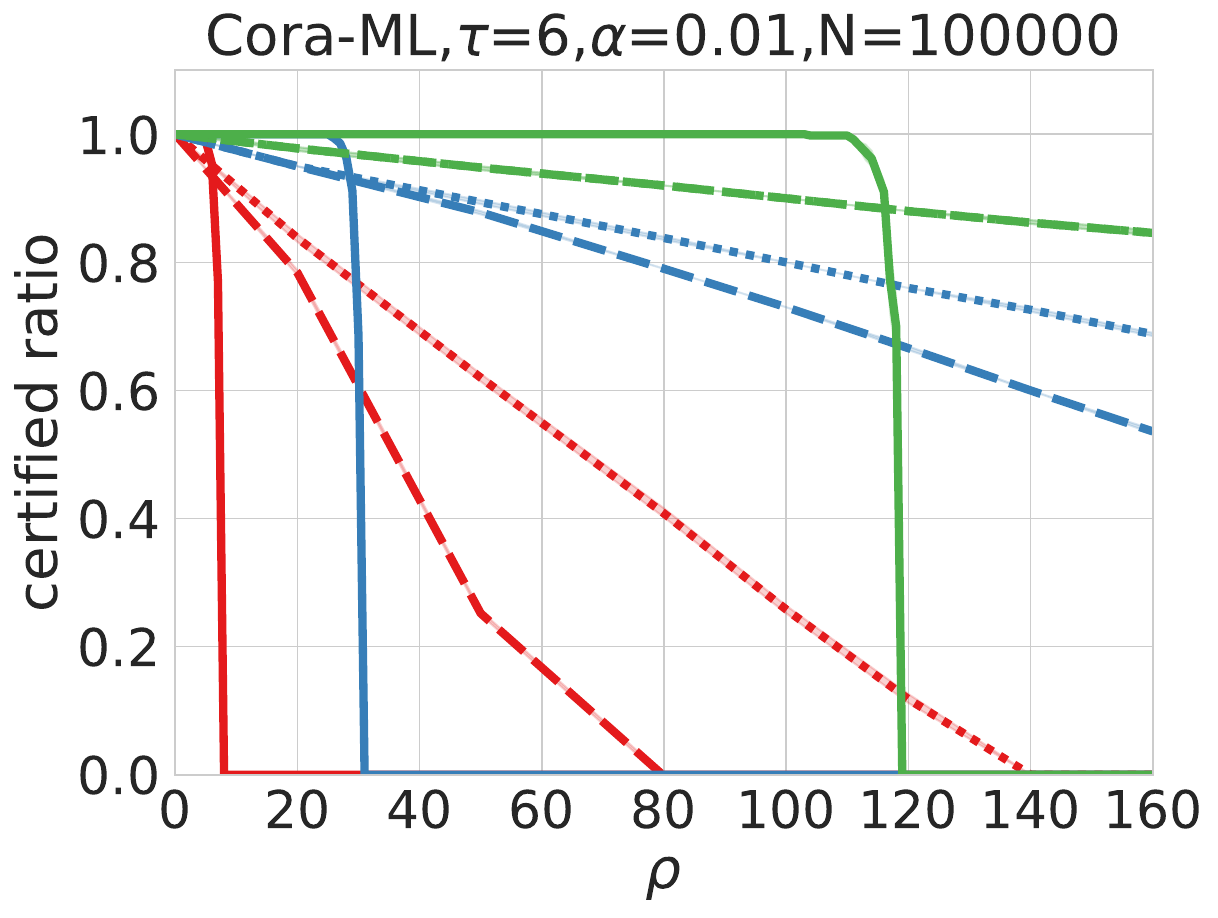}
    }
    \subfigure[Certified Ratio (GCN)]{
    \includegraphics[width=0.255\textwidth,height=2.8cm]{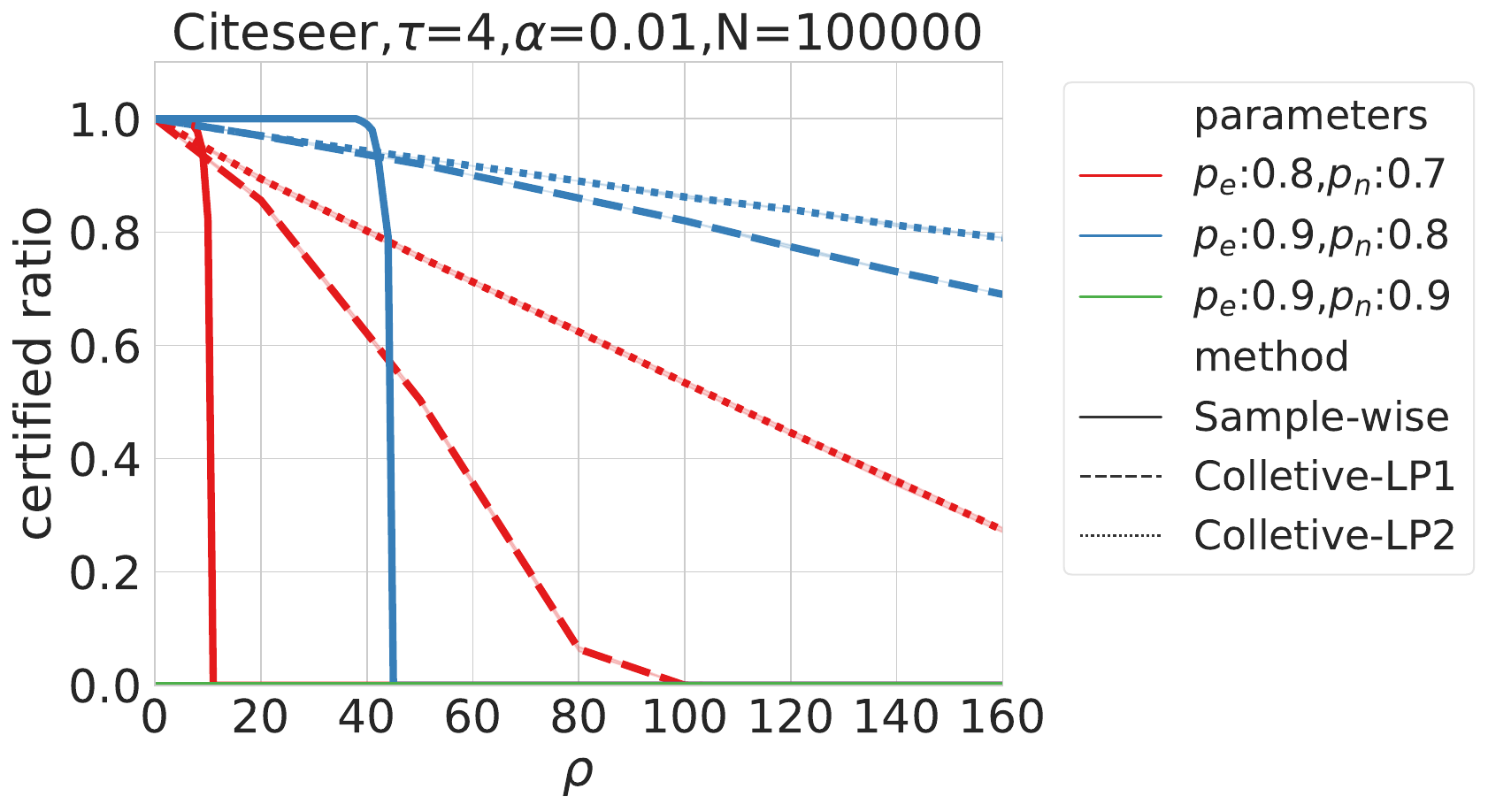}
    }
    \subfigure[Certified Ratio (GAT)]{
    \includegraphics[width=0.19\textwidth,height=2.8cm]{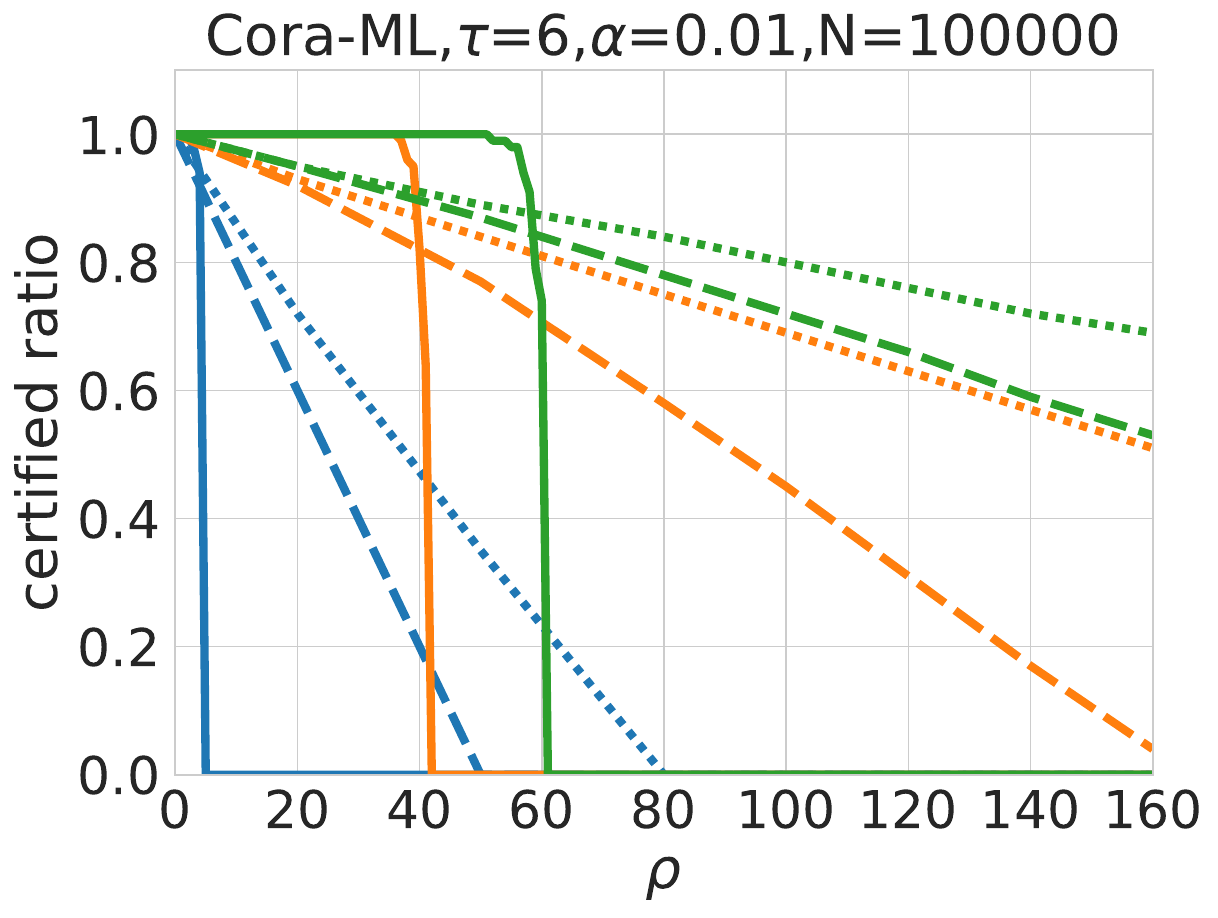}
    }
    \subfigure[Certified Ratio (GAT)]{
    \includegraphics[width=0.255\textwidth,height=2.8cm]{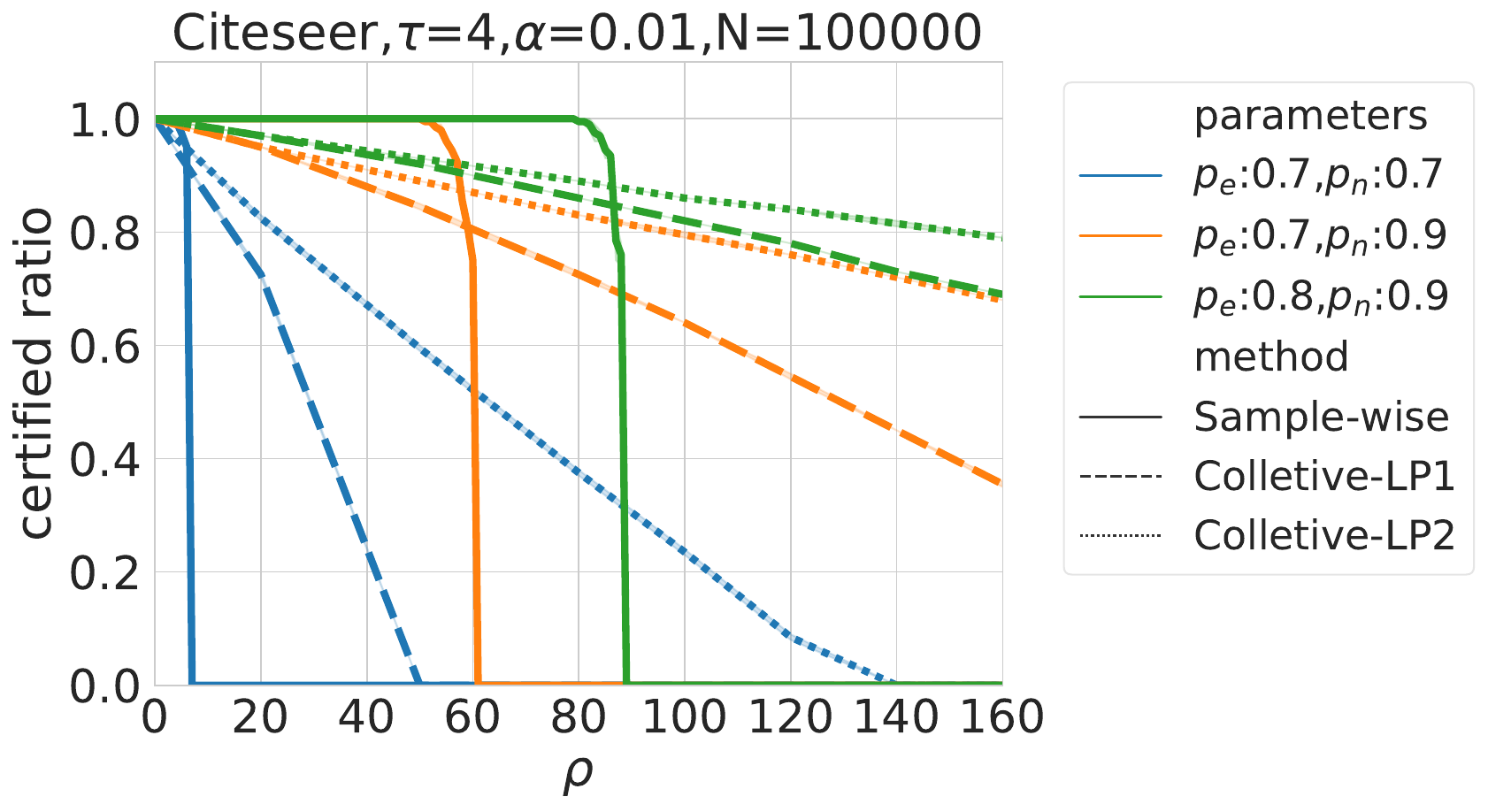}
    }
\vspace{-8pt}
\caption{Comparison of certified performance (More results with other parameters are shown in Appendix.~\ref{Sec:Appendix_D}).}
\label{fig:certify_rho}
\vspace{-5pt}
\end{figure}
In Figure.~\ref{fig:certify_rho} and Table.~\ref{tab:certify_rho}, we exhibit the certified ratio of the three certificates regarding various numbers of injected nodes $\rho$.
With the same smoothing parameter, both proposed collective certificates achieve a higher certifiable radius, outperforming the sample-wise approach significantly when the $\rho$ is large.
For example, in the Citeseer dataset, when $\rho=140$, our Collective-LP1 and Collective-LP2 have the certified ratios of $73.0\%$, and $81.2\%$, while sample-wise can certify $0.0\%$ nodes. 
Moreover, the improvement of our collective certificate is even more significant in the global attack setting (Appendix~\ref{Sec:Appendix_global}).

When the $\rho$ is small, the LP collective robustness does not outperform the sample-wise robustness. This can be attributed to the integrity gap of the relaxation technique utilized in the LP formulation, which we further illustrated in Section.~\ref{subsec:Integrity}. \textit{Interestingly, this difference becomes negligible in the case of a global attack}, as shown in Appendix~\ref{Sec:Appendix_global}.
Nevertheless, in practical scenarios, we can easily combine the sample-wise and collective certificates with minimal effort to achieve stronger certified performance in both small and large attack budgets. Since the sample-wise and collective models share the same smoothed model, we only need to estimate the smoothing prediction once to avoid extra computation. By integrating both certificates, we can leverage their respective strengths and enhance the overall robustness of the system.



\begin{table}[hbt]
\centering
\caption{Comparison of certified ratio between sample-wise and collective certifying schemes under various parameters. }
\label{tab:certify_rho}
\setlength{\tabcolsep}{1.5pt}
\footnotesize
\begin{tabular}{clrrrrr}
\toprule[1pt]
\multicolumn{2}{c}{Cora-ML   ($\tau=6$)} & \multicolumn{5}{c}{$\rho$} \\ \hline
\multicolumn{1}{c}{\begin{tabular}[c]{@{}c@{}}parameters\\  ($p_e$-$p_n$)\end{tabular}} & \multicolumn{1}{c}{methods} & \multicolumn{1}{c}{20} & \multicolumn{1}{c}{50} & \multicolumn{1}{c}{100} & \multicolumn{1}{c}{120} & \multicolumn{1}{c}{140} \\ \hline
\multirow{3}{*}{0.7-0.9} & Sample-wise & 1.000 & 0.000 & 0.000 & 0.000 & 0.000 \\
 & Collective-LP1 & 0.920 & \cellcolor{blue!6}0.768 & \cellcolor{blue!6}0.452 & \cellcolor{blue!6}0.316 & \cellcolor{blue!6}0.178 \\
 & Collective-LP2 & 0.926 & \cellcolor{blue!10}0.836 & \cellcolor{blue!10}0.686 & \cellcolor{blue!10}0.624 & \cellcolor{blue!10}0.564 \\ \hline
\multirow{3}{*}{0.9-0.8} & Sample-wise & 1.000 & 0.000 & 0.000 & 0.000 & 0.000 \\
 & Collective-LP1 & 0.950 & \cellcolor{blue!6}0.878 & \cellcolor{blue!6}0.730 & \cellcolor{blue!6}0.666 & \cellcolor{blue!6}0.600 \\
 & Collective-LP2 & 0.950 & \cellcolor{blue!10}0.894 & \cellcolor{blue!10}0.800 & \cellcolor{blue!10}0.760 & \cellcolor{blue!10}0.726 \\ \hline
\multirow{3}{*}{0.9-0.9} & Sample-wise & 1.000 & 1.000 & 1.000 & 0.000 & 0.000 \\
 & Collective-LP1 & 0.978 & 0.948 & 0.900 & \cellcolor{blue!10}0.880 & \cellcolor{blue!10}0.862 \\
 & Collective-LP2 & 0.978 & 0.948 & 0.900 & \cellcolor{blue!10}0.880 & \cellcolor{blue!10}0.862 \\ \toprule[1pt]
\multicolumn{2}{c}{Citeseer ($\tau=4$)} & \multicolumn{1}{c}{20} & \multicolumn{1}{c}{50} & \multicolumn{1}{c}{100} & \multicolumn{1}{c}{120} & \multicolumn{1}{c}{140} \\ \hline
\multirow{3}{*}{0.7-0.9} & Sample-wise & 1.000 & 0.990 & 0.000 & 0.000 & 0.000 \\
 & Collective-LP1 & 0.950 & 0.846 & \cellcolor{blue!6}0.640 & \cellcolor{blue!6}0.546 & \cellcolor{blue!6}0.452 \\
 & Collective-LP2 & 0.950 & 0.892 & \cellcolor{blue!10}0.796 & \cellcolor{blue!10}0.756 & \cellcolor{blue!10}0.718 \\ \hline
\multirow{3}{*}{0.8-0.7} & Sample-wise & 0.000 & 0.000 & 0.000 & 0.000 & 0.000 \\
 & Collective-LP1 & \cellcolor{blue!6}0.856 & \cellcolor{blue!6}0.504 & \cellcolor{blue!4}0.000 & \cellcolor{blue!4}0.000 & \cellcolor{blue!4}0.000 \\
 & Collective-LP2 & \cellcolor{blue!10}0.894 & \cellcolor{blue!10}0.756 &\cellcolor{blue!10}0.534 & \cellcolor{blue!10}0.446 & \cellcolor{blue!10}0.360 \\ \hline
\multirow{3}{*}{0.9-0.8} & Sample-wise & 1.000 & 0.000 & 0.000 & 0.000 & 0.000 \\
 & Collective-LP1 & 0.970 & \cellcolor{blue!6}0.920 & \cellcolor{blue!6}0.820 & \cellcolor{blue!6}0.775 & \cellcolor{blue!6}0.730 \\
 & Collective-LP2 & 0.970 & \cellcolor{blue!10}0.930 & \cellcolor{blue!10}0.862 & \cellcolor{blue!10}0.840 & \cellcolor{blue!10}0.812 \\ \toprule[1pt]
\end{tabular}
\vspace{-10pt}
\end{table}

\begin{figure}[hbt!]
\centering
    \subfigure[smaller $\rho$ (GCN)]{
    \includegraphics[width=0.19\textwidth,height=2.8cm]{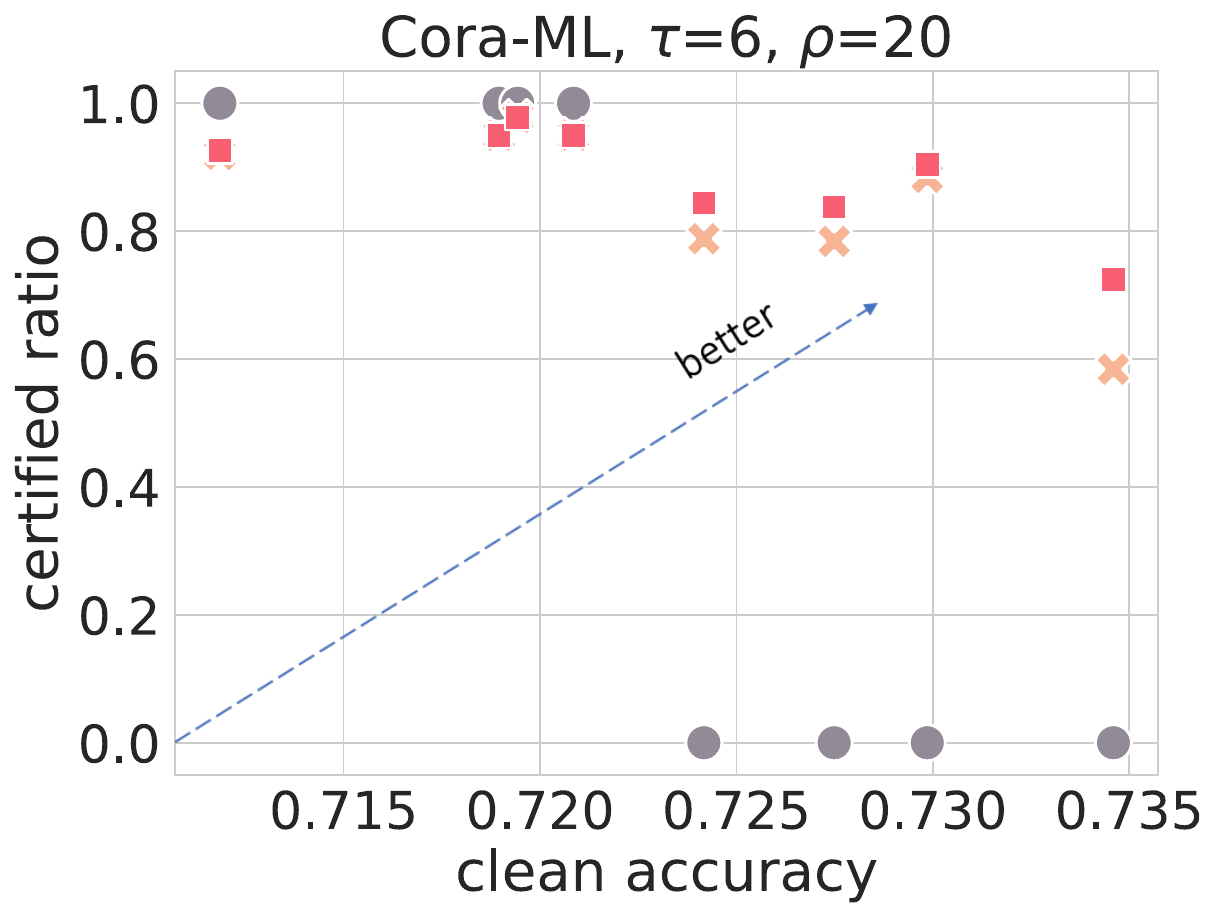}
    }
    \subfigure[smaller $\rho$ (GCN)]{
    \includegraphics[width=0.265\textwidth,height=2.8cm]{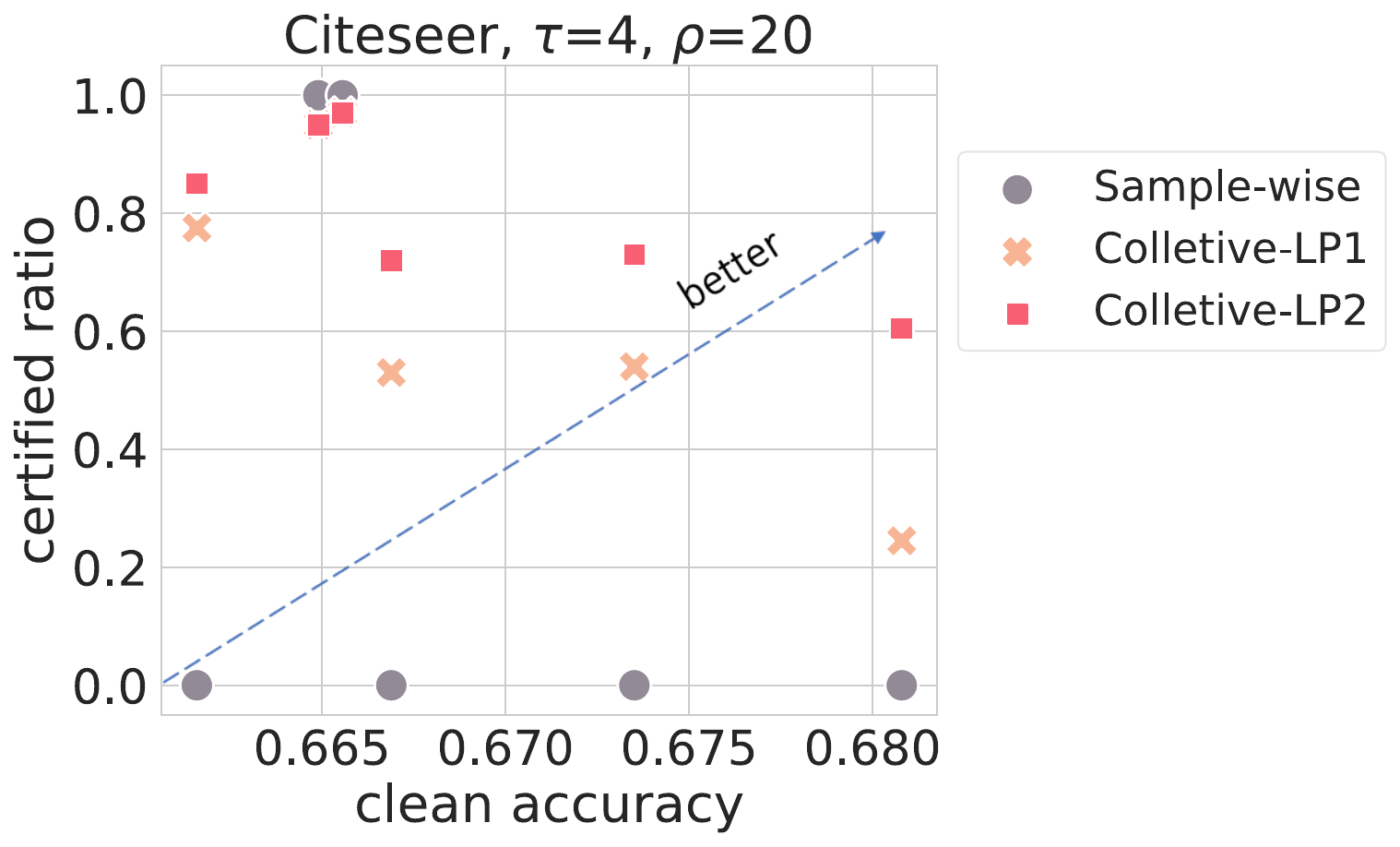}
    }
    \subfigure[larger $\rho$ (GCN)]{
    \includegraphics[width=0.19\textwidth,height=2.8cm]{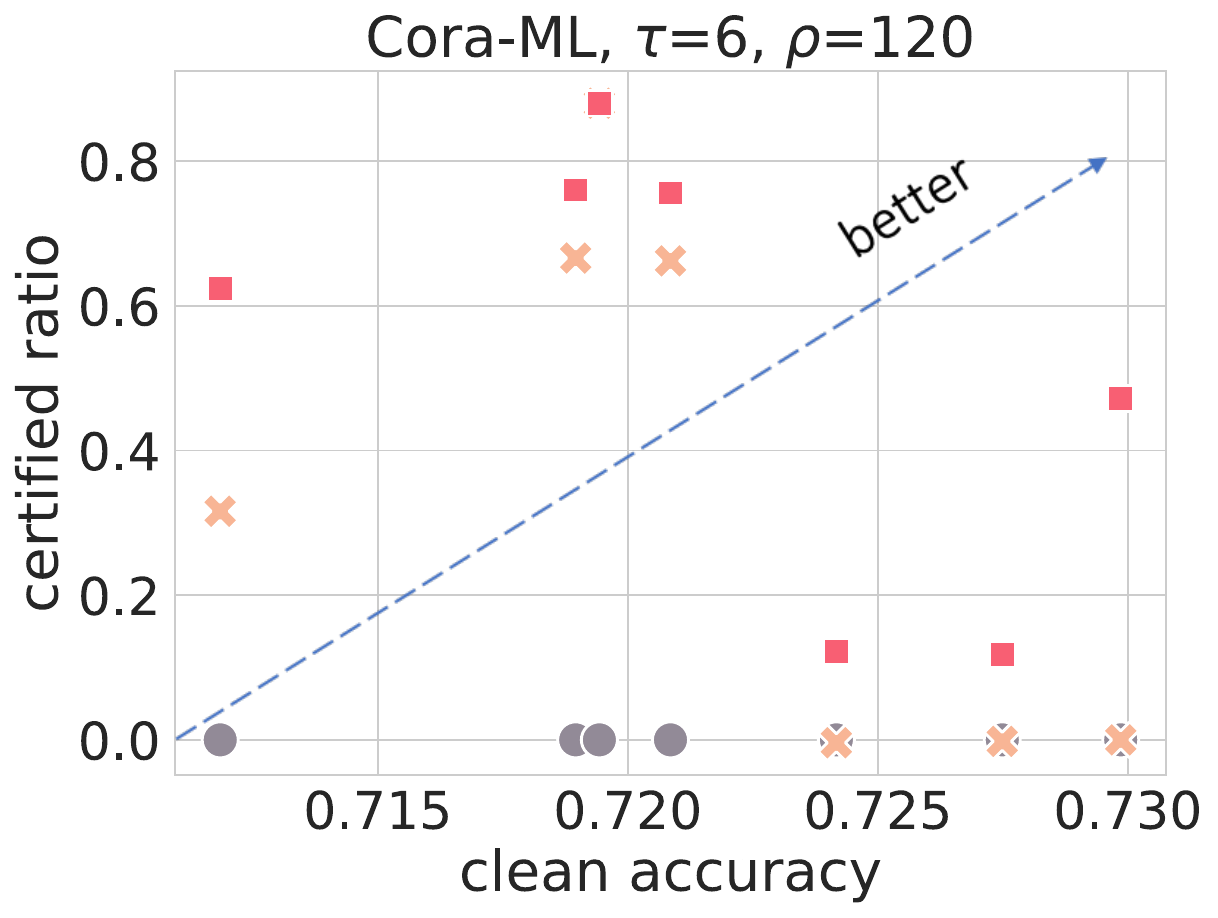}
    }
    \subfigure[larger $\rho$ (GCN)]{
    \includegraphics[width=0.265\textwidth,height=2.8cm]{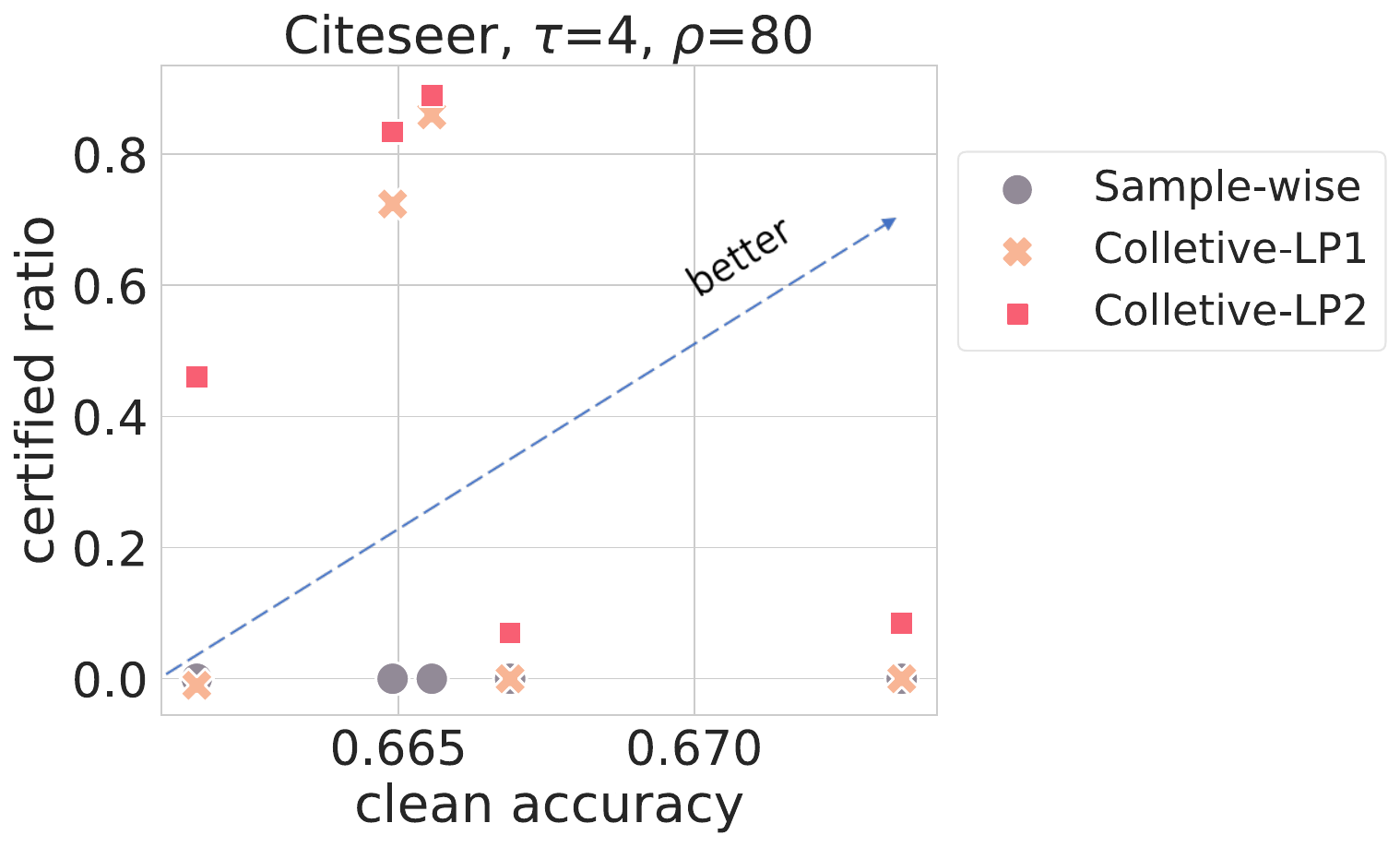}
    }
    \subfigure[larger $\rho$ (GAT)]{
    \includegraphics[width=0.19\textwidth,height=2.8cm]{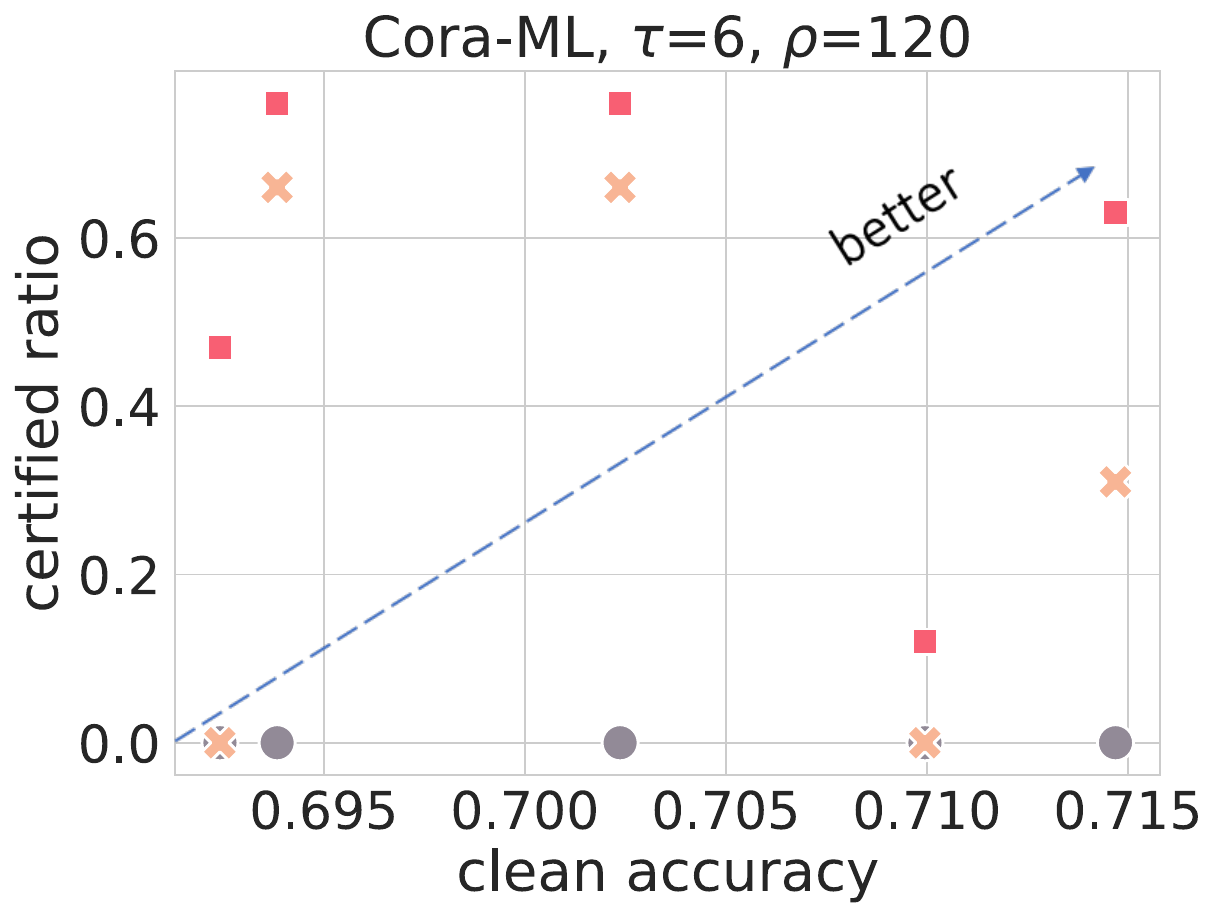}
    }
    \subfigure[larger $\rho$ (GAT)]{
    \includegraphics[width=0.255\textwidth,height=2.8cm]{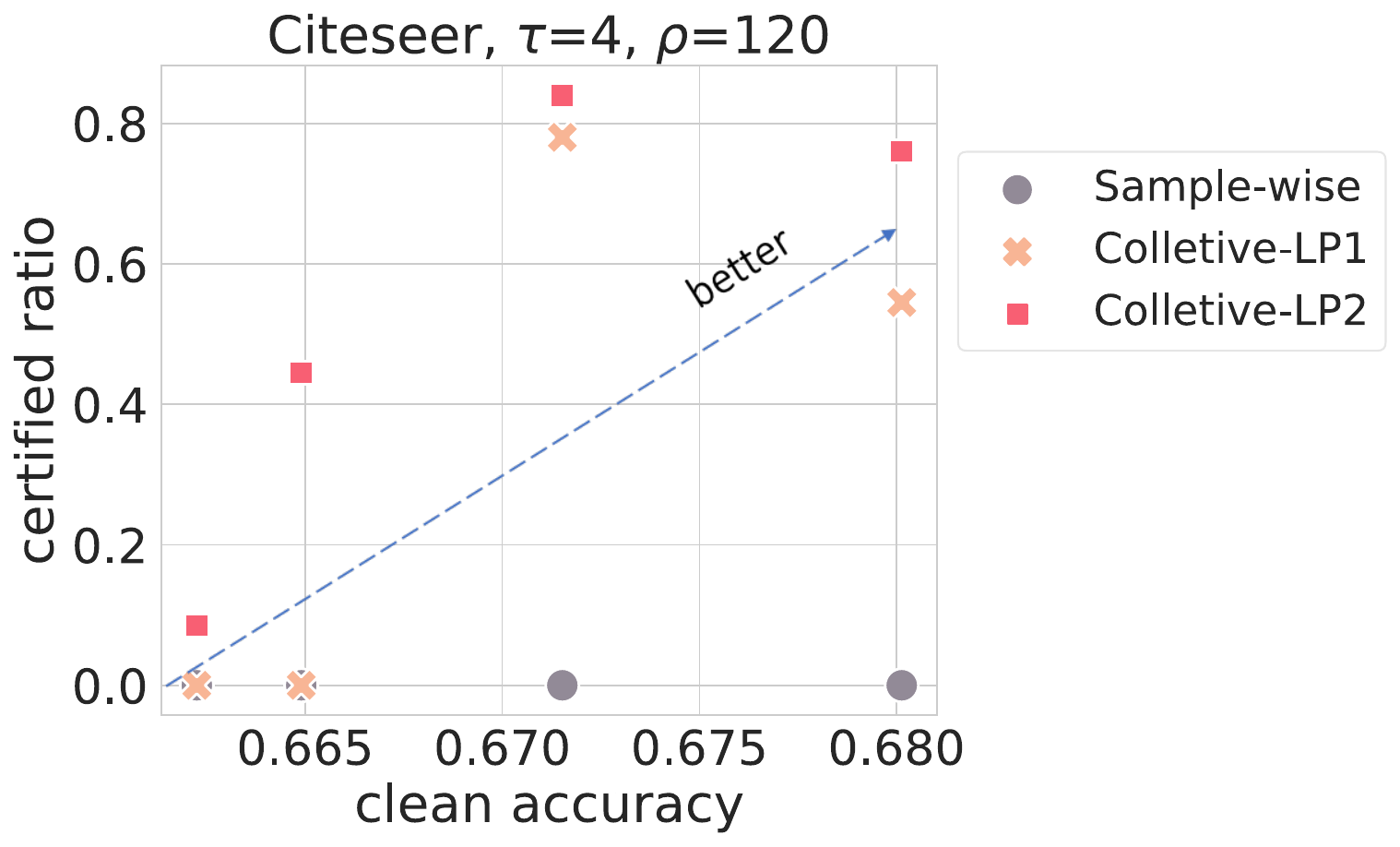}
    }
\vspace{-8pt}
\caption{Trade-off between clean accuracy and certified ratio (More results with other $\rho$ are shown in Appendix.~\ref{Sec:Appendix_D}).}
\label{fig:certify_clean}
\vspace{-8pt}
\end{figure}

A superior certifying scheme should not only possess a higher certified ratio but also a higher clean accuracy that represents the initial performance of the model. We also evaluate the trade-off between the certified ratio and the clean accuracy of the smoothed model in Figure.~\ref{fig:certify_clean}. 
As we employ the same smoothed model, both the collective scheme and the sample-wise scheme exhibit the same clean accuracy when they share identical smoothing parameters, while our collective approach consistently achieves a higher certified ratio, particularly when $\rho$ exceeds the certifiable radius of the sample-wise approach. Finally, these results highlight the advantageous trade-off achieved by our proposed collective approach in both smaller $\rho$ and larger $\rho$.

\subsubsection{Comparing two Collective Certificates.} 
In comparing our two LP-based collective certificates, it is evident that our customized relaxation (Collective-LP2) consistently achieves higher or equivalent certified ratios compared to the standard technique (Collective-LP1). 
For instance, in the Cora-ML dataset, when $p_e=0.7$, $p_n=0.9$, and $\rho=140$, Collective-LP2 improves the certified ratio by $216\%$ compared to Collective-LP1 (Table.~\ref{tab:certify_rho}).
Furthermore, with the same clean accuracy, Collective-LP2 is always superior to Collective-LP1 in certified ratios (Figure.~\ref{fig:certify_clean}). 

\begin{figure}[hbt]
\centering
    \subfigure[Runtime]{\includegraphics[width=0.19\textwidth,height=2.8cm]{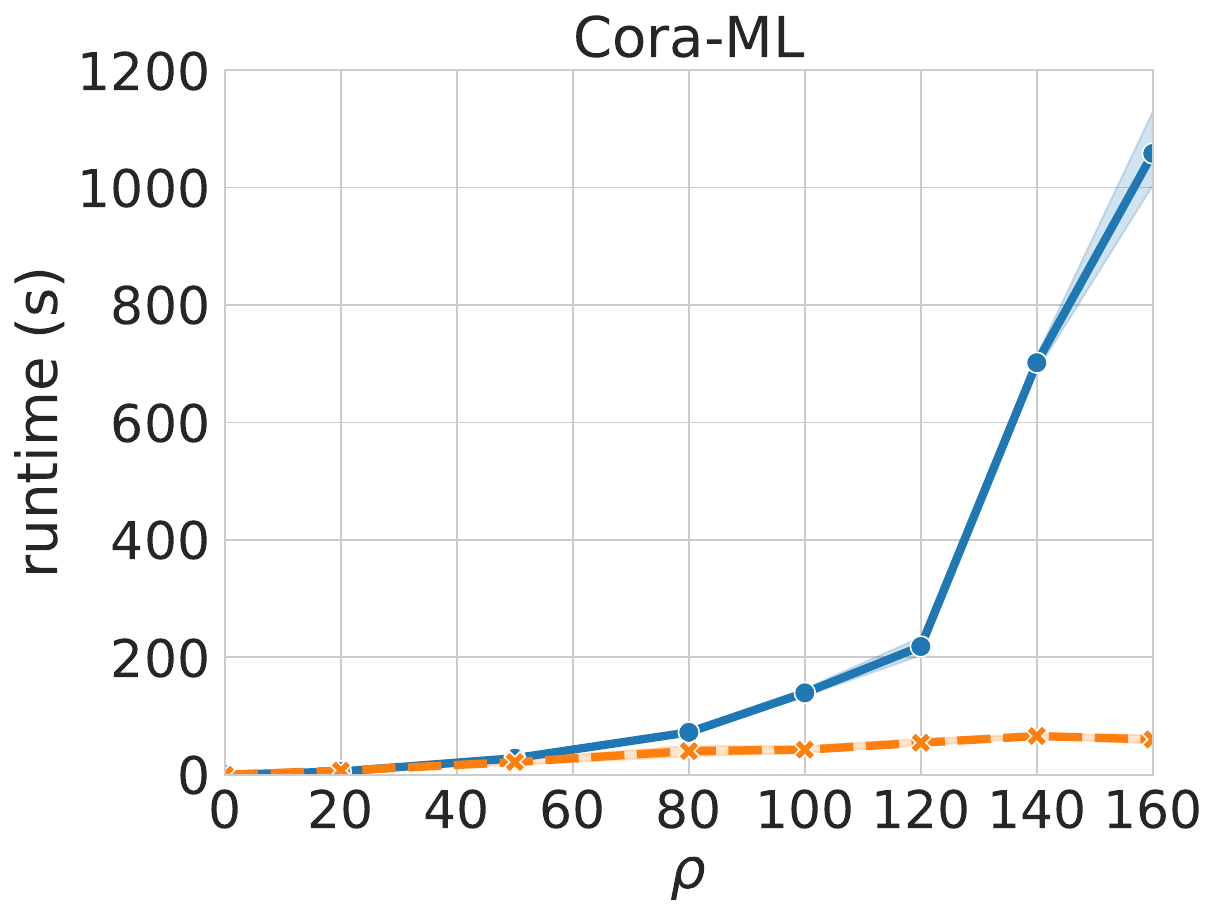}
    }
    \subfigure[Runtime]{\includegraphics[width=0.255\textwidth,height=2.8cm]{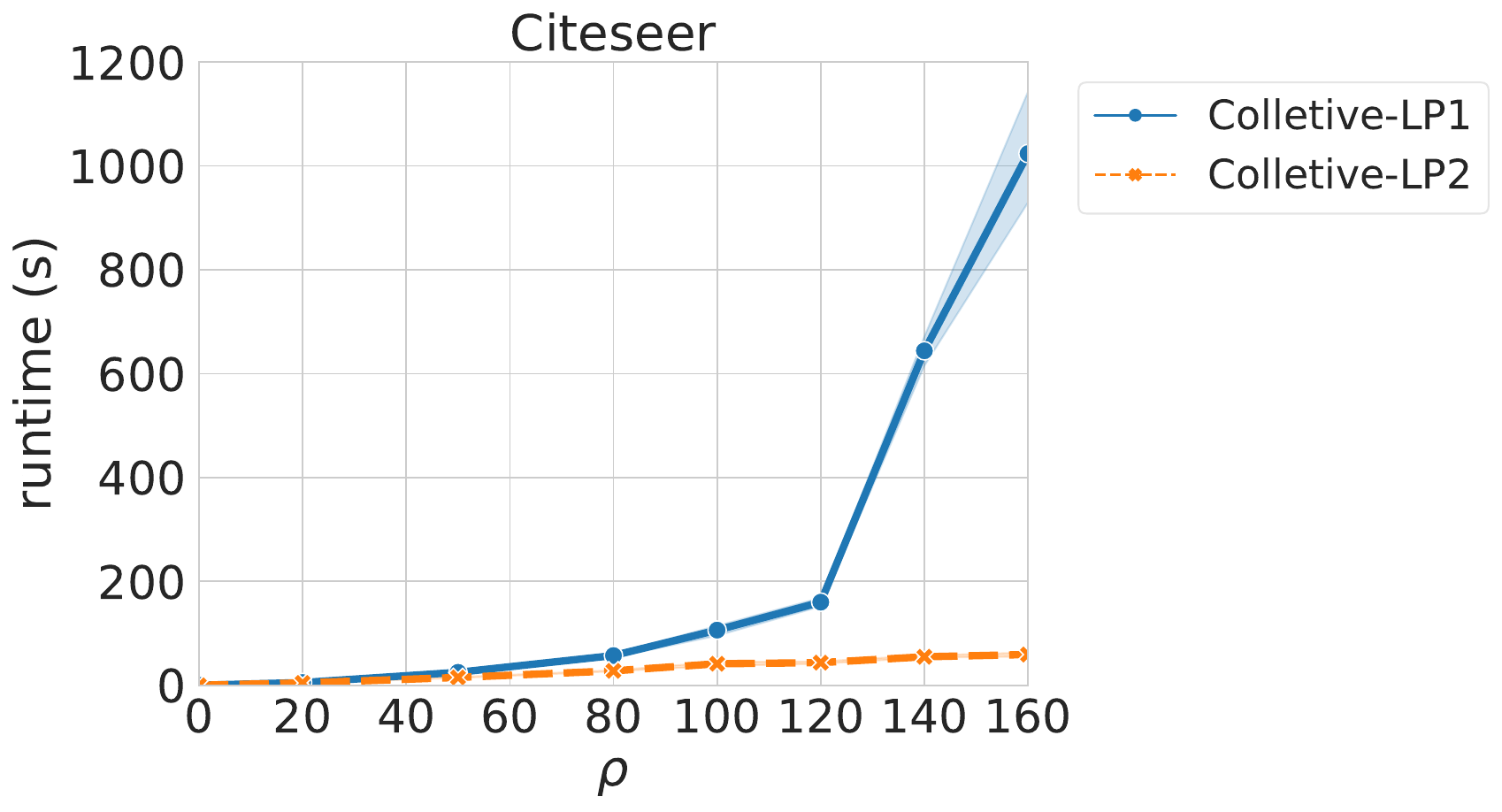}
    }
\vspace{-8pt}
\caption{Runtime comparison of LP collective models.}
\label{fig:cer_time}
\vspace{-5pt}
\end{figure}
In Figure~\ref{fig:cer_time}, we present a comparison of the runtime between our two LP-based collective certificates. It is evident that Collective-LP2 exhibits a significantly lower runtime compared to Collective-LP1, particularly as $\rho$ increases.
Remarkably, even for a larger value of $\rho$ like $\rho=140$, our Collective-LP2 can be solved in approximately $1$ minute. This indicates the practicality and efficiency of our proposed method, making it feasible for real-world scenarios with larger attack budgets.

\subsection{Effectiveness of Linear Relaxation}
\label{subsec:Integrity}
\begin{figure}[hbt]
\centering
\vspace{-5pt}
\subfigure[Integrity Gap]{
\includegraphics[width=0.19\textwidth,height=2.8cm]{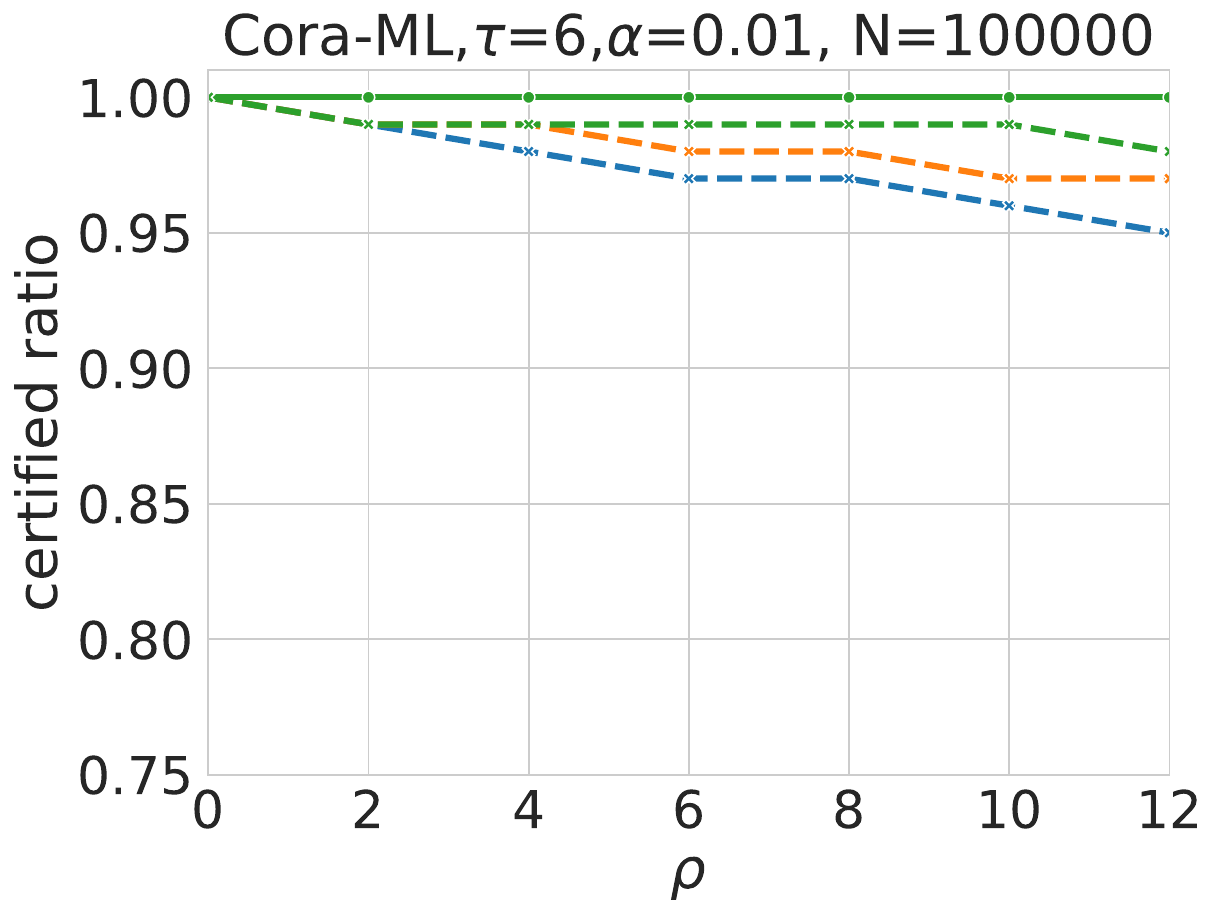}
    }
    \subfigure[Integrity Gap]{
    \includegraphics[width=0.255\textwidth,height=2.8cm]{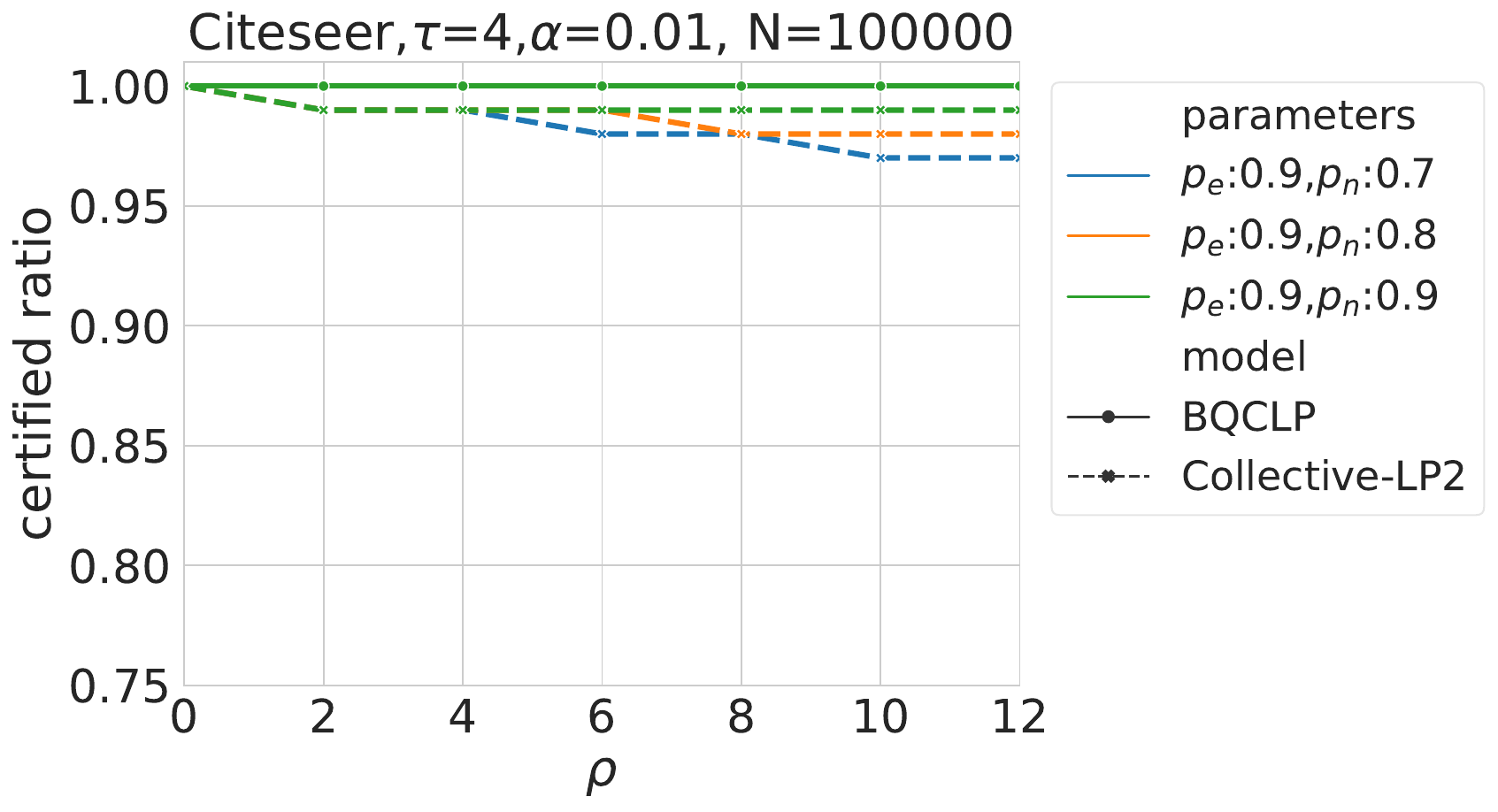}
    }
\vspace{-8pt}
\caption{Certified ratio comparison between optimizing original BQCLP problem and relaxed LP problem. }
\label{fig:integrity}
\vspace{-5pt}
\end{figure}


In this section, we investigate the impact of our LP relaxation technique on the certified performance of our collective certification method. Specifically, we compare the certified ratios obtained from both the original integer problem (BQCLP) and the LP problem (Collective-LP2). Figure.~\ref{fig:integrity} provides a graphical representation of these results.
Due to the computational overhead associated with solving the integer problem, we limit our analysis to a smaller attack budget, $\rho\leq 12$. We observe that the certified ratio of the integer problem remains relatively stable as $\rho$ increases. However, the certified ratio of Collective-LP2 undergoes a decline of approximately $5\%$.
This decrease in certified performance is attributed to the sacrifice made in the relaxation process of the LP formulation. It also partially explains why our approach may exhibit a weaker certified ratio compared to the sample-wise approach when $\rho$ is small.


\vspace{-1pt}
\section{Related Work}
In this section, we summarize the previous work that is closely related to certified robustness. 
Randomized smoothing has emerged as a prominent black-box technique that provides certified robustness. It was first proposed for defending against $l_2$ norm ball perturbation in the computer vision models~\cite{cohen2019certified}. Recent work extends it to certify graph node classification tasks~\cite{bojchevski2020efficient,wang2021certified,jia2020certified,jia2022almost,scholten2022randomized} against $l_0$-norm ball perturbation, typically the graph modification attacks (GMAs). To improve the certified performance, some researchers~\cite{schuchardt2020collective,schuchardt2023localized} develop collective robustness schemes. These schemes assume a realistic attacker whose objective is to perturb a set of nodes simultaneously, thereby improving the overall robustness against adversarial attacks.  

Despite the progress made in defending against GMAs, the robustness against graph injection attacks (GIAs) has received relatively little attention. \cite{jia2023pore,lai2023node} further extended it to certify against GIAs. However, these models provide sample-wise certificates instead of collective ones. To the best of our knowledge, there is currently no collective certificate designed for GIAs.

\vspace{-1pt}
\section{Conclusion}

In this paper, we present the \textit{first} collective robustness certificate specifically designed for defending against graph injection attacks (GIAs), which encompass edge addition perturbations known to be more challenging to certify than edge deletions.
Our collective certificate improves the certified performance by assuming that the attacker's objective is to disrupt the predictions of as many target nodes as possible, using a shared single graph instead of different graphs for each node. We model the collective certifying problem by upper-bounding the number of non-robust nodes under a worst-case attacker, such that the remaining nodes are guaranteed to be robust. However, it yields a binary quadratic constrained programming that is NP-hard. To address this, we propose novel relaxations to formulate the problem into linear programming that can be efficiently solved. Extensive experimental results demonstrate that our proposed collective certificate achieves significantly higher certified ratios and larger certifiable radii compared to existing approaches. 

\bibliographystyle{icml2024}
\bibliography{references}

\begin{thebibliography}{33}
\providecommand{\natexlab}[1]{#1}
\providecommand{\url}[1]{\texttt{#1}}
\expandafter\ifx\csname urlstyle\endcsname\relax
  \providecommand{\doi}[1]{doi: #1}\else
  \providecommand{\doi}{doi: \begingroup \urlstyle{rm}\Url}\fi

\bibitem[ApS(2019)]{mosek}
ApS, M.
\newblock \emph{The MOSEK optimization toolbox for MATLAB manual. Version
  9.0.}, 2019.
\newblock URL \url{http://docs.mosek.com/9.0/toolbox/index.html}.

\bibitem[Bojchevski \& G{\"u}nnemann(2017)Bojchevski and
  G{\"u}nnemann]{bojchevski2017deep}
Bojchevski, A. and G{\"u}nnemann, S.
\newblock Deep gaussian embedding of graphs: Unsupervised inductive learning
  via ranking.
\newblock \emph{arXiv preprint arXiv:1707.03815}, 2017.

\bibitem[Bojchevski et~al.(2020)Bojchevski, Gasteiger, and
  G{\"u}nnemann]{bojchevski2020efficient}
Bojchevski, A., Gasteiger, J., and G{\"u}nnemann, S.
\newblock Efficient robustness certificates for discrete data: Sparsity-aware
  randomized smoothing for graphs, images and more.
\newblock In \emph{International Conference on Machine Learning}, pp.\
  1003--1013. PMLR, 2020.

\bibitem[Chen et~al.(2022)Chen, Yang, Zhang, KAILI, Liu, Han, and
  Cheng]{chen2022understanding}
Chen, Y., Yang, H., Zhang, Y., KAILI, M., Liu, T., Han, B., and Cheng, J.
\newblock Understanding and improving graph injection attack by promoting
  unnoticeability.
\newblock In \emph{International Conference on Learning Representations}, 2022.

\bibitem[Cohen et~al.(2019)Cohen, Rosenfeld, and Kolter]{cohen2019certified}
Cohen, J., Rosenfeld, E., and Kolter, Z.
\newblock Certified adversarial robustness via randomized smoothing.
\newblock In \emph{international conference on machine learning}, pp.\
  1310--1320. PMLR, 2019.

\bibitem[Diamond \& Boyd(2016)Diamond and Boyd]{diamond2016cvxpy}
Diamond, S. and Boyd, S.
\newblock {CVXPY}: {A} {P}ython-embedded modeling language for convex
  optimization.
\newblock \emph{Journal of Machine Learning Research}, 17\penalty0
  (83):\penalty0 1--5, 2016.

\bibitem[Geisler et~al.(2020)Geisler, Z{\"u}gner, and
  G{\"u}nnemann]{geisler2020reliable}
Geisler, S., Z{\"u}gner, D., and G{\"u}nnemann, S.
\newblock Reliable graph neural networks via robust aggregation.
\newblock \emph{Advances in Neural Information Processing Systems},
  33:\penalty0 13272--13284, 2020.

\bibitem[Gilmer et~al.(2017)Gilmer, Schoenholz, Riley, Vinyals, and
  Dahl]{gilmer2017neural}
Gilmer, J., Schoenholz, S.~S., Riley, P.~F., Vinyals, O., and Dahl, G.~E.
\newblock Neural message passing for quantum chemistry.
\newblock In \emph{International conference on machine learning}, pp.\
  1263--1272. PMLR, 2017.

\bibitem[Gosch et~al.(2023)Gosch, Geisler, Sturm, Charpentier, Z{\"u}gner, and
  G{\"u}nnemann]{gosch2023adversarial}
Gosch, L., Geisler, S., Sturm, D., Charpentier, B., Z{\"u}gner, D., and
  G{\"u}nnemann, S.
\newblock Adversarial training for graph neural networks: Pitfalls, solutions,
  and new directions.
\newblock In \emph{Thirty-seventh Conference on Neural Information Processing
  Systems}, 2023.

\bibitem[Jia et~al.(2020)Jia, Wang, Cao, and Gong]{jia2020certified}
Jia, J., Wang, B., Cao, X., and Gong, N.~Z.
\newblock Certified robustness of community detection against adversarial
  structural perturbation via randomized smoothing.
\newblock In \emph{Proceedings of The Web Conference 2020}, pp.\  2718--2724,
  2020.

\bibitem[Jia et~al.(2022)Jia, Wang, Cao, Liu, and Gong]{jia2022almost}
Jia, J., Wang, B., Cao, X., Liu, H., and Gong, N.~Z.
\newblock Almost tight l0-norm certified robustness of top-k predictions
  against adversarial perturbations.
\newblock In \emph{International Conference on Learning Representations}, 2022.

\bibitem[Jia et~al.(2023)Jia, Liu, Hu, and Gong]{jia2023pore}
Jia, J., Liu, Y., Hu, Y., and Gong, N.~Z.
\newblock Pore: Provably robust recommender systems against data poisoning
  attacks.
\newblock In \emph{32nd USENIX Security Symposium (USENIX Security 23)}, pp.\
  1703--1720, 2023.

\bibitem[Jin et~al.(2020)Jin, Ma, Liu, Tang, Wang, and Tang]{jin2020graph}
Jin, W., Ma, Y., Liu, X., Tang, X., Wang, S., and Tang, J.
\newblock Graph structure learning for robust graph neural networks.
\newblock In \emph{Proceedings of the 26th ACM SIGKDD international conference
  on knowledge discovery \& data mining}, pp.\  66--74, 2020.

\bibitem[Ju et~al.(2023)Ju, Fan, Zhang, and Ye]{ju2023let}
Ju, M., Fan, Y., Zhang, C., and Ye, Y.
\newblock Let graph be the go board: gradient-free node injection attack for
  graph neural networks via reinforcement learning.
\newblock In \emph{Proceedings of the AAAI Conference on Artificial
  Intelligence}, pp.\  4383--4390, 2023.

\bibitem[Kipf \& Welling(2016)Kipf and Welling]{kipf2016semi}
Kipf, T.~N. and Welling, M.
\newblock Semi-supervised classification with graph convolutional networks.
\newblock \emph{arXiv preprint arXiv:1609.02907}, 2016.

\bibitem[Lai et~al.(2023)Lai, Zhu, Pan, and Zhou]{lai2023node}
Lai, Y., Zhu, Y., Pan, B., and Zhou, K.
\newblock Node-aware bi-smoothing: Certified robustness against graph injection
  attacks.
\newblock \emph{arXiv preprint arXiv:2312.03979}, 2023.

\bibitem[Li et~al.(2023)Li, Xie, and Li]{li2023sok}
Li, L., Xie, T., and Li, B.
\newblock Sok: Certified robustness for deep neural networks.
\newblock In \emph{2023 IEEE Symposium on Security and Privacy (SP)}, pp.\
  1289--1310. IEEE, 2023.

\bibitem[Liu et~al.(2022)Liu, Luo, Wu, Liu, and Li]{liu2022towards}
Liu, Z., Luo, Y., Wu, L., Liu, Z., and Li, S.~Z.
\newblock Towards reasonable budget allocation in untargeted graph structure
  attacks via gradient debias.
\newblock \emph{Advances in Neural Information Processing Systems},
  35:\penalty0 27966--27977, 2022.

\bibitem[Scholten et~al.(2022)Scholten, Schuchardt, Geisler, Bojchevski, and
  G{\"u}nnemann]{scholten2022randomized}
Scholten, Y., Schuchardt, J., Geisler, S., Bojchevski, A., and G{\"u}nnemann,
  S.
\newblock Randomized message-interception smoothing: Gray-box certificates for
  graph neural networks.
\newblock In \emph{Advances in Neural Information Processing Systems}, 2022.
\newblock URL \url{https://openreview.net/forum?id=t0VbBTw-o8}.

\bibitem[Schuchardt et~al.(2020)Schuchardt, Bojchevski, Gasteiger, and
  G{\"u}nnemann]{schuchardt2020collective}
Schuchardt, J., Bojchevski, A., Gasteiger, J., and G{\"u}nnemann, S.
\newblock Collective robustness certificates: Exploiting interdependence in
  graph neural networks.
\newblock In \emph{International Conference on Learning Representations}, 2020.

\bibitem[Schuchardt et~al.(2023)Schuchardt, Wollschl{\"a}ger, Bojchevski, and
  G{\"u}nnemann]{schuchardt2023localized}
Schuchardt, J., Wollschl{\"a}ger, T., Bojchevski, A., and G{\"u}nnemann, S.
\newblock Localized randomized smoothing for collective robustness
  certification.
\newblock In \emph{International Conference on Learning Representations}, 2023.

\bibitem[Sen et~al.(2008)Sen, Namata, Bilgic, Getoor, Galligher, and
  Eliassi-Rad]{sen2008collective}
Sen, P., Namata, G., Bilgic, M., Getoor, L., Galligher, B., and Eliassi-Rad, T.
\newblock Collective classification in network data.
\newblock \emph{AI magazine}, 29\penalty0 (3):\penalty0 93--93, 2008.

\bibitem[Tao et~al.(2023)Tao, Cao, Shen, Wu, Hou, Sun, and
  Cheng]{tao2023adversarial}
Tao, S., Cao, Q., Shen, H., Wu, Y., Hou, L., Sun, F., and Cheng, X.
\newblock Adversarial camouflage for node injection attack on graphs.
\newblock \emph{Information Sciences}, 649:\penalty0 119611, 2023.

\bibitem[Veli{\v{c}}kovi{\'c} et~al.(2017)Veli{\v{c}}kovi{\'c}, Cucurull,
  Casanova, Romero, Lio, and Bengio]{velivckovic2017graph}
Veli{\v{c}}kovi{\'c}, P., Cucurull, G., Casanova, A., Romero, A., Lio, P., and
  Bengio, Y.
\newblock Graph attention networks.
\newblock \emph{arXiv preprint arXiv:1710.10903}, 2017.

\bibitem[Veli{\v{c}}kovi{\'c} et~al.(2018)Veli{\v{c}}kovi{\'c}, Cucurull,
  Casanova, Romero, Li{\`o}, and Bengio]{velivckovic2018graph}
Veli{\v{c}}kovi{\'c}, P., Cucurull, G., Casanova, A., Romero, A., Li{\`o}, P.,
  and Bengio, Y.
\newblock Graph attention networks.
\newblock In \emph{International Conference on Learning Representations}, 2018.

\bibitem[Wang et~al.(2021)Wang, Jia, Cao, and Gong]{wang2021certified}
Wang, B., Jia, J., Cao, X., and Gong, N.~Z.
\newblock Certified robustness of graph neural networks against adversarial
  structural perturbation.
\newblock In \emph{Proceedings of the 27th ACM SIGKDD Conference on Knowledge
  Discovery \& Data Mining}, pp.\  1645--1653, 2021.

\bibitem[Wei(2020)]{wei2020tutorials}
Wei, W.
\newblock Tutorials on advanced optimization methods.
\newblock \emph{arXiv preprint arXiv:2007.13545}, 2020.

\bibitem[Zhang et~al.(2020)Zhang, Yin, Chen, Hung, Huang, and
  Cui]{zhang2020gcn}
Zhang, S., Yin, H., Chen, T., Hung, Q. V.~N., Huang, Z., and Cui, L.
\newblock Gcn-based user representation learning for unifying robust
  recommendation and fraudster detection.
\newblock In \emph{Proceedings of the 43rd international ACM SIGIR conference
  on research and development in information retrieval}, pp.\  689--698, 2020.

\bibitem[Zhang \& Zitnik(2020)Zhang and Zitnik]{zhang2020gnnguard}
Zhang, X. and Zitnik, M.
\newblock Gnnguard: Defending graph neural networks against adversarial
  attacks.
\newblock \emph{Advances in neural information processing systems},
  33:\penalty0 9263--9275, 2020.

\bibitem[Zhang et~al.(2019)Zhang, Khan, and Coates]{zhang2019comparing}
Zhang, Y., Khan, S., and Coates, M.
\newblock Comparing and detecting adversarial attacks for graph deep learning.
\newblock In \emph{Proc. Representation Learning on Graphs and Manifolds
  Workshop, Int. Conf. Learning Representations, New Orleans, LA, USA}, 2019.

\bibitem[Zhu et~al.(2019)Zhu, Zhang, Cui, and Zhu]{zhu2019robust}
Zhu, D., Zhang, Z., Cui, P., and Zhu, W.
\newblock Robust graph convolutional networks against adversarial attacks.
\newblock In \emph{Proceedings of the 25th ACM SIGKDD international conference
  on knowledge discovery \& data mining}, pp.\  1399--1407, 2019.

\bibitem[Z{\"u}gner et~al.(2018)Z{\"u}gner, Akbarnejad, and
  G{\"u}nnemann]{zugner2018netattack}
Z{\"u}gner, D., Akbarnejad, A., and G{\"u}nnemann, S.
\newblock Adversarial attacks on neural networks for graph data.
\newblock In \emph{Proceedings of the 24th ACM SIGKDD international conference
  on knowledge discovery \& data mining}, pp.\  2847--2856, 2018.

\bibitem[Zügner \& Günnemann(2019)Zügner and
  Günnemann]{zugner2018metattack}
Zügner, D. and Günnemann, S.
\newblock Adversarial attacks on graph neural networks via meta learning.
\newblock In \emph{International Conference on Learning Representations}, 2019.
\newblock URL \url{https://openreview.net/forum?id=Bylnx209YX}.

\end{thebibliography}

\newpage

\appendix
\onecolumn

\section{Theorectical Proofs}
\label{Sec:Appendix_A}

\setcounter{lemma}{0}

\begin{lemma} (Restate) Let $A$ be the adjacency matrix of the perturbed graph with $\rho$ injected nodes, and the injected nodes are in the last $\rho$ rows and columns. With smoothing $p_n>0$ and $p_e>0$, we have the upper bound of $p(E_v)$: 
    \begin{align}
        &p(E_v)\leq \overline{p(E_v)}\\
        =&1- p_1^{||A_{n:(n+\rho),v}||_1}p_2^{||A_{n:(n+\rho),v}^2||_1}\cdots p_k^{||A_{n:(n+\rho),v}^k||_1},\nonumber
    \end{align}
where $p_i:=1-(\bar{p}_e\bar{p}_n)^{i},\, \forall i\in\{1,2,\cdots,k\}$, and adjacency matrix $A$ 
contains the injected nodes encoded in the $(n+1)^{th}$ to $(n+\rho)^{th}$ row, and $||\cdot||_1$ is $l_1$ norm. 
\end{lemma}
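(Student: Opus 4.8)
The plan is to build up the bound path-by-path, then from single injected node to all $\rho$ injected nodes, and finally convert the path-counting into powers of the adjacency matrix. First I would fix an injected node $\tilde v$ and a single path $q \in P_{\tilde v v}^k$ of length $|q|$. The message along $q$ survives the smoothing exactly when all $|q|$ nodes on $q$ are retained (each independently with probability $\bar p_n = 1-p_n$) and all $|q|$ edges on $q$ are retained (each independently with probability $\bar p_e = 1-p_e$); since edge- and node-deletions are independent across all edges/nodes, this survival probability is $(\bar p_e \bar p_n)^{|q|}$, so the probability that $q$ is intercepted is $1 - (\bar p_e \bar p_n)^{|q|}$. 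This is where the stated quantity $p_i = 1-(\bar p_e\bar p_n)^i$ enters: it is the interception probability of a single length-$i$ path.

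Next I would introduce the independence relaxation. The true event $\bar E_v$ that \emph{every} path from $\tilde v$ to $v$ is intercepted is a conjunction of events that share edges/nodes and are therefore positively correlated; but we only need an \emph{upper} bound on $p(E_v)$, equivalently a \emph{lower} bound on $p(\bar E_v)$. Treating the per-path interception events as independent gives $p(\bar E_v^{\tilde v}) \gets \prod_{q \in P_{\tilde v v}^k}\bigl(1-(\bar p_e\bar p_n)^{|q|}\bigr)$; I would argue (this is the step to state carefully, and the one the reader should check — ``the main obstacle'' in the sense of the only non-mechanical point) that this product is a valid lower bound on the true $p(\bar E_v^{\tilde v})$, i.e. that positive association of the survival events makes the independent product an underestimate of the conjunction probability. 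This is the same device used in \citet{scholten2022randomized}; I would cite it and spell out the association/FKG-type argument, or alternatively give a direct union-bound-flavored argument that $p(E_v^{\tilde v}) \le 1 - \prod_q (1-(\bar p_e\bar p_n)^{|q|})$.

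Then I would collect terms by path length. Group $P_{\tilde v v}^k = \bigsqcup_{i=1}^{k} P_{\tilde v v}^{(i)}$ where $P_{\tilde v v}^{(i)}$ are the length-$i$ paths, so $\prod_{q\in P_{\tilde v v}^k}(1-(\bar p_e\bar p_n)^{|q|}) = \prod_{i=1}^k p_i^{|P_{\tilde v v}^{(i)}|}$. Now combine over all injected nodes: $\bar E_v = \bigcap_{\tilde v \in \tilde{\mathcal V}} \bar E_v^{\tilde v}$, and again using the independence relaxation across injected nodes, $p(\bar E_v) \ge \prod_{\tilde v} \prod_{i=1}^k p_i^{|P_{\tilde v v}^{(i)}|} = \prod_{i=1}^k p_i^{\sum_{\tilde v}|P_{\tilde v v}^{(i)}|}$. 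The final step is the combinatorial identity $\sum_{\tilde v \in \tilde{\mathcal V}} |P_{\tilde v v}^{(i)}| = \|A_{n:(n+\rho),v}^i\|_1$: the $(\tilde v, v)$ entry of $A^i$ counts walks of length $i$ from $\tilde v$ to $v$, and summing over the $\rho$ injected-node rows $n{+}1,\dots,n{+}\rho$ gives the total number of length-$i$ walks from any injected node into $v$, which is the $\ell_1$ norm of that sub-vector of $A^i$. (I would note that ``walks'' vs. ``paths'' is harmless here: using walks only inflates the exponent of $p_i \in (0,1)$, which only decreases the product, preserving the direction of the inequality — so in fact the bound is stated with walk-counts and remains an upper bound on $p(E_v)$.) Assembling, $p(E_v) = 1 - p(\bar E_v) \le 1 - \prod_{i=1}^k p_i^{\|A_{n:(n+\rho),v}^i\|_1}$, which is the claim. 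I would relegate the careful verification of the independence-relaxation inequality and the walk-counting bookkeeping to the appendix, as the proof sketch in the main text already indicates.
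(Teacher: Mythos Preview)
Your proposal is correct and follows essentially the same route as the paper's proof: single-path survival probability $(\bar p_e\bar p_n)^{|q|}$, the independence relaxation across paths (with a citation to \citet{scholten2022randomized}), grouping by path length, aggregating over the $\rho$ injected nodes, and identifying the length-$i$ path counts with $\|A_{n:(n+\rho),v}^i\|_1$. Your write-up is in fact more careful than the paper's on two points it glosses over---the positive-association justification for the independence relaxation and the walks-versus-paths distinction---but the overall structure is identical.
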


\begin{proof}
According to \cite{scholten2022randomized}, we have an upper bound for $p(E_v)\leq \overline{p(E_v)}$ by assuming the independence among the paths. 
Let $p(\bar{E}_v^{\tilde{v}})$ denote the probability that all paths are intercepted from an injected node $\tilde{v}$ to node $v$ in the case that of considering each path independently. We have $p(\bar{E}_v^{\tilde{v}})=\prod_{q\in P_{\tilde{v}v}^k}(1-(\bar{p}_e\bar{p}_n)^{|q|})$, where $\bar{p}_e:=1-p_e$, $\bar{p}_n:=1-p_n$ and $|q|\in \{1,\cdots,k\}$ represent the length of the path $q\in P_{\tilde{v}v}^k$ from $\tilde{v}$ to $v$. $(\bar{p}_e\bar{p}_n)^{|q|}$ is the probability that all edges and all nodes in the path $q$ are not deleted, $1-(\bar{p}_e\bar{p}_n)^{|q|}$ is the probability that at least one of edges or one of nodes are deleted, such that the path $q$ is intercepted. 
Then, by considering multiple injected nodes, we have $\overline{p(E_v)} = 1-\prod_{\tilde{v}\in\tilde{\mathcal{V}}}p(\bar{E}_v^{\tilde{v}})$. Finally, we have the $\overline{p(E_v)}$ as follows:
    \begin{align}
    &\quad\,\, \overline{p(E_v)} \\
    &= 1-\prod_{\tilde{v}\in\tilde{\mathcal{V}}}p(\bar{E}_v^{\tilde{v}})\nonumber\\
    &= 1-\prod_{\tilde{v}\in\tilde{\mathcal{V}}} \{\prod_{q\in P_{\tilde{v}v}^k}(1-(\bar{p}_e\bar{p}_n)^{|q|}) \}\nonumber\\
    &= 1-\prod_{\tilde{v}\in\tilde{\mathcal{V}}} \{(1-\bar{p}_e\bar{p}_n)^{A_{\tilde{v}v}}(1-(\bar{p}_e\bar{p}_n)^2)^{A_{\tilde{v}v}^2}\cdots(1-(\bar{p}_e\bar{p}_n)^{k})^{A_{\tilde{v}v}^k} \}\nonumber\\
    &= 1- p_1^{||A_{n:(n+\rho),v}||_1}p_2^{||A_{n:(n+\rho),v}^2||_1}\cdots p_k^{||A_{n:(n+\rho),v}^k||_1},\nonumber
\end{align}
where $p_i:=1-(\bar{p}_e\bar{p}_n)^{i}$. In particular, the constant $p_k$ denotes the probability that a path with a length of $k$ is intercepted. According to graph theory, $A_{\tilde{v}v}^k$ is the number of paths from node $\tilde{v}$ to node $v$ with distance/length/steps of exactly $k$ in the graph. Let $A_{n:(n+\rho),v}$ denote the slicing of matrix $A$, taking the $v^{th}$ column and the rows from $(n+1)^{th}$ to $(n+\rho)^{th}$. Then $||A_{n:(n+\rho),v}^k||_1$ quantifies the number of paths with a length of $k$ originating from any malicious node and reaching node $v$. 
\end{proof}

\setcounter{thm}{0}
\begin{thm}
(Restate) Given a base GNN classifier $f$ trained on a graph $G$ and its smoothed classifier $g$ defined in \eqref{eqn:smooth_g}, a testing node $v \in G$ and a perturbation range $B_{\rho,\tau}(G)$, let $E_v$ be the event defined in Eq.~\eqref{eqn:E_v}. The absolute change in predicted probability $|p_{v,y}(G)-p_{v,y}(G')|$ for all perturbed graphs $G' \in B_{\rho,\tau}(G)$ is bounded by the probability of the event $E_v$: $|p_{v,y}(G)-p_{v,y}(G')|\leq p(E_v)$.
\end{thm}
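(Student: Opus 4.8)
The plan is to bound the change in prediction probability by relating it to the event $E_v$ and its complement $\bar{E}_v$. Recall that $p_{v,y}(G) = \mathbb{P}(f_v(\phi(G)) = y)$ and $p_{v,y}(G') = \mathbb{P}(f_v(\phi(G')) = y)$, where the smoothing distributions $\phi(G)$ and $\phi(G')$ differ only because $G'$ has the $\rho$ injected nodes and their incident edges. The key coupling idea is that, conditioned on $\bar{E}_v$ occurring (i.e., all paths from every injected node to $v$ are intercepted), the base classifier's output at $v$ is unaffected by the injected nodes — this is exactly Lemma~\ref{thm:Ev_general_case}, which gives $f_v(\phi(G)) = f_v(\phi(G'))$ on that event. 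So the discrepancy between the two probabilities can only come from the event $E_v$.

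The main steps would be as follows. First, I would set up a coupling between $\phi(G)$ and $\phi(G')$ that agrees on the randomness applied to the original graph $G$ (same edge-deletion and node-deletion coin flips for all original nodes/edges), so that $E_v$ (and $\bar E_v$) is a well-defined event on the common probability space and Lemma~\ref{thm:Ev_general_case} applies pathwise. Second, I would write
\begin{align}
p_{v,y}(G) - p_{v,y}(G') &= \mathbb{P}(f_v(\phi(G)) = y) - \mathbb{P}(f_v(\phi(G')) = y) \nonumber\\
&= \mathbb{P}(f_v(\phi(G)) = y \land \bar{E}_v) + \mathbb{P}(f_v(\phi(G)) = y \land E_v) \nonumber\\
&\quad - \mathbb{P}(f_v(\phi(G')) = y \land \bar{E}_v) - \mathbb{P}(f_v(\phi(G')) = y \land E_v). \nonumber
\end{align}
By Lemma~\ref{thm:Ev_general_case}, on $\bar{E}_v$ we have $f_v(\phi(G)) = f_v(\phi(G'))$, so the two $\bar{E}_v$ terms cancel exactly. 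This leaves
\begin{align}
p_{v,y}(G) - p_{v,y}(G') &= \mathbb{P}(f_v(\phi(G)) = y \land E_v) - \mathbb{P}(f_v(\phi(G')) = y \land E_v) \nonumber\\
&\leq \mathbb{P}(f_v(\phi(G)) = y \land E_v) \leq \mathbb{P}(E_v) = p(E_v). \nonumber
\end{align}
Third, I would apply the symmetric argument (swapping the roles of $G$ and $G'$, noting $E_v$ is defined identically on the coupled space) to get $p_{v,y}(G') - p_{v,y}(G) \leq p(E_v)$, and conclude $|p_{v,y}(G) - p_{v,y}(G')| \leq p(E_v)$.

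The main obstacle — really the only subtle point — is making the coupling precise and verifying that $E_v$ as defined in Eq.~\eqref{eqn:E_v} is genuinely the event outside of which Lemma~\ref{thm:Ev_general_case} forces equality of the base classifier outputs, \emph{uniformly over all} $G' \in B_{\rho,\tau}(G)$. One has to be careful that the paths $P^k_{\tilde v v}$ and hence $E_v$ depend on which edges the attacker injected, so strictly speaking the cancellation of the $\bar E_v$ terms must be argued for a fixed $G'$ first, and the bound $\mathbb{P}(E_v) = p(E_v)$ should be understood with $E_v$ instantiated for that $G'$; the final certified-robustness statement then uses the uniform upper bound $\overline{p(E_v)}$ from Lemma~\ref{thm:P_e_general}. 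Everything else is an elementary decomposition of probabilities and the trivial bounds $\mathbb{P}(A \land B) \leq \mathbb{P}(B)$ and $\mathbb{P}(A \land B) \geq 0$.
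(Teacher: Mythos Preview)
Your proposal is correct and follows essentially the same approach as the paper: decompose both probabilities over $E_v$ and $\bar E_v$, cancel the $\bar E_v$ terms via Lemma~\ref{thm:Ev_general_case}, and bound the remaining difference by $p(E_v)$. You are in fact slightly more thorough than the paper's own proof, which only writes out one direction explicitly and leaves the coupling between $\phi(G)$ and $\phi(G')$ implicit; your discussion of the coupling and of $E_v$'s dependence on the particular $G'$ is a useful clarification.
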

\begin{proof}
    By the law of total probability, we have 
\begin{align}
    &\quad \,\, \mathbb{P}(f_v(\phi(G'))=y)\nonumber\\
    &=\mathbb{P}(f_v(\phi(G'))=y\land E_v)+\mathbb{P}(f_v(\phi(G'))=y\land \bar{E_v}).\nonumber
\end{align}

Note that, we define the event $E_v$ based on the sampling of perturbed graph $\phi(G')$. However, the clean graph $G$ is smaller than $G'$, and the intersection/overlap graph of them is $G\cap G'=G$. Subtly, we can still use the event $E_v$ defined on $\phi(G')$ to divide the sample space of $\phi(G)$ by regarding the model $f_{v}(\phi(G))$ only take part of the $\phi(G')$ as input, which is the intersected part of $G$: $\phi(G')\cap G$, and the result does not relate to the part that beyond $G$ (i.e., the injected nodes). Such that, we also have 
\begin{align}
    &\quad\,\,\mathbb{P}(f_v(\phi(G))=y) \nonumber\\
    &=\mathbb{P}(f_v(\phi(G))=y\land E_v)+\mathbb{P}(f_v(\phi(G))=y\land \bar{E_v}).\nonumber
\end{align}


Due to the fact that the injected node does not have any message passing to $v$ would not affect the $p_{v,y}(G)$, we have $\mathbb{P}(f_v(\phi(G'))=y|\bar{E_v})=\mathbb{P}(f_v(\phi(G))=y|\bar{E_v})$, so that $\mathbb{P}(f_v(\phi(G))=y\land \bar{E_v})=\mathbb{P}(f_v(\phi(G'))=y\land \bar{E_v})$. Following \cite{scholten2022randomized}, we have similar deduction as follows:
\begin{align}
    &\quad\,\, p_{v,y}(G)-p_{v,y}(G')\nonumber\\
    &=\mathbb{P}(f_v(\phi(G))=y\land E_v)+\mathbb{P}(f_v(\phi(G))=y\land \bar{E_v})\nonumber\\
    &\quad-\mathbb{P}(f_v(\phi(G'))=y\land E_v)-\mathbb{P}(f_v(\phi(G'))=y\land \bar{E_v}) \nonumber\\ 
    &=\mathbb{P}(f_v(\phi(G))=y\land E_v)-\mathbb{P}(f_v(\phi(G'))=y\land E_v)\nonumber\\  
    &\leq \mathbb{P}(f_v(\phi(G))=y\land E_v)\nonumber\\
    &=p(E_v)\cdot \mathbb{P}(f_v(\phi(G))=y|E_v)\nonumber\\
    &\leq p(E_v).  \nonumber
\end{align}
\end{proof}


\section{Details of Optimization Formulation}
\label{Sec:Appendix_C}

\subsection{Formulating problem~\eqref{opt:collective} as polynomial constrained programming.}
For problem~\eqref{opt:collective}, we plug in $\overline{p(E_v)}$ with \eqref{Eqn:P_e_General}, and then we have the following optimization problem:
\begin{align}
\label{opt:collective2}
\max_{A_{n:,:},\,\mathbf{m}}\quad & M=\sum_{v\in \mathbb{T}} m_v,\\ 
\text{s.t.} \quad 
&2\,\overline{p(E_v)} \geq c_v\cdot m_v,\, \forall v \in \mathbb{T}, \nonumber\\
&\overline{p(E_v)}=1- (p_1^{||A_{n:(n+\rho),v}||_1}p_2^{||A_{n:(n+\rho),v}^2||_1}\cdots p_k^{||A_{n:(n+\rho),v}^k||_1}),\nonumber\\
&||A_{\tilde{v}:}||_1 \leq \tau, \,\forall \tilde{v}\in \{n+1,\cdots,n+\rho\},\nonumber\\
& A_{ij}\in\{0,1\}, \,\forall i\in \{n+1,\cdots,n+\rho\},\,\forall j \in \{1,\cdots,n+\rho\},\nonumber\\
& m_v\in\{0,1\}, \forall \, v \in \{1,\cdots, n\},\nonumber
\end{align}
where $m_v=1$ (the element in vector $\mathbf{m}$) indicates that the robustness for node $v$ can not be verified. Specifically, it means that $2\,\overline{p(E_v)} \geq c_v$, and it disobeys our certifying condition.  

There are exponential terms in $\overline{p(E_v)}$, which is difficult to solve by existing optimization tools. We further formalize the problem. By taking the logarithm of the $\overline{p(E_v)}$, we are able to transform the exponential constraint in problem~\eqref{opt:collective2} into polynomial constraint:
\begin{align}
    &\tilde{P}_v \leq log(1-\frac{c_v}{2})\cdot m_v,\\
    &\tilde{P}_v = ||A_{n:(n+\rho),v}||_1\cdot \tilde{p_1}+ ||A_{n:(n+\rho),v}^2||_1\cdot \tilde{p_2}+\cdots+ ||A_{n:(n+\rho),v}^k||_1\cdot \tilde{p_k},\nonumber
\end{align}
where $\tilde{p}_k=log(p_k)$ is a constant, and $\tilde{P}_v$ is equivalent to $log(1-\overline{p(E_v)})$. Then the problem~\eqref{opt:collective2} is transformed to a binary polynomial constrained programming:
\begin{align}
\label{opt:collective-polynomial}
\max_{A_{n:,:},\,\mathbf{m}}\quad & M=\sum_{v\in \mathbb{T}} m_v,\\ 
\text{s.t.} \quad 
&\tilde{P}_v \leq log(1-\frac{c_v}{2})\cdot m_v,\nonumber\\
&\tilde{P}_v = ||A_{n:(n+\rho),v}||_1\cdot \tilde{p_1}+ ||A_{n:(n+\rho),v}^2||_1\cdot \tilde{p_2}+ \cdots+ ||A_{n:(n+\rho),v}^k||_1\cdot \tilde{p_k},\nonumber\\
&||A_{\tilde{v}:}||_1 \leq \tau, \,\forall \tilde{v}\in \{n+1,\cdots,n+\rho\},\nonumber\\
& A_{ij}\in\{0,1\}, \,\forall i\in \{n+1,\cdots,n+\rho\},\,\forall j \in \{1,\cdots,n+\rho\},\nonumber\\
& A^{\top}=A,\nonumber\\
& m_v\in\{0,1\}, \forall \, v \in \{1,\cdots, n\}.\nonumber
\end{align}

\subsection{Formulating problem~\eqref{opt:collective-polynomial} as BQCLP~\eqref{opt:collective-BQP}.}

In this section, we discuss the process from \eqref{opt:collective-polynomial} to \eqref{opt:collective-BQP}. In the case of $k=2$, the problem \eqref{opt:collective-polynomial} becomes a binary quadratic constrained problem as follows:
\begin{align}
\label{opt:collective-polynomial_simplified_by_2}
\max_{A_{n:,:},\,\mathbf{m}}\quad & M=\sum_{v\in \mathbb{T}} m_v,\\ 
\text{s.t.} \quad 
&||A_{n:(n+\rho),v}||_1\cdot \tilde{p_1}+ ||A_{n:(n+\rho),v}^2||_1 \cdot \tilde{p_2} \leq log(1-\frac{c_v}{2})\cdot m_v,\nonumber\\
&||A_{\tilde{v}:}||_1 \leq \tau, \,\forall \tilde{v}\in \{n+1,\cdots,n+\rho\},\nonumber\\
& A_{ij}\in\{0,1\}, \,\forall i\in \{n+1,\cdots,n+\rho\},\,\forall j \in \{1,\cdots,n+\rho\},\nonumber\\
& A^{\top}=A,\nonumber\\
& m_v\in\{0,1\}, \forall \, v \in \{1,\cdots, n\}.\nonumber
\end{align}
\begin{figure}[htb!]
   \centering
   \includegraphics[width=0.19\textwidth,height=1.5cm]{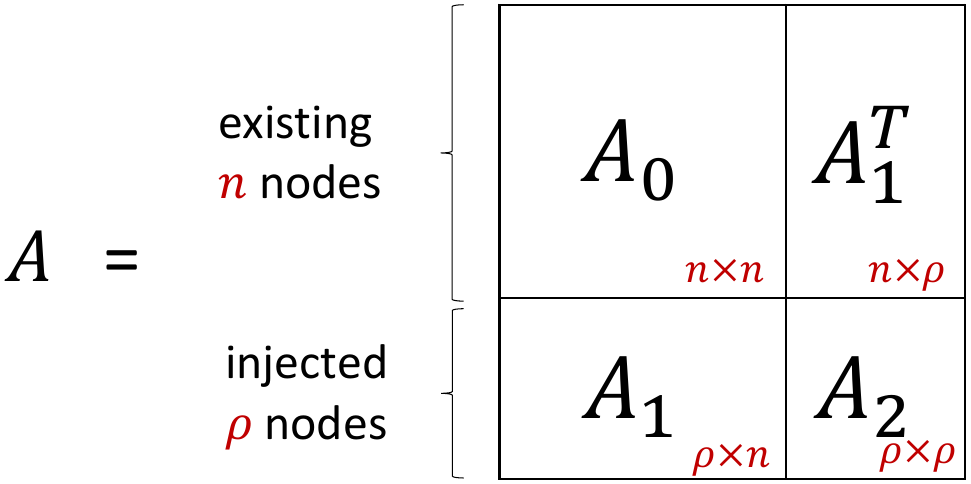}
   \caption{Illustration of adjacency matrix notation.}
   \label{fig:matrix}
\end{figure}
Next, we divide the adjacency matrix $A$ into four parts as shown in Fig.\ref{fig:matrix}, and then the $A^2$ can be interpreted as:
\begin{align}
    A^2 = 
    \begin{bmatrix}
        (A_0A_0+ A_1^\top A_1)_{(n\times n)} & (A_0A_1^\top+ A_1^\top A_2)_{(\rho\times n)}   \\
        ( A_1A_0+A_2A_1)_{(\rho\times n)} &  (A_1A_1^\top +A_2 A_2)_{(\rho\times\rho)}
    \end{bmatrix}.\nonumber
\end{align}
Then, the $l_1$ norm of $A_{n:(n+\rho),v}^2$ can be represented as:
\begin{align}
\label{eq:A_nnp_norm_1_blockbyblock}
     [||A_{n:(n+\rho),1}^2||_1,||A_{n:(n+\rho),2}^2||_1,\cdots,||A_{n:(n+\rho),n}^2||_1]^\top = (A_1A_0+A_2A_1)\mathbf{1}_{\rho}.
\end{align}
Also, same as above, together with Fig.\ref{fig:matrix}, $||A_{\tilde{v}:}||_1$ is described as:
\begin{align}
    \label{eq:A_v_norm_1_blockbyblock}
    [||A_{n:}||_1,||A_{(n+2):}||_1,\cdots,||A_{(n+\rho):}||_1]^\top = A_1\mathbf{1}_n + A_2\mathbf{1}_\rho.
\end{align}
Finally, combine \eqref{eq:A_nnp_norm_1_blockbyblock} and \eqref{eq:A_v_norm_1_blockbyblock}, problem~\eqref{opt:collective-polynomial_simplified_by_2} can be formulated as:
\begin{align}
    \max_{A_1,A_2,\mathbf{m}}&\quad  M=\mathbf{t}^\top \mathbf{m}, \nonumber\\ 
    \text{s.t.} \quad 
    & \tilde{p_1}A_1^\top\mathbf{1}_{\rho}  + \tilde{p_2}(A_1A_0+A_2A_1)^\top\mathbf{1}_{\rho}\leq \mathbf{C}\circ \mathbf{m},\nonumber\\
    &A_1\mathbf{1}_n +A_2 \mathbf{1}_{\rho} \leq \tau,\,A_2^{\top}=A_2, \nonumber\\
    & A_1\in\{0,1\}^{\rho\times n},\,A_2\in\{0,1\}^{\rho\times \rho},\,\mathbf{m}\in\{0,1\}^{n},\nonumber
\end{align}
where $\mathbf{t}$ is a constant zero-one vector that encodes the position of the target node set $\mathbb{T}$, $\mathbf{m}$ is a vector that indicates whether the nodes are successfully attacked, $\mathbf{C}\in \mathbb{R}^n$ is a vector with negative constant elements $log(1-\frac{c_v}{2})$, for $v=1,2,\cdots,n$.

\subsection{Formulating problem~\eqref{opt:collective-BQP} as Linear Programming Problem~\eqref{opt:collective-linear1}.}
Here, we discuss the details of the process of relaxing the BQCLP problem~\eqref{opt:collective-BQP} to the LP problem~\eqref{opt:collective-linear1}.
In problem~\eqref{opt:collective-BQP}, there are $\rho^2n$ quadratic terms among $A_2A_1$. To tackle the challenge, we introduce the following transformation to transform it into an LP problem. Specifically, we first substitute the quadratic terms with linear terms and relax all the binary variables to continuous variables in $[0,1]$.

If $x\in \mathbb{B}$, $y\in \mathbb{B}$ are two integer binary variables, then the quadratic term $xy$ can be substitute by a single variable $z:=xy$ with the combination of linear constraints~\cite{wei2020tutorials}: $z\leq x,\,z\leq y,\,x+y-z\leq 1,\,z\in \mathbb{B}$.
We use $a_{(ij)}$ and $b_{(ij)}$ to denotes the element in $i^{th}$ row and $j^{th}$ column of matrix $A_1$ and $A_2$ respectively. For each quadratic term $b_{(ij)}a_{(jv)}$ ($\forall i\in \{1,\cdots,\rho\},\forall j \in \{1,\cdots,\rho\},\forall v \in \{1,\cdots,n\}$) in $A_2A_1$, we create a substitution variable $Q_{v(ij)}:=b_{(ij)}a_{(jv)}$ with corresponding constraints: $Q_{v(ij)}\in \mathbb{B}$, $Q_{v(ij)}\leq b_{(ij)}$, $Q_{v(ij)}\leq a_{(jv)}$, and $b_{(ij)}+a_{(jv)}-Q_{v(ij)}\leq 1$. The existing linear terms remain unchanged. Now, the BQCLP problem has transformed into binary linear programming (BLP). 

Next, we formulate the problem using matrix representation. We firstly use $O$ to substitute $(A_2A_1)^\top \mathbf{1}_{\rho}$, and we have the first constraint as:
\begin{align}
    \label{constraint:cons1_substituted_with_O}
    \tilde{p_1}A_1^\top\mathbf{1}_{\rho} + \tilde{p_2} A_0^\top A_1^\top\mathbf{1}_{\rho} + \tilde{p_2}O &\leq \mathbf{C}\circ \mathbf{m}.\nonumber
\end{align}


We list the elements of the $A_1$ and $A_2$ as follows:
\begin{equation}
    \label{eq:definition of A1A2}
    A_1 =\begin{bmatrix}
        a_{11} & a_{12} & a_{13} & \cdots & a_{1n} \\
        a_{21} &  &  &  &  \\
        a_{31} &  & \ddots &  & \vdots \\
        \vdots &  &  &  &  \\
        a_{\rho1} &  & \cdots &  & a_{\rho n} 
        \end{bmatrix},  \,
    A_2 =\begin{bmatrix}
        b_{11} & b_{12} & b_{13} & \cdots & b_{1\rho} \\
        b_{21} &  &  &  &  \\
        b_{31} &  & \ddots &  & \vdots \\
        \vdots &  &  &  &  \\
        b_{\rho1} &  & \cdots &  & b_{\rho\rho} 
        \end{bmatrix}.\\ 
\end{equation}
Then, the matrix multiplication of $A_2$ and $A_1$ is
\begin{equation}
    A_2A_1 = 
    \begin{bmatrix}
    {b_{11}a_{11}+b_{12}a_{21}+\cdots+b_{1\rho}a_{\rho 1}} &{b_{11}a_{12}+b_{12}a_{22}+\cdots+b_{1\rho}a_{\rho 2}}& \cdots & {b_{11}a_{1n}+b_{12}a_{2n}+\cdots +b_{1\rho}a_{\rho n}}\\
    {b_{21}a_{11}+b_{22}a_{21}+\cdots+b_{2\rho}a_{\rho 1}} &{b_{21}a_{12}+b_{22}a_{22}+\cdots+b_{2\rho}a_{\rho 2}}& \cdots & {b_{21}a_{1n}+b_{22}a_{2n}+\cdots +b_{2\rho}a_{\rho n}}\\
    \vdots & \vdots & \ddots& \vdots \\
    {b_{\rho 1}a_{11}+b_{\rho 2}a_{21}+\cdots+b_{\rho \rho}a_{\rho 1}} &{b_{\rho 1}a_{12}+b_{\rho 2}a_{22}+\cdots+b_{\rho \rho}a_{\rho 2}}& \cdots & {b_{\rho 1}a_{1n}+b_{\rho 2}a_{2n}+\cdots+b_{\rho \rho}a_{\rho n}}
    \end{bmatrix}.
    \nonumber
\end{equation}

By the definition of matrix $Q_v$, for $v\in\{1,2,\cdots,n\}$, we have the following equivalent representation:
\begin{align}
    \label{eq:Q_v_A2_A1}
    Q_v = 
    \begin{bmatrix}
    Q_{v(11)} & Q_{v(12)} &\cdots & Q_{v(1 \rho)} \\
    Q_{v(21)} & Q_{v(22)} &  & Q_{v(2 \rho)} \\
    \vdots & \vdots & \ddots &  \vdots\\
    Q_{v(\rho 1)} & Q_{v(\rho 2)} & \cdots & Q_{v(\rho \rho)} 
    \end{bmatrix} &:=
    \begin{bmatrix}
    b_{11}a_{1v} & b_{21}a_{1v} &\cdots & b_{\rho 1}a_{1v} \\
    b_{12}a_{2v} & b_{22}a_{2v} &  & b_{\rho 2}a_{2v} \\
    \vdots & \vdots & \ddots &  \vdots\\
    b_{1\rho}a_{\rho v} & b_{2\rho}a_{\rho v} & \cdots & b_{\rho \rho}a_{\rho v}  
    \end{bmatrix}. \nonumber
\end{align}

We notice that $(A_2A_1)^\top \mathbf{1}_{\rho}$ is to sum the $A_2A_1$ by its column, and each $Q_v$ contains all the terms for each vector summation. Then we have
$O = (A_2A_1)^\top = [\mathbf{1}_{\rho}^\top Q_1\mathbf{1}_{\rho},\mathbf{1}_{\rho}^\top Q_2\mathbf{1}_{\rho},\cdots, \mathbf{1}_{\rho}^\top Q_n\mathbf{1}_{\rho}]^\top$.

Further, by decomposing the meaning of $Q_v$, we have 
\begin{align}
    Q_v := 
    \begin{bmatrix}
    b_{11}& b_{21} &\cdots & b_{\rho 1} \\
    b_{12} & b_{22} &\cdots  & b_{\rho 2} \\
    \vdots & \vdots & \ddots &  \vdots\\
    b_{1\rho} & b_{2\rho} & \cdots & b_{\rho \rho}
    \end{bmatrix}
    \circ
    \begin{bmatrix}
    a_{1v} & a_{1v} &\cdots  & a_{1v} \\
    a_{2v} & a_{2v} &\cdots  & a_{2v} \\
    \vdots & \vdots & \ddots &  \vdots\\
    a_{\rho v} & a_{\rho v} & \cdots & a_{\rho v}
    \end{bmatrix} \nonumber
    = A_2 \circ 
    \mathbf{1}_{\rho} \begin{bmatrix}
        a_{1v}\\
        a_{2v}\\
        \vdots\\
        a_{\rho v}
    \end{bmatrix}^\top \nonumber
    = A_2
    \circ
    \mathbf{1}_{\rho} {[A_{1(:,v)}]}^\top.
\end{align}
To make the $Q_v$ equivalent to the quadratic terms, for every $Q_v$, we need to add its constraints:  
\begin{align}
    Q_v \leq A_2, \, Q_v &\leq \mathbf{1}_{\rho} {[A_{1(:,v)}]}^\top,\,\mathbf{1}_{\rho}{[A_{1(:,v)}]}^\top+A_2-Q_v \leq 1.\nonumber
\end{align}
Finally, we relaxed $A_1$, $A_2$, $Q_v$ to relax all the binary variables to continuous variables in $[0, 1]$:
\begin{align}
    Q_v\in [0,1]^{\rho\times \rho},\, A_1\in[0,1]^{\rho\times n},\,A_2\in[0,1]^{\rho\times \rho},\,\mathbf{m}\in[0,1]^{n}.\nonumber
\end{align}

Then we have the linear programming problem \eqref{opt:collective-linear1} as follows:
\begin{align}
\max_{A_1,A_2,m,\atop Q_1,Q_2,\cdots,Q_n}\quad & M=\mathbf{t}^\top \mathbf{m},\nonumber \\ 
\text{s.t.} \quad 
& \tilde{p_1}A_1^\top\mathbf{1}_{\rho} + \tilde{p_2} A_0^\top A_1^\top\mathbf{1}_{\rho} + \tilde{p_2}O \leq \mathbf{C}\circ \mathbf{m}\nonumber\\
&A_1\mathbf{1}_n +A_2 \mathbf{1}_{\rho} \leq \tau, \nonumber\\
&Q_v=(Q_{v(ij)})_{\rho \times \rho},\, v\in\{1,2,\cdots,n\}, \nonumber\\
&O=[\mathbf{1}_{\rho}^\top Q_1\mathbf{1}_{\rho},\mathbf{1}_{\rho}^\top Q_2\mathbf{1}_{\rho},\cdots, \mathbf{1}_{\rho}^\top Q_n\mathbf{1}_{\rho}]^\top, \nonumber\\
&Q_v \leq \mathbf{1}_{\rho} {[A_{1(:,v)}]}^\top,\nonumber\\ 
&Q_v\leq A_2,\nonumber\\
&\mathbf{1}_{\rho} {[A_{1(:,v)}]}^\top+A_2-Q_v \leq 1,\nonumber\\
&Q_v\in [0,1]^{\rho\times \rho},\nonumber\\
& A_1\in[0,1]^{\rho\times n},\nonumber\\
& A_2\in[0,1]^{\rho\times \rho},\nonumber\\
& A_2^{\top}=A_2,\nonumber\\
& \mathbf{m}\in[0,1]^{n}.\nonumber
\end{align}

\subsection{Formulating problem~\eqref{opt:collective-BQP} as Linear Programming Problem~\eqref{opt:collective-linear2}.}
\label{}
We start from \eqref{opt:collective-BQP}, and we have the first constraint:
\begin{equation}
    \tilde{p_1}A_1^\top\mathbf{1}_{\rho}  + \tilde{p_2}A_0^\top A_1^\top \mathbf{1}_{\rho}+\tilde{p_2}A_1^\top A_2^\top\mathbf{1}_{\rho}\leq \mathbf{C}\circ  \mathbf{m}.\nonumber\\
\end{equation}
Then, we substitute $A_2^\top\mathbf{1}_{\rho}$ with $\mathbf{z}$,
\begin{align}
\label{eq:definition of Z}
    \mathbf{z}:=A_2^\top\mathbf{1}_{\rho}
    =\begin{bmatrix}
        b_{11} & b_{12} & b_{13} & \cdots & b_{1\rho} \\
        b_{21} &  &  &  &  \\
        b_{31} &  & \ddots &  & \vdots \\
        \vdots &  &  &  &  \\
        b_{\rho1} &  & \cdots &  & b_{\rho\rho} 
        \end{bmatrix}_{(\rho,\rho)}
        \begin{bmatrix}
            1\\
            1\\
            1\\
            \vdots\\
            1
        \end{bmatrix}_{(\rho,1)}=
        \begin{bmatrix}
            b_{11}+b_{12}+b_{13}+\cdots+b_{1\rho}\\
            b_{21}+b_{22}+b_{23}+\cdots+b_{2\rho}\\
            \vdots\\
            b_{\rho 1}+b_{\rho 2}+b_{\rho 3}+\cdots+b_{\rho \rho}
        \end{bmatrix}_{(\rho, 1)}. 
\end{align}
Then, from \eqref{eq:definition of Z}, the constraint is transformed into 
\begin{align}
\label{constraint:trasformed constraint with Z}
        \tilde{p_1}A_1^\top\mathbf{1}_{\rho}  + \tilde{p_2}A_0^\top A_1^\top \mathbf{1}_{\rho}+\tilde{p_2}A_1^\top \mathbf{z} \leq \mathbf{C}\circ \mathbf{m},  \\
        z_i \in \{0,1,2,\cdots,min(\tau,\rho)\}\ \ \ \forall i \in \{0,1,2,\cdots,\rho\}. \nonumber
\end{align}

In \eqref{opt:collective-BQP}, since there exists the constraint: $A_1\mathbf{1}_n +A_2 \mathbf{1}_{\rho} \leq \tau$, so we have $z_i$ satisfies $z_i \in \{0,1,2,\cdots,min(\tau,\rho)\}$. Next, we deal with the quadratic term $A_1^\top \mathbf{z}$. 

If $x\in \mathbb{B}$ is a binary variable, and $z\in [0,u]$ is a continuous variable, then the quadratic term $xy$ can be substitute by a single variable $z:=xy$ with the combination of linear constraints~\cite{wei2020tutorials}: $w\leq ux, w\leq z, ux+z-w\leq u, 0\leq w$. To apply it, we first relax the $\mathbf{z}$ to $[0,min(\tau,\rho)]$. 

We know that $A_1^\top \mathbf{z}$ satisfies that
\begin{align}
    A_1^\top \mathbf{z} = \begin{bmatrix}
        a_{11} & a_{21} & a_{31} & \cdots & a_{\rho 1} \\
        a_{12} & a_{22} & a_{32} & \cdots & a_{\rho 2} \\
        a_{13} & a_{23} & a_{33} & \cdots & \vdots \\
        \vdots & \vdots & \vdots  & \ddots & \vdots \\
        a_{1n} & a_{2n} & a_{3n} & \cdots & a_{\rho n} 
        \end{bmatrix}
        \begin{bmatrix}
            z_1\\
            z_2\\
            z_3\\
            \vdots\\
            z_\rho
        \end{bmatrix} =
       \begin{bmatrix}
           a_{11} z_1+a_{21} z_2 + \cdots + a_{ \rho 1}z_\rho \\
           a_{12} z_1+a_{22} z_2 + \cdots + a_{ \rho 2}z_\rho\\
           \vdots\\
           a_{1n} z_1+a_{2n} z_2 + \cdots + a_{ \rho n}z_\rho
       \end{bmatrix}_{(n,1)}. \nonumber
\end{align}
Then, we create a new variable matrix $Q$ to substitute $A_1^\top \mathbf{z}$, with each of its element: $q_{ij} := a_{ji}z_{i}, (\forall i \in \{1,2,\cdots,n\},j \in \{1,2,\cdots,\rho\})  $. That is:
\begin{align}
    Q = 
    \begin{bmatrix}
        q_{11} & q_{12} & \cdots & q_{1 \rho}\\
        q_{21} & q_{22} & \cdots & q_{2 \rho}\\
        \vdots & \vdots & \ddots & \vdots\\
        q_{n1} & q_{n2} & \cdots & q_{n \rho}\\
    \end{bmatrix}
    =
       \begin{bmatrix}
           a_{11} z_1 & a_{21} z_2 & \cdots & a_{ \rho 1}z_\rho \\
           a_{12} z_1 & a_{22} z_2 & \cdots & a_{ \rho 2}z_\rho\\
           \vdots& \vdots & \ddots & \vdots\\
           a_{1n} z_1 & a_{2n} z_2 & \cdots & a_{ \rho n}z_\rho
       \end{bmatrix}. \nonumber
\end{align}
We now have $A_1^\top \mathbf{z} = Q\mathbf{1}_\rho$. 
Assuming that $\tau \leq \rho$, for each quadratic term $A^\top_{1(ij)}z_j$ ($\forall i\in \{1,\cdots,n\},\forall j \in \{1,\cdots,\rho\}$) in $A_1^\top \mathbf{z}$, we create a substitution variable $Q_{(ij)}=A^\top_{1(ij)}z_j$ with corresponding constraints: $0 \leq Q_{(ij)}$, $Q_{(ij)}\leq \tau A^\top_{1(ij)}$, $Q_{(ij)}\leq z_j$, and $\tau A^\top_{1(ij)} + z_j -Q_{(ij)} \leq \tau$. 
Further, with matrix notation, we have 
\begin{align}
\label{cons:Q constraint}
    0 \leq \ & Q\leq \tau A_1^\top,  \nonumber \\
    0 \leq 1_{n}\mathbf{z}^\top &- Q \leq \tau(1-A_1^\top), \\
    A_1 \in \{0,1\}, \mathbf{z} &\in [0,\tau] , Q\in [0,\tau]. \nonumber
\end{align}

Finally, we relax all the binary variables to be continuous variables, We have problem~\eqref{opt:collective-linear2} as follows:
\begin{align}
\max_{A_1,m,\atop Q \in \mathbb{R}^{n \times \rho}}\quad & M=\mathbf{t}^\top \mathbf{m},\\ 
\text{s.t.} \quad 
& \tilde{p_1}A_1^\top\mathbf{1}_{\rho} + \tilde{p_2} A_0^\top A_1^\top\mathbf{1}_{\rho} + \tilde{p_2} Q \mathbf{1}_{\rho} \leq \mathbf{C}\circ \mathbf{m},\nonumber\\
&A_1\mathbf{1}_n + \mathbf{z} \leq \tau, \nonumber\\
&Q \leq \tau A_{1}^\top,\nonumber\\ 
&Q\leq \mathbf{1}_n \mathbf{z}^\top,\nonumber\\
&\tau A^\top_{1}+\mathbf{1}_n \mathbf{z}^\top-Q \leq \tau,\nonumber\\
&Q \in[0,\tau]^{n\times \rho},\nonumber\\
& A_1\in[0,1]^{\rho\times n},\nonumber\\
& \mathbf{z}\in  [0,\tau]^{\rho\times 1},\nonumber\\
& \mathbf{m}\in[0,1]^{n}.\nonumber
\end{align}

\section{Algorithm of our proposed methods}
\paragraph{Train a base classifier $f$.}Following the work of \cite{lai2023node}, our first step is to train a graph model to serve as the base classifier. To enhance the model's generalization ability on the smoothing samples, we incorporate random noise augmentation during the training process. The training procedure is summarized in Algorithm \ref{alg:training_noise}, providing an overview of the steps involved. Given a clean graph $G$, a smoothing distribution $\phi(G)$ with smoothing parameters $p_e$ and $p_n$, and the number of training epochs $E$, the algorithm iteratively trains the model on randomly generated graphs. In each epoch, a random graph $G_e$ is drawn from the smoothing distribution $\phi(G)$. The model is then trained on the training nodes using this randomly generated graph. This process is repeated for the specified number of training epochs.

\begin{algorithm} 
\caption{Graph model training~\cite{lai2023node}.}  
\label{alg:training_noise}  
\begin{algorithmic}[1]   
\REQUIRE Clean graph $G$, smoothing distribution $\phi(G)$ with smoothing parameters $p_e$ and $p_n$, training epoch $E$.
\FOR{$e=1,\cdots,E$}
\STATE{Draw a random graph $G_e\sim \phi(G)$. }
\STATE{$f=train\_model(f(G_e))$ on training nodes.}
\ENDFOR
\STATE \textbf{return} A base classifier $f(\cdot)$.
\end{algorithmic}  
\end{algorithm}

\paragraph{Obtaining prediction probability of smoothed classifier $g$.}Next, we need to obtain the prediction results of a smoothed classifier. 
As depicted in Algorithm \ref{alg:monte_carlo}, we sample $N$ graphs $G_1, G_2, \ldots, G_N$ from the smoothed distribution $\phi(G) = (\phi_e(G), \phi_n(G))$ based on the base classifier $f$. 
To estimate the probabilistic prediction, we employ a Monte Carlo process. For each sampled graph $G_i$, we calculate the prediction probability $p_{v,y}(G)$, which represents the frequency of the predicted class $y$ for the vertex $v$. This can be approximated as $p_{v,y}(G) \approx \sum_{i=1}^{N} \mathbb{I}(f_v(G_i) = y) / N$, where $\mathbb{I}$ is the indicator function.

Let denote the top class probability $p_A:=p_{v,y^*}(G)$ and runner-up class probability $p_B:=max_{y\neq y^*}p_{v,y}(G)$, we want to bound the impact of randomness. Specifically, we compute the lower bound of $p_A$ (denoted as $\underline{p_A}$) and upper bound of $p_B$ (denoted as $\overline{p_B}$). Applying the Clopper-Pearson Bernoulli confidence interval, we obtain the $\underline{p_A}$ and the $\overline{p_B}$ under a confidence level of $\alpha/C$, where $C$ represents the number of classes in the model. 

\begin{algorithm}  
\caption{Monte Carlo sampling~\cite{lai2023node}.}  
\label{alg:monte_carlo}
\begin{algorithmic}[1]  
\REQUIRE Clean graph $G$, smoothing distribution $\phi(G)$ with smoothing parameters $p_e$ and $p_n$, trained base classifier $f(\cdot)$, sample number $N$, confidence level $\alpha$.
\STATE{Draw $N$ random graphs $\{G_i|\sim G_i \sim \phi(G)\}_{i=1}^N$.}
\STATE{$counts=|\{i: f(G_i)=y\}|$, for $y=1, \cdots, C$.}
\STATE{$y_A,y_B=$ top two indices in $counts$.}
\STATE{$n_A,n_B=counts[y_A],counts[y_B]$.}
\STATE{$\underline{p_A},\overline{p_B}=\text{CP\_Bernolli}(n_A,n_B,N,\alpha)$.}
\STATE{ \textbf{return} $\underline{p_A}$, $\overline{p_B}$.}
\end{algorithmic}  
\end{algorithm}

\paragraph{Collective certification via solving an optimization problem.} We obtain the collective certified robustness by solving the optimization problem problem~\eqref{opt:collective-linear1} or \eqref{opt:collective-linear2}. The process is described in Algorithm \ref{alg:optimiaztion_problem}. 

In this algorithm, we first set up the constant $\tilde{p}_1$ and $\tilde{p}_2$ based on the given smoothing parameters $p_e$ and $p_n$. Next, for each node $v$ in the target node set $\mathbb{T}$, we obtain the lower bound $\underline{p_A}$ and the upper bound $\overline{p_B}$ using Algorithm \ref{alg:monte_carlo}. These bounds are based on the prediction probabilities of the smoothed classifier for the current node $v$. We then compute the value $c_v = \underline{p_A} - \overline{p_B}$ and prepare the constant vector $\mathbf{C}$ with elements $\log(1 - \frac{c_v}{2})$ for each node $v$. The objective function of the optimization problem is based on either $\eqref{opt:collective-linear1}$ or $\eqref{opt:collective-linear2}$, depending on the chosen formulation. The constraints are also set up accordingly. Finally, we solve the linear programming using an LP solver, such as MOSEK, to obtain the optimal value $M^*$. The certified ratio, which represents the percentage of nodes in the target set $\mathbb{T}$ that have been successfully certified, is then computed as $(|\mathbb{T}| - M^*)/|\mathbb{T}|$.

\begin{algorithm}
\caption{Certified robustness via solving optimization problem~\eqref{opt:collective-linear1} or \eqref{opt:collective-linear2}.}  
\label{alg:optimiaztion_problem}
\begin{algorithmic}[1]  
\REQUIRE Smoothing parameters $p_e$ and $p_n$, graph adjacent matrix $A_0$, perturbation budget $\rho$ and $\tau$, target node set $\mathbb{T}$.
\STATE{Set constant $\tilde{p}_1=log(1-(\bar{p}_e\bar{p}_n))$.}
\STATE{Set constant  $\tilde{p}_2=log(1-(\bar{p}_e\bar{p}_n)^2)$.}
\FOR{$v$ in $\mathbb{T}$}
\STATE{Obtain $\underline{p_A}$, $\overline{p_B}$ from Algorithm.~\ref{alg:monte_carlo} for current node $v$.}
\STATE{Compute $c_v=\underline{p_A}-\overline{p_B}$.}
\STATE{Prepare constant vector $\mathbf{C}$ with each element: $log(1-\frac{c_v}{2})$.}
\ENDFOR
\STATE{Setup objective function in \eqref{opt:collective-linear1} or \eqref{opt:collective-linear2}.}
\STATE{Setup constraints in \eqref{opt:collective-linear1} or \eqref{opt:collective-linear2}.}
\STATE{Solve the optimization problem using  LP solver such as MOSEK to get $M^*$.}
\STATE{\textbf{Return} Certified ratio $(|\mathbb{T}|-M^*)/|\mathbb{T}|$.}
\end{algorithmic} 
\end{algorithm}

\section{Other Experimental Results}
\label{Sec:Appendix_D}
\subsection{Trade off between Clean accuracy and the certified ratio on GCN model}
\label{Sec:Appendix_GCN}

In this section, we present the remaining experiments as outlined in Section. \ref{sec:experiments}. A superior certifying method should not only achieve a higher certified ratio but also maintain or improve the clear accuracy, which represents the original model's performance. We compare the results of these two metrics for our method under different parameter settings as shown in Figure. \ref{fig:clean_certified_ratio}. In the figures, the data points situated closer to the upper right side represent higher certified ratios and clean accuracy. It is evident that both of our proposed methods consistently outperform the sample-wise method, demonstrating their superior performance under various attacker power $\rho$.
\begin{figure}[hbt!]
\centering
    \subfigure[]{
    \includegraphics[width=0.19\textwidth,height=2.8cm]{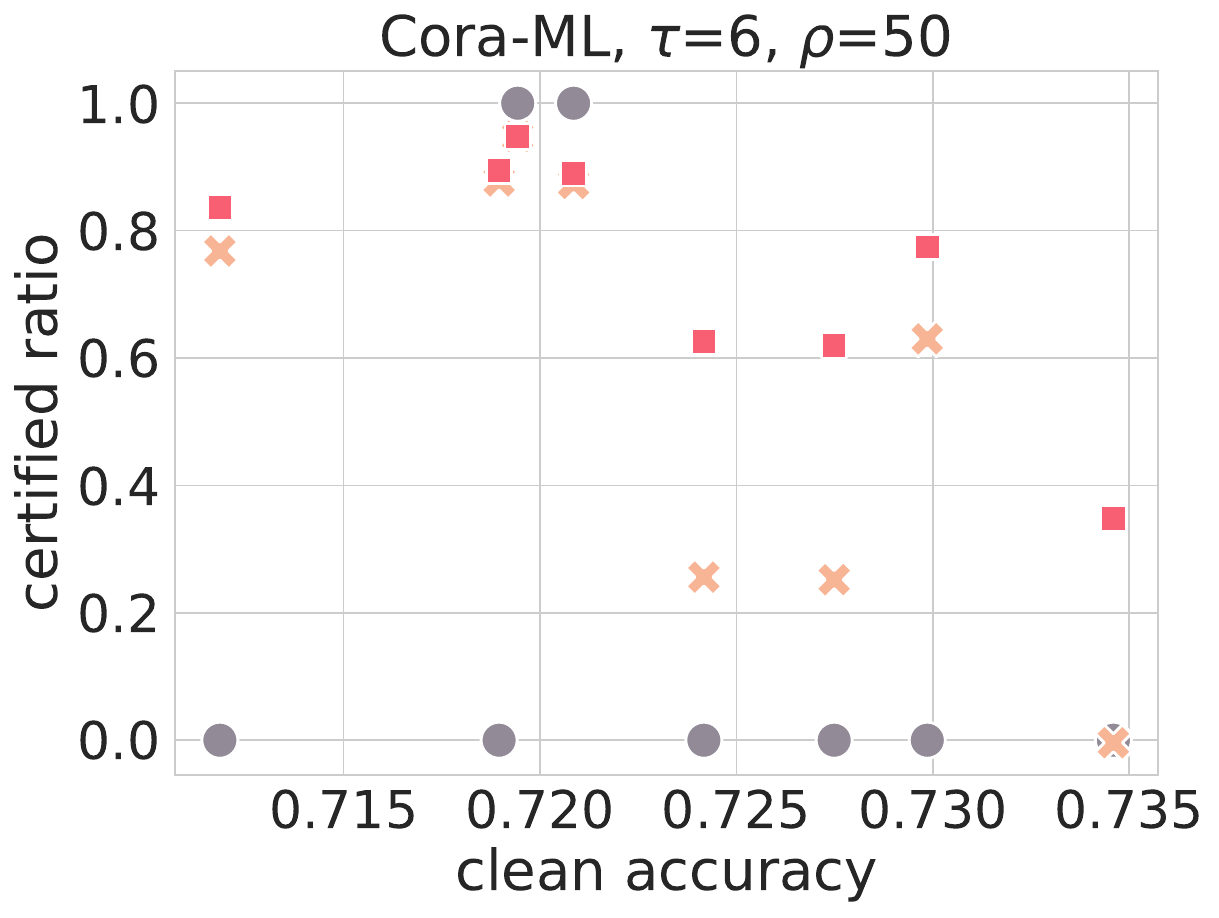}
    }
    \subfigure[]{
    \includegraphics[width=0.255\textwidth,height=2.8cm]{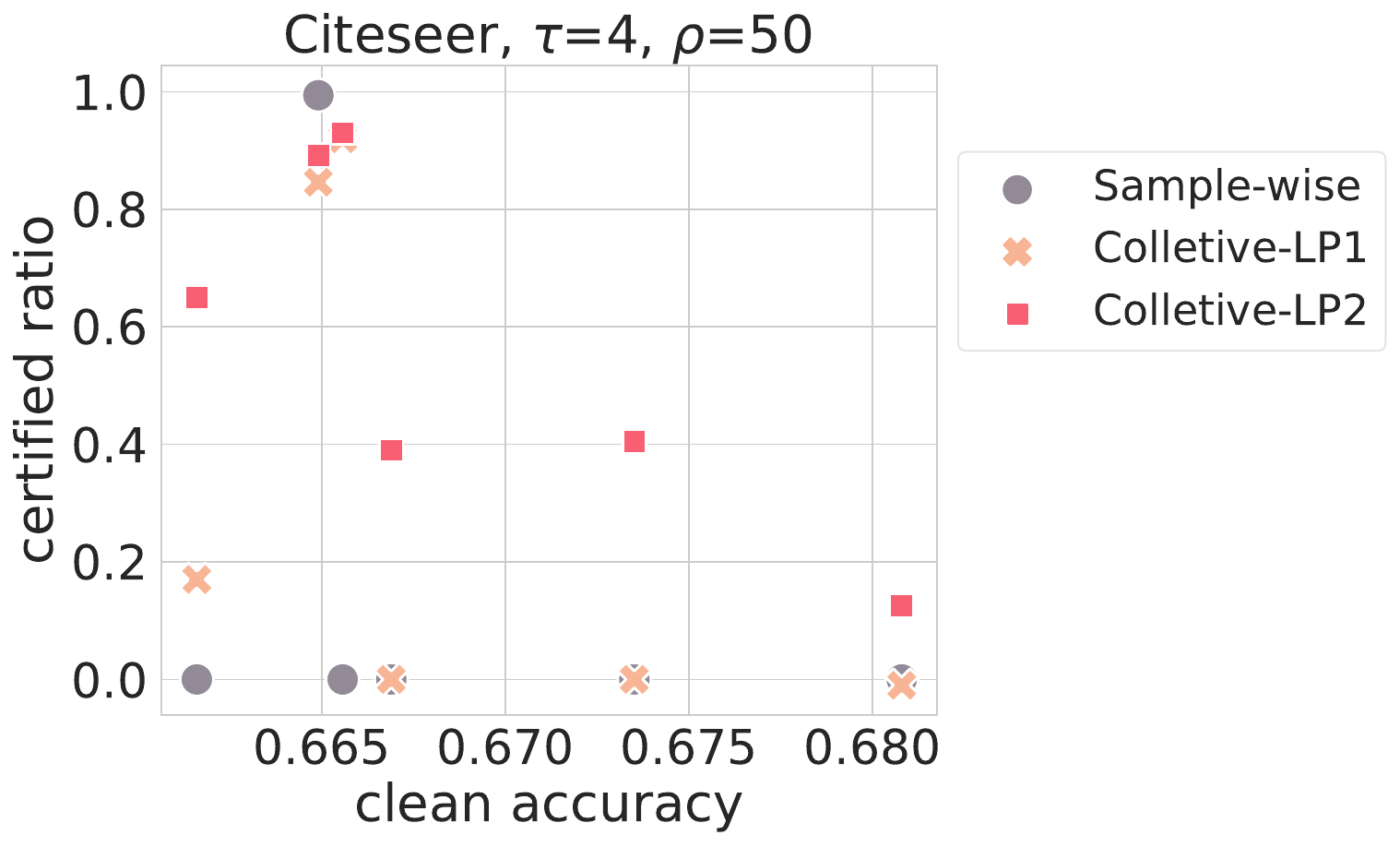}
    }
    \subfigure[]{
    \includegraphics[width=0.19\textwidth,height=2.8cm]{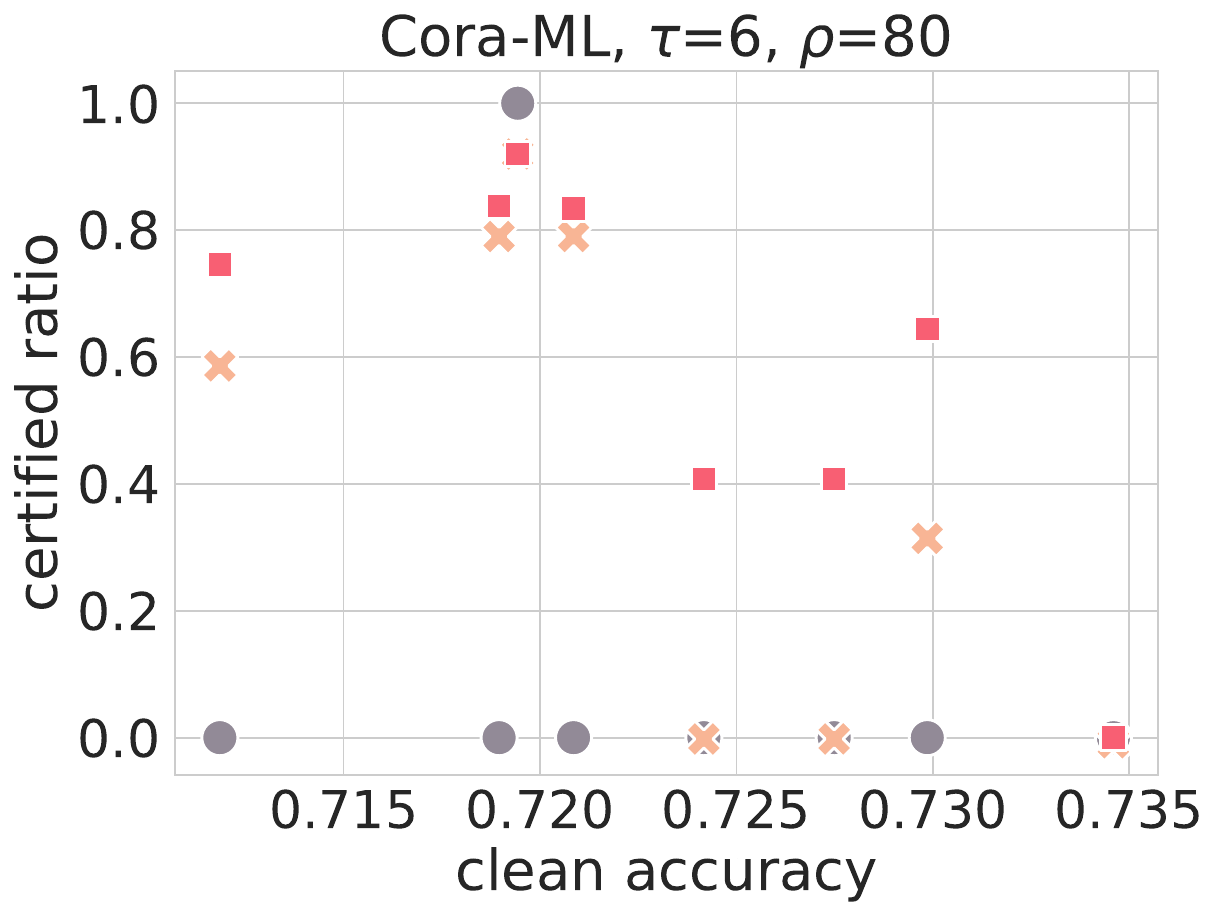}
    }
    \subfigure[]{
    \includegraphics[width=0.255\textwidth,height=2.8cm]{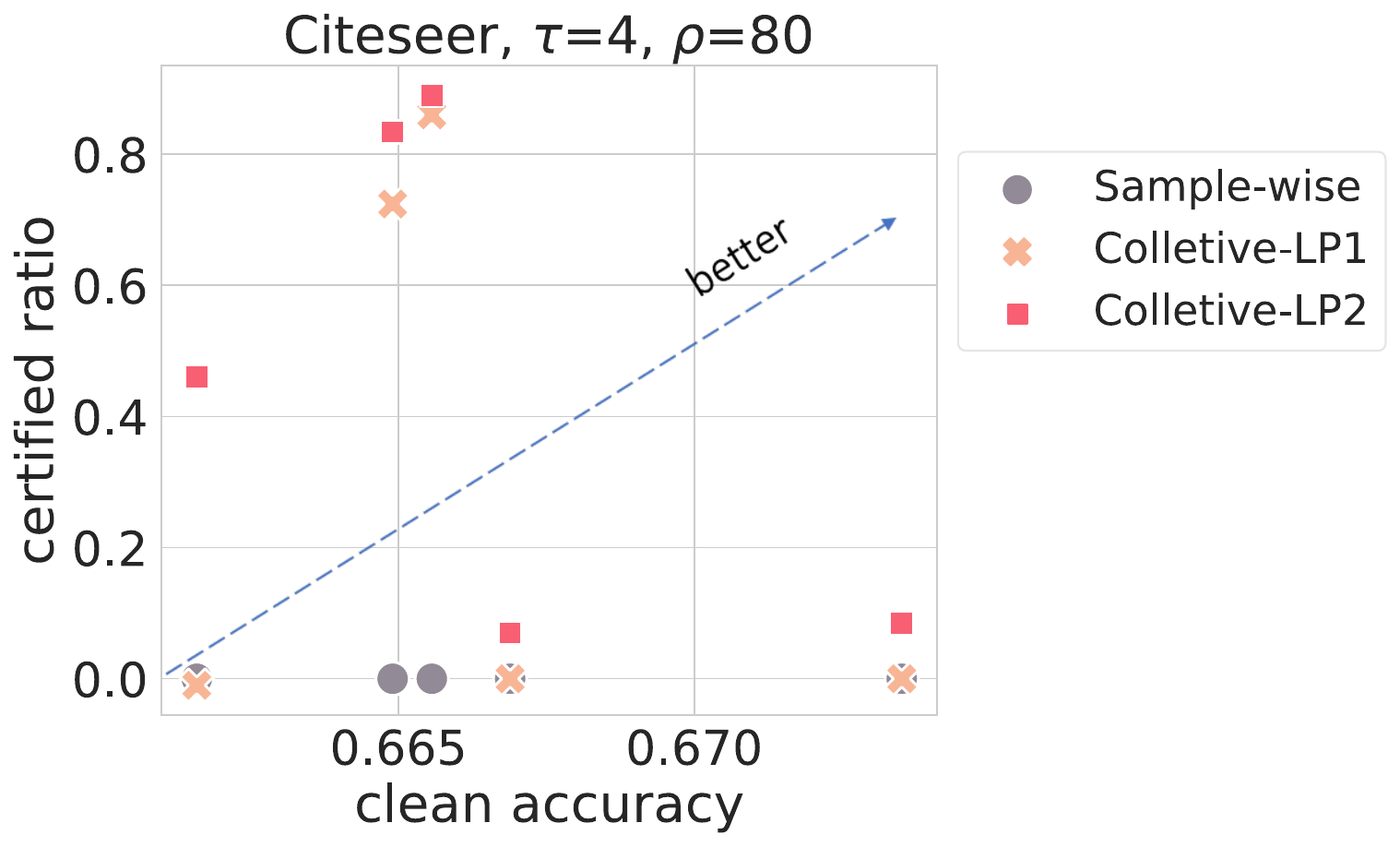}
    }
    \subfigure[]{
    \includegraphics[width=0.19\textwidth,height=2.8cm]{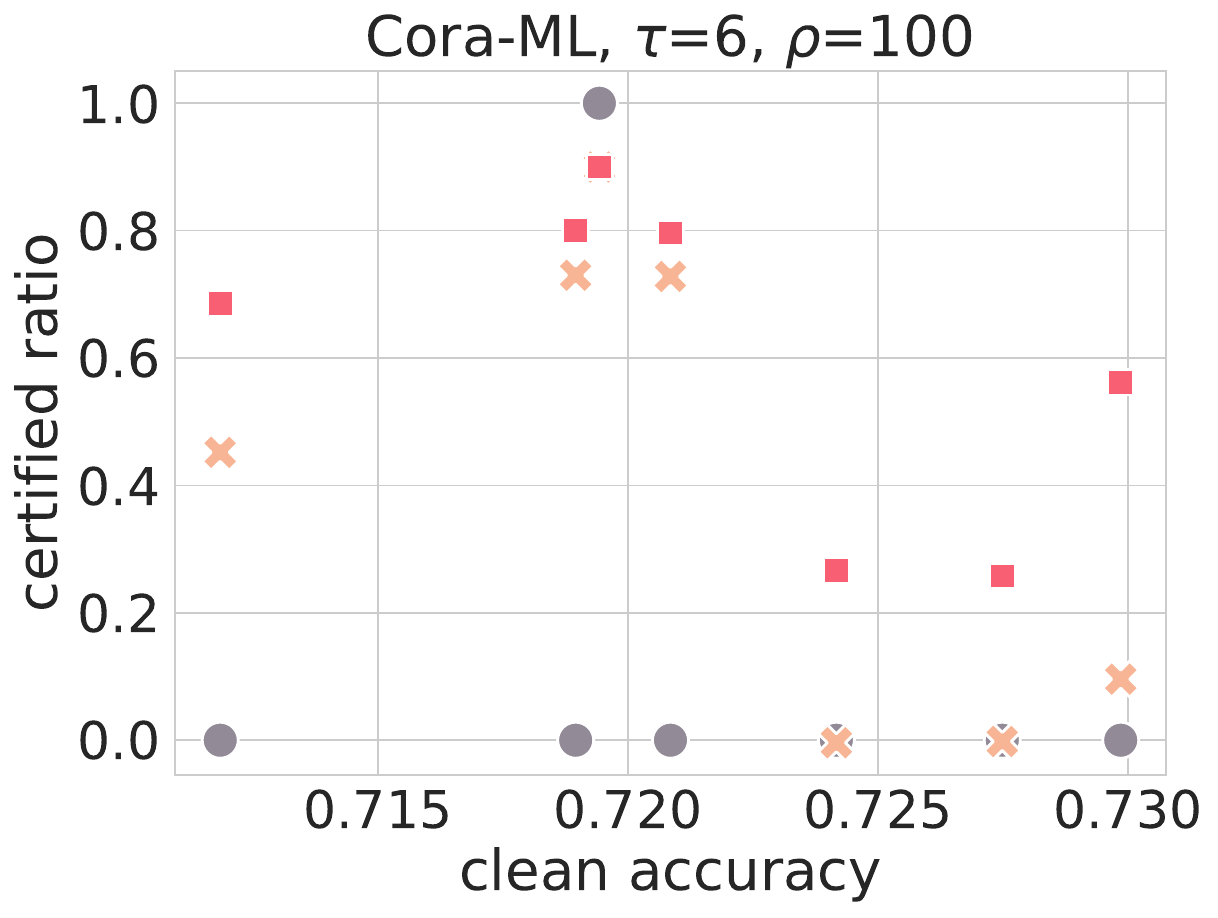}
    }
    \subfigure[]{
    \includegraphics[width=0.255\textwidth,height=2.8cm]{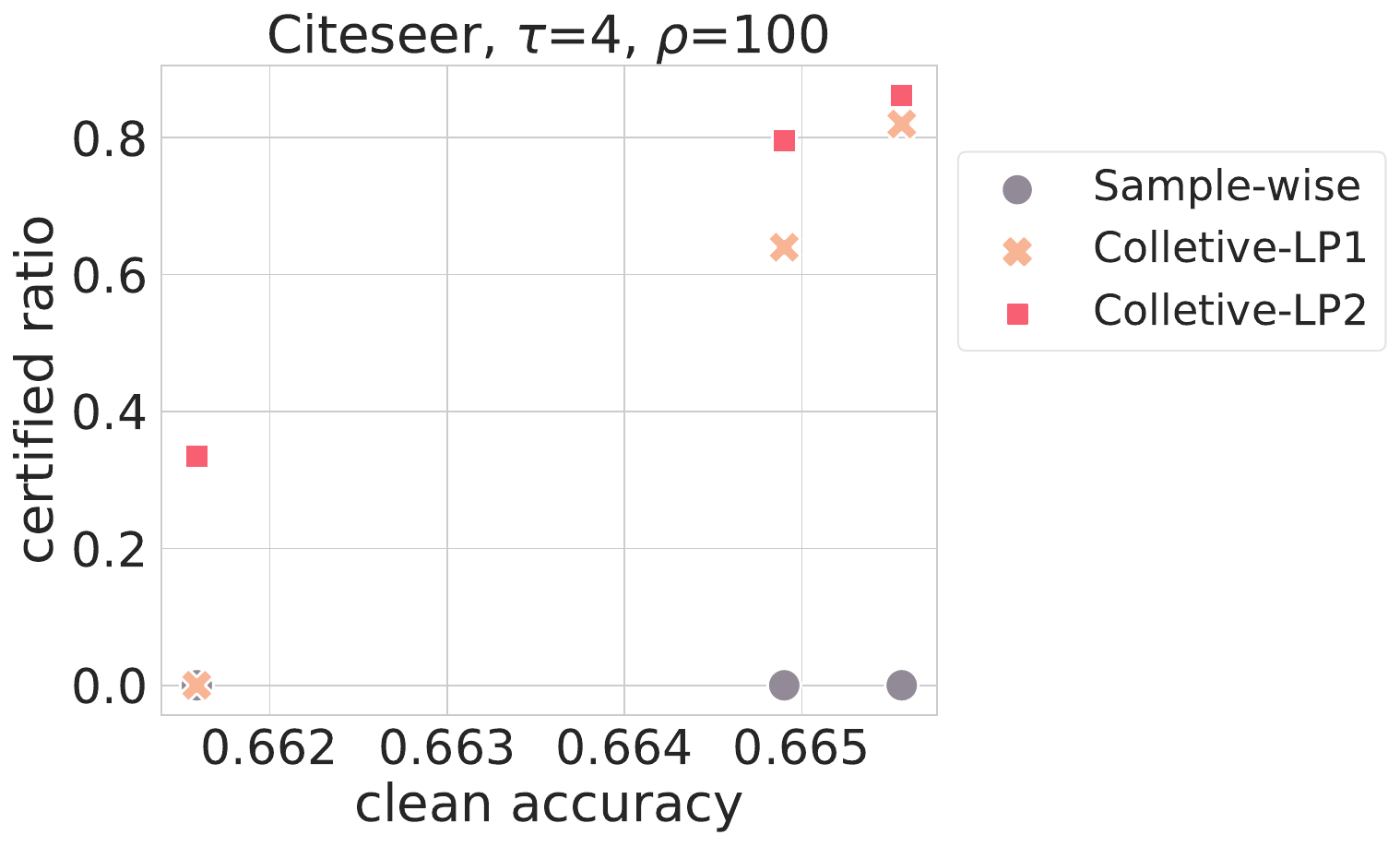}
    }
    \subfigure[]{
    \includegraphics[width=0.19\textwidth,height=2.8cm]{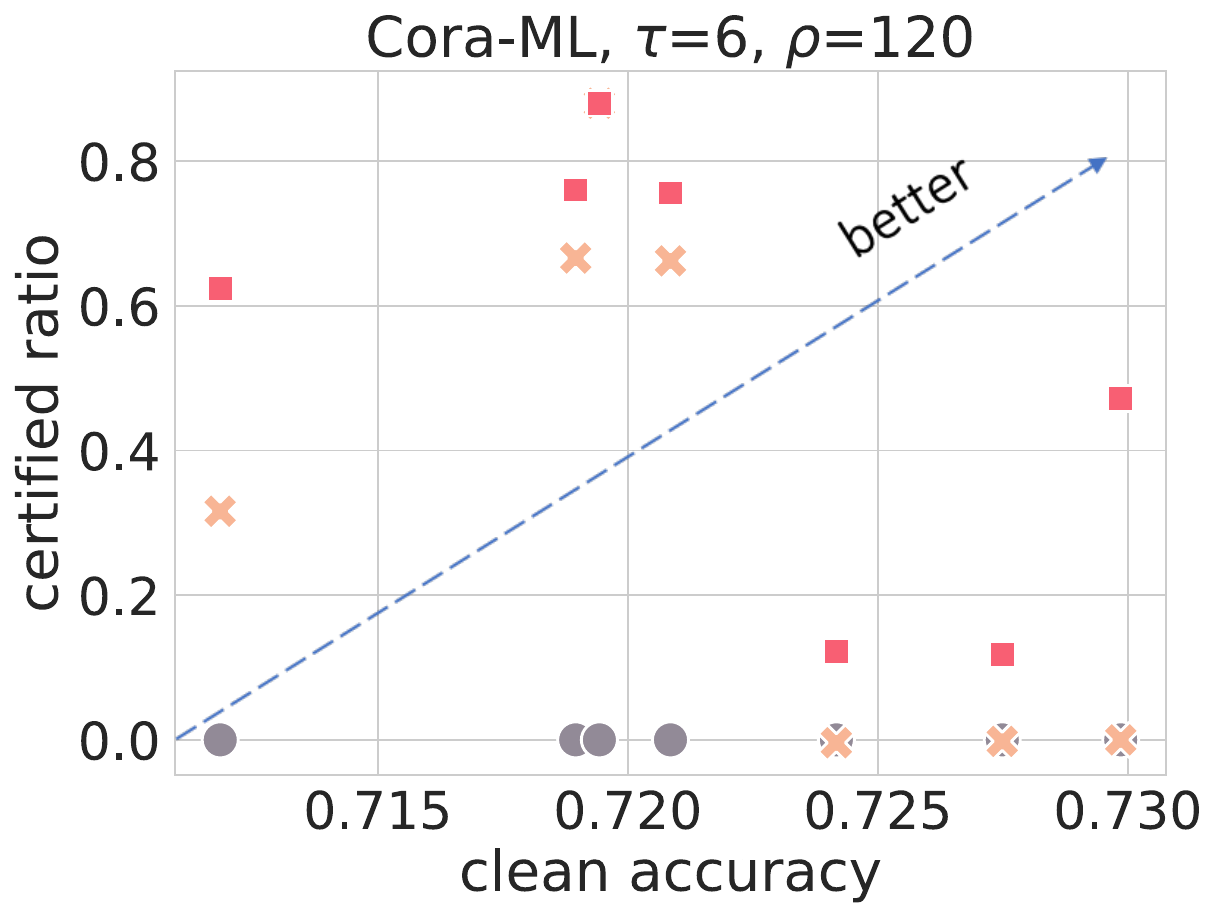}
    }
    \subfigure[]{
    \includegraphics[width=0.255\textwidth,height=2.8cm]{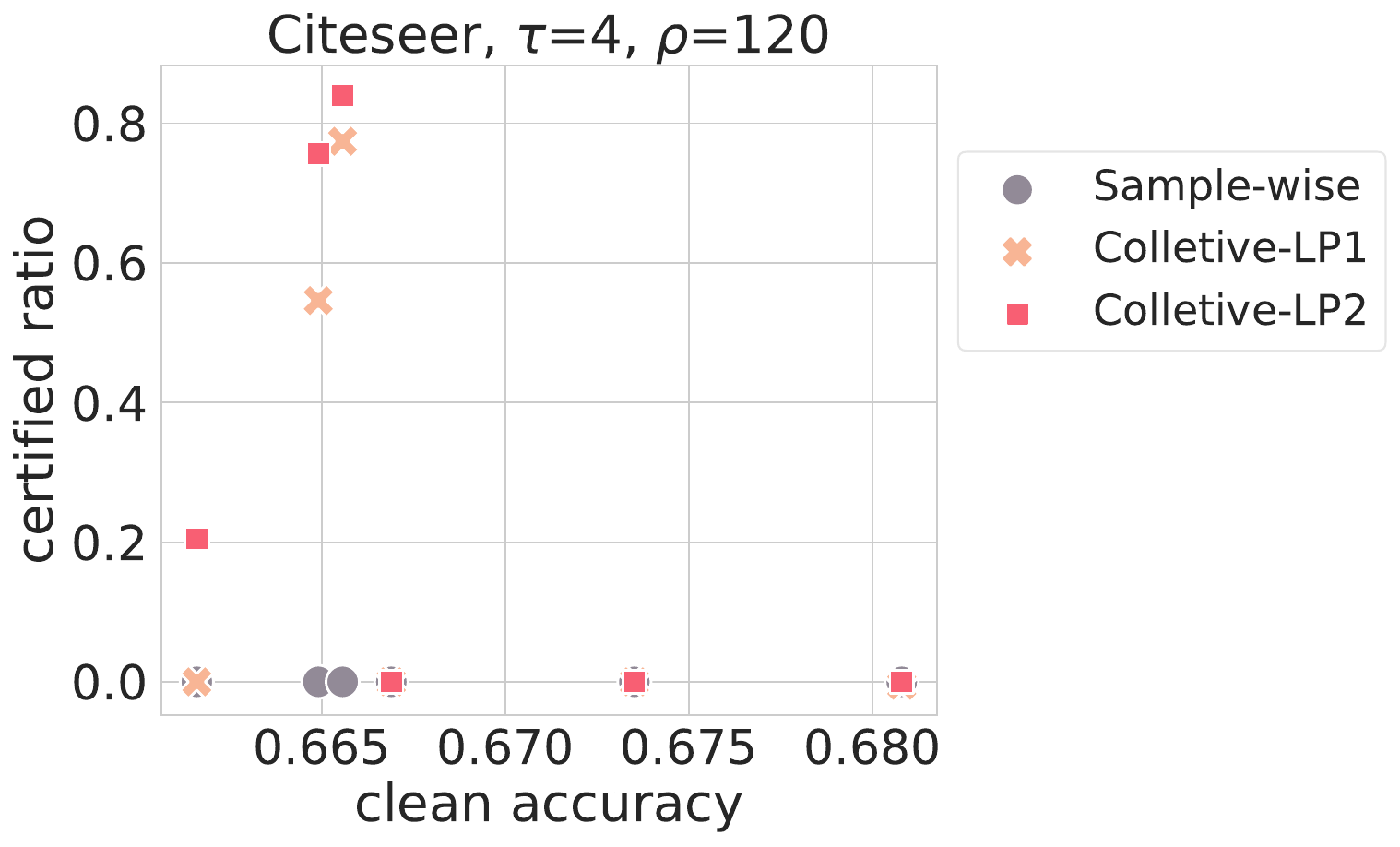}
    }
\caption{Clean accuracy and the certified ratio of our collective model under various smoothing parameters on GCN model.}
\label{fig:clean_certified_ratio}
\end{figure}

\begin{figure}[hbt!]
\centering
    \subfigure[]{
    \includegraphics[width=0.19\textwidth,height=2.8cm]{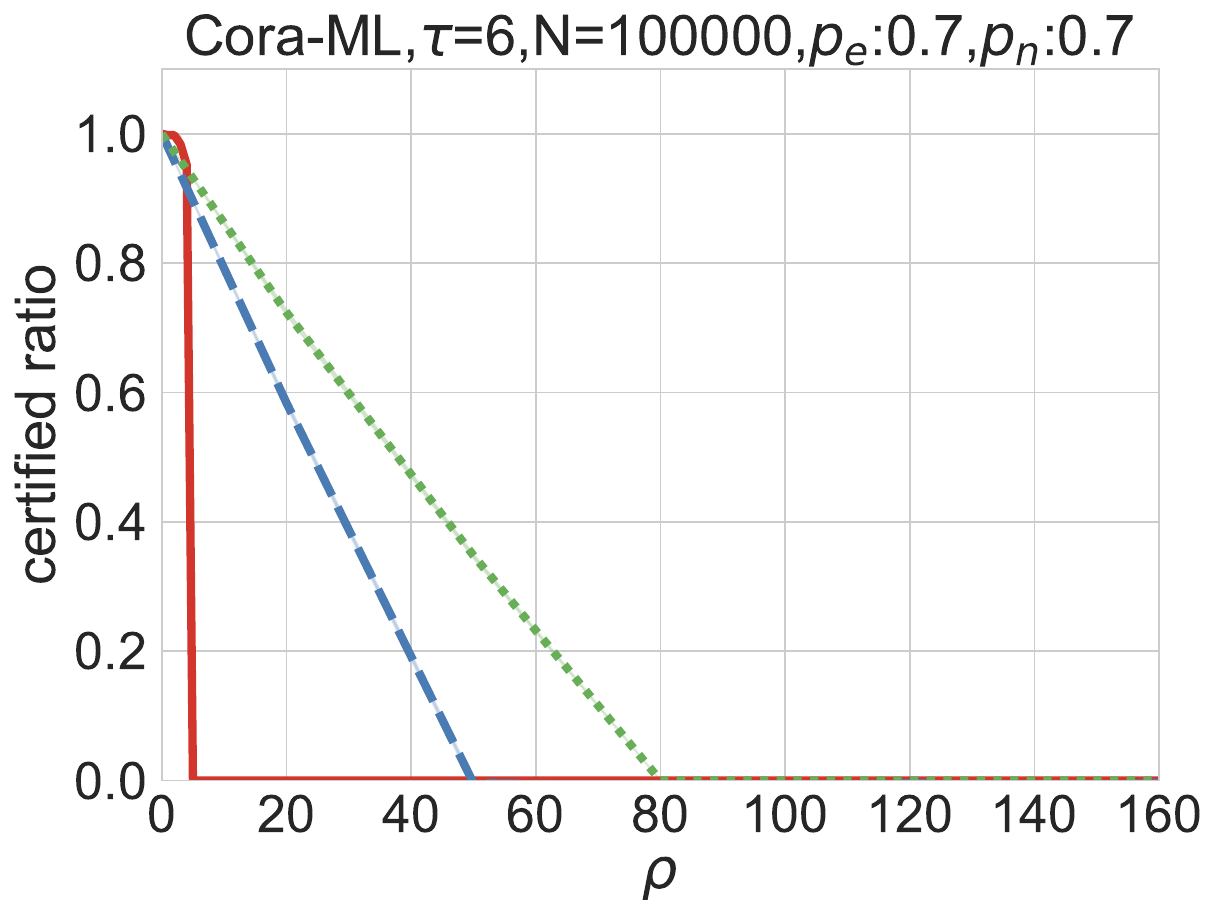}
    }
    \subfigure[]{
    \includegraphics[width=0.255\textwidth,height=2.8cm]{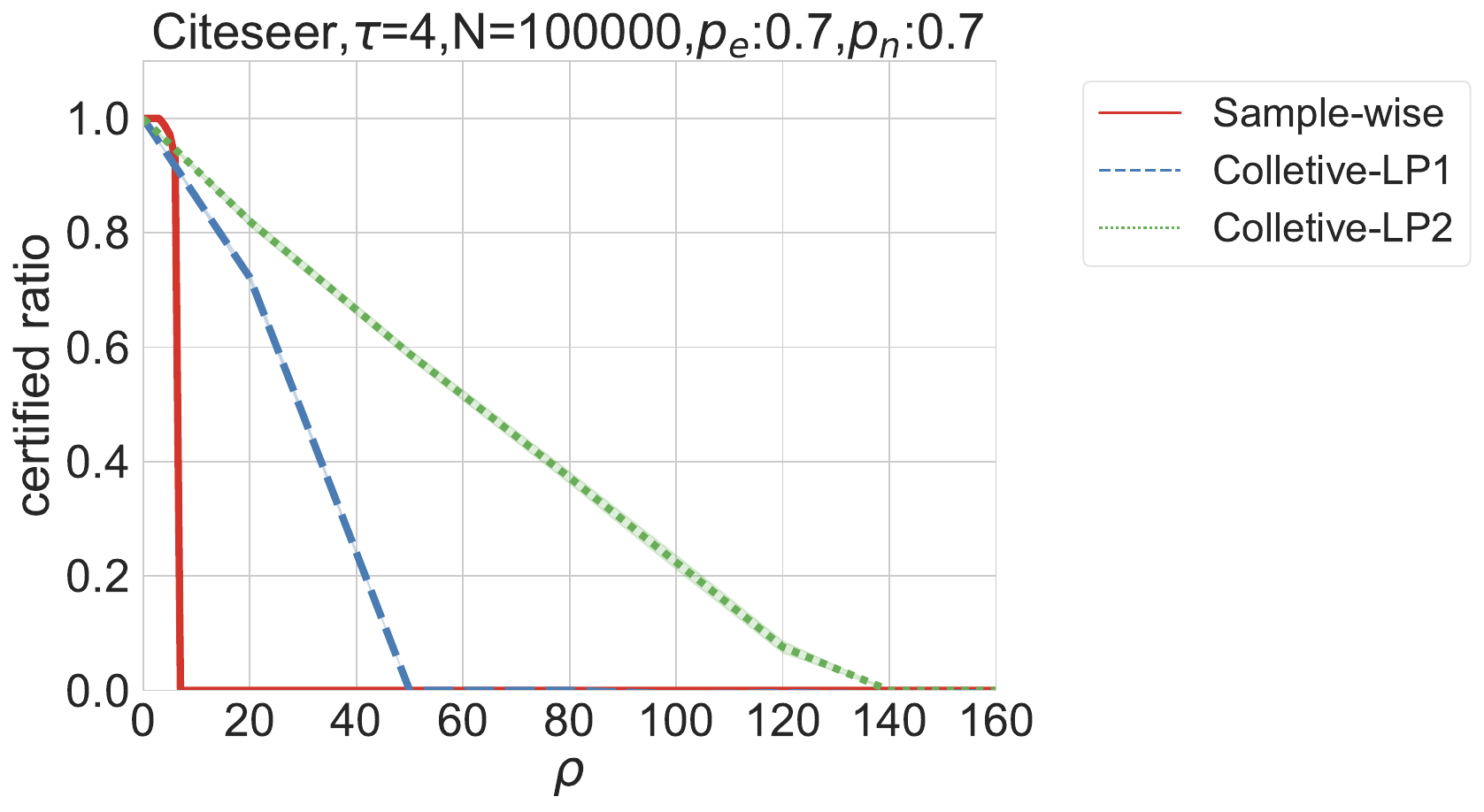}
    }
    \subfigure[]{
    \includegraphics[width=0.19\textwidth,height=2.8cm]{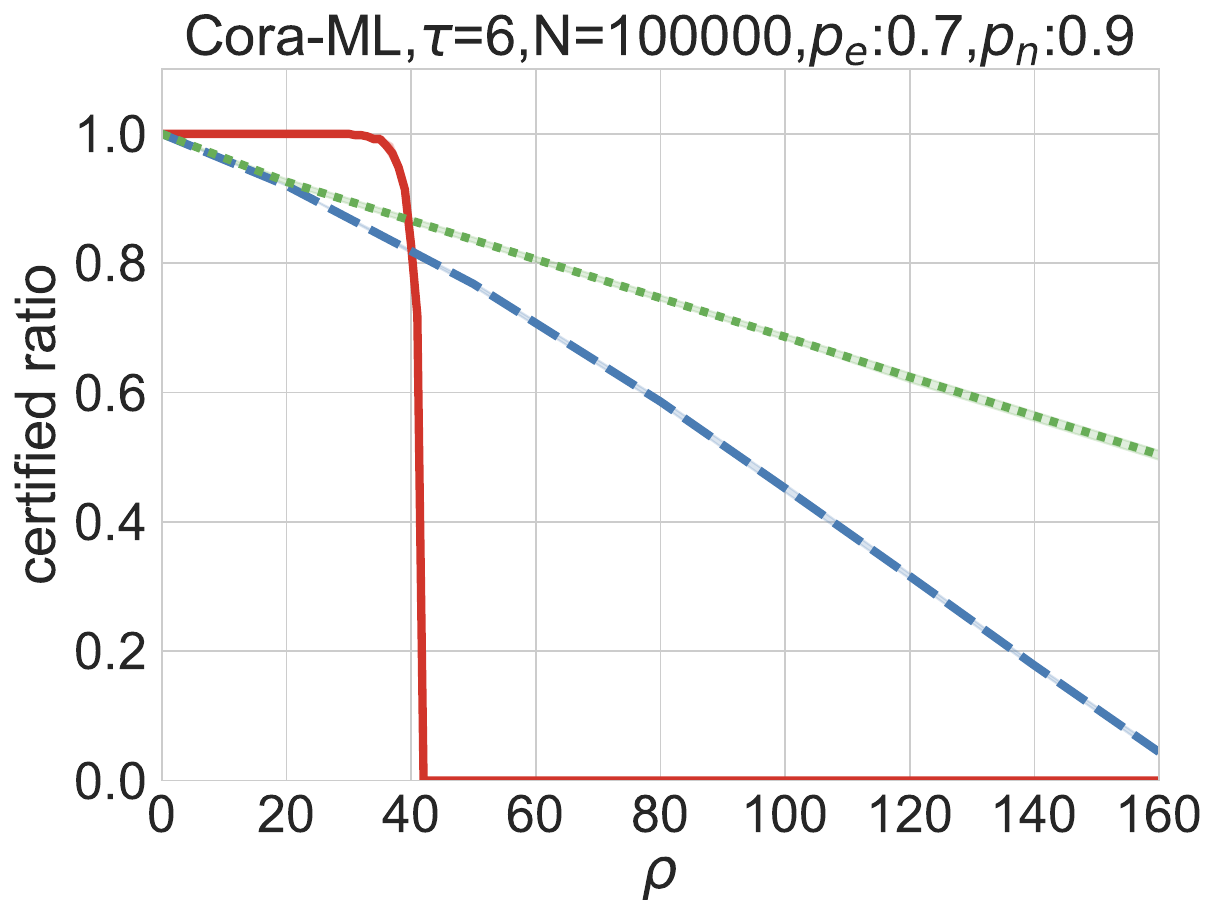}
    }
    \subfigure[]{
    \includegraphics[width=0.255\textwidth,height=2.8cm]{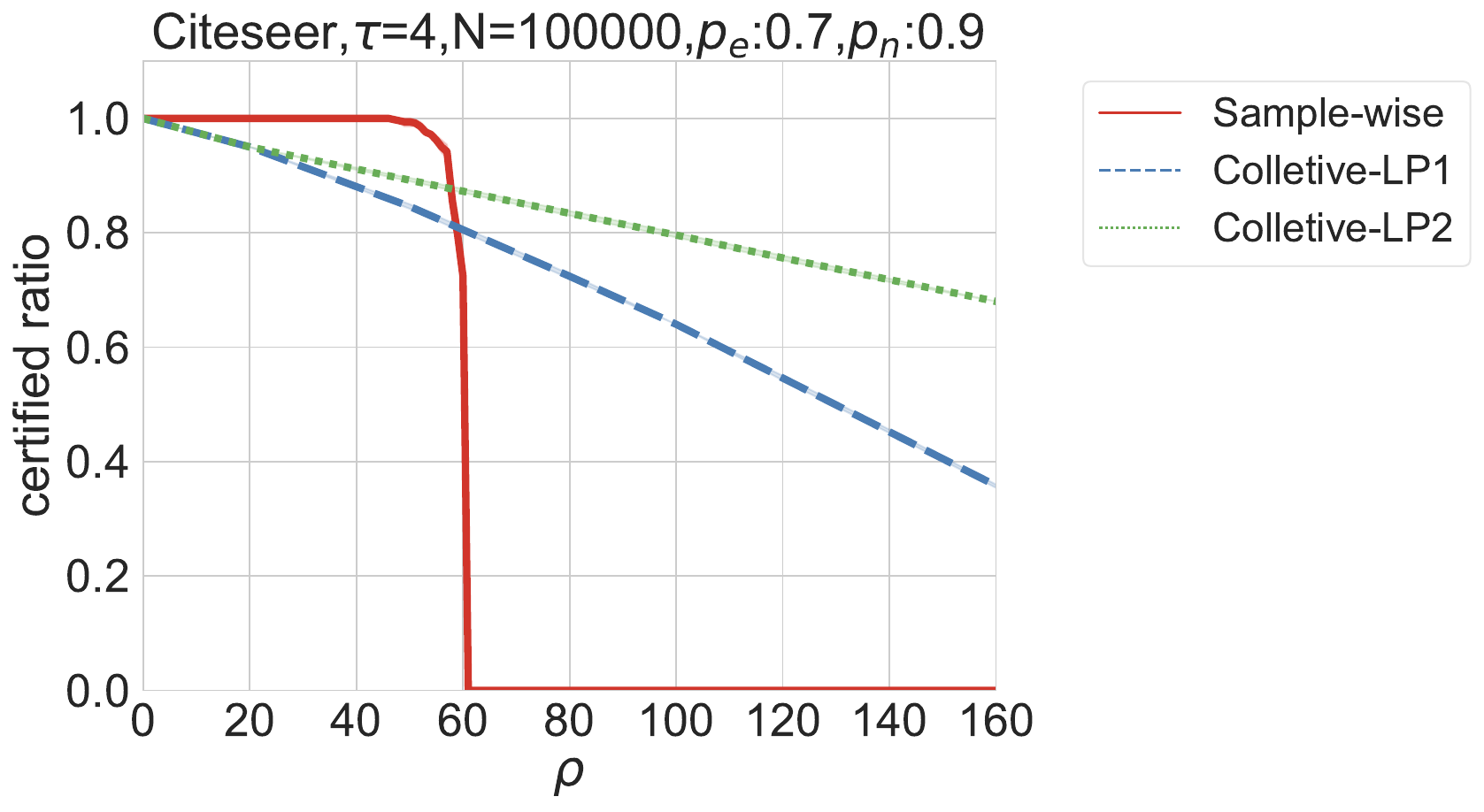}
    }
    \subfigure[]{
    \includegraphics[width=0.19\textwidth,height=2.8cm]{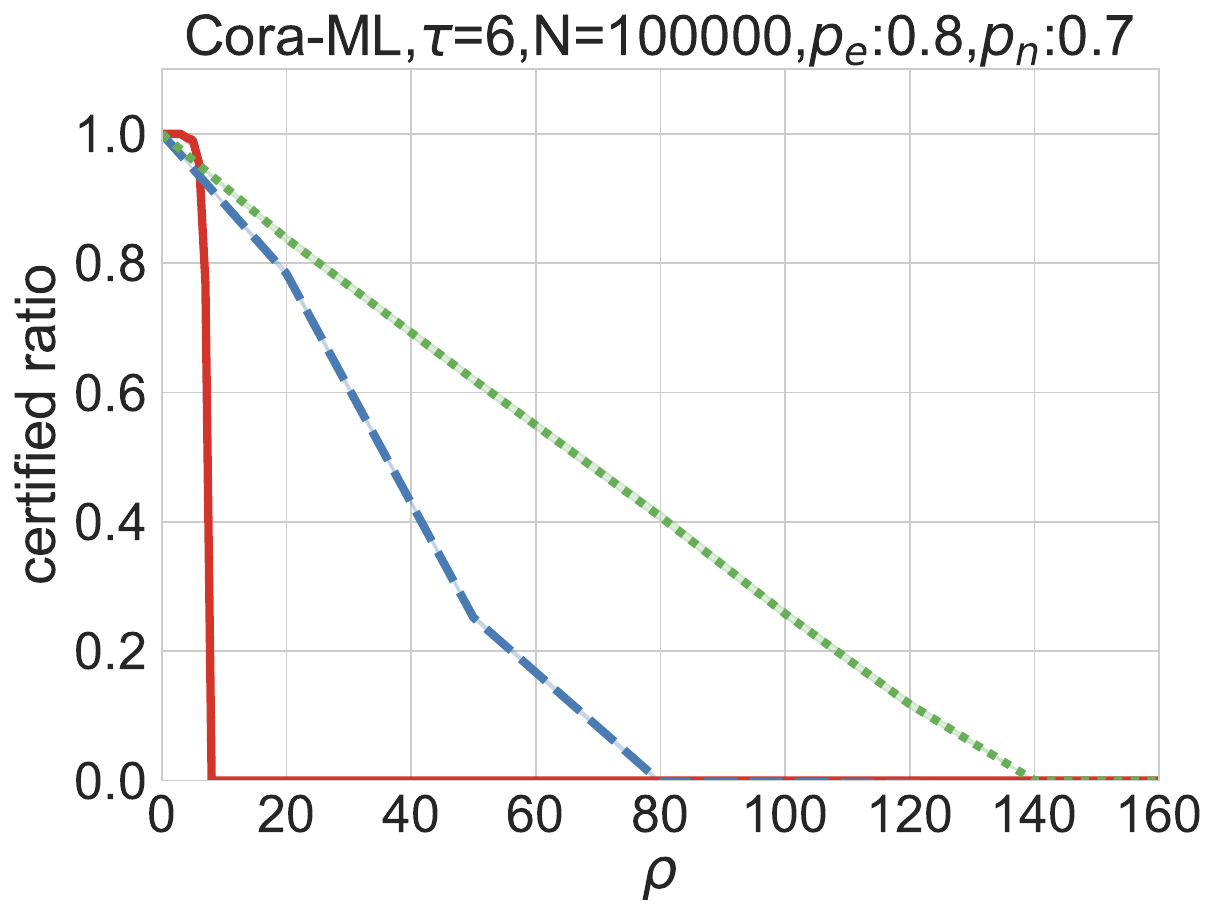}
    }
    \subfigure[]{
    \includegraphics[width=0.255\textwidth,height=2.8cm]{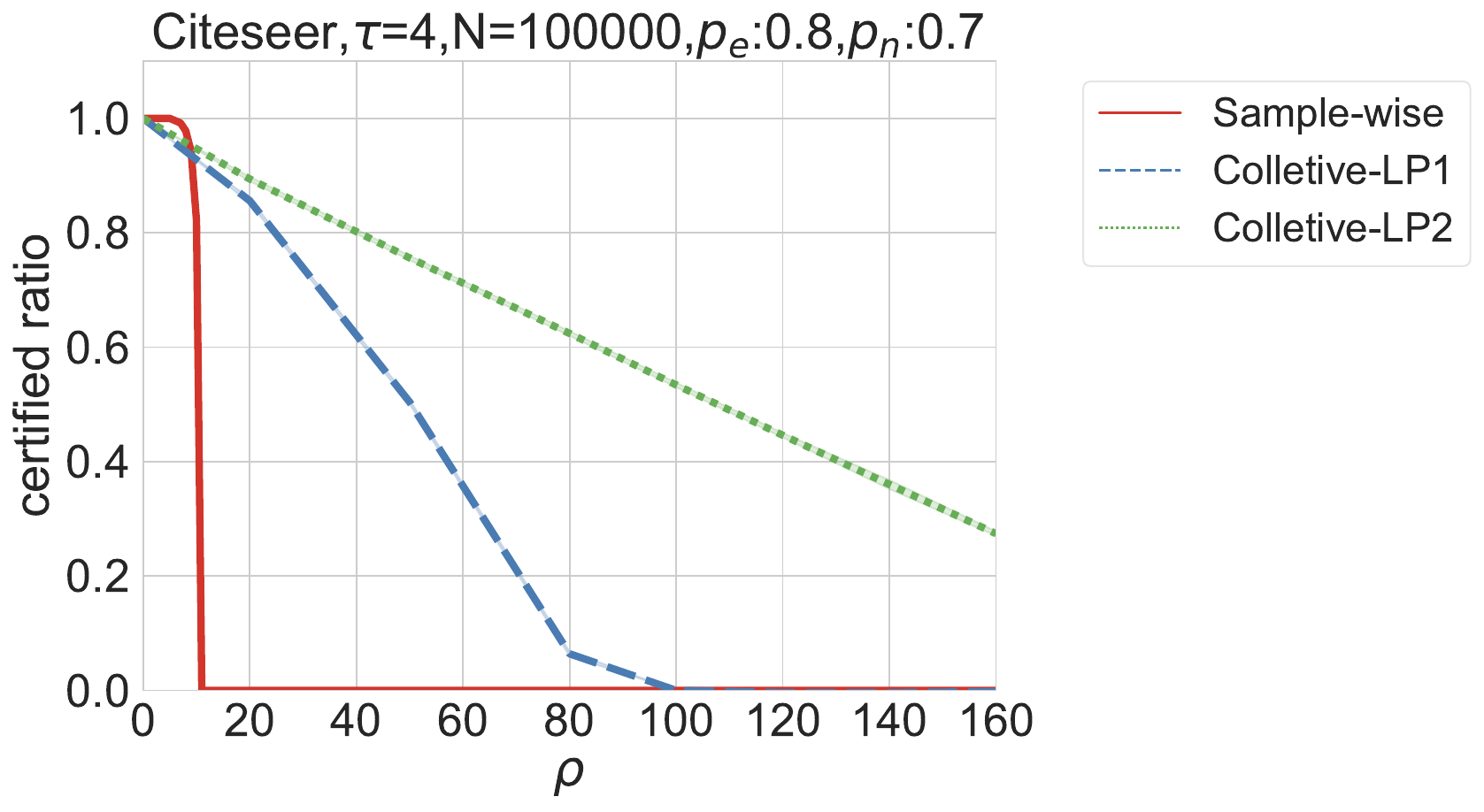}
    }
    \subfigure[]{
    \includegraphics[width=0.19\textwidth,height=2.8cm]{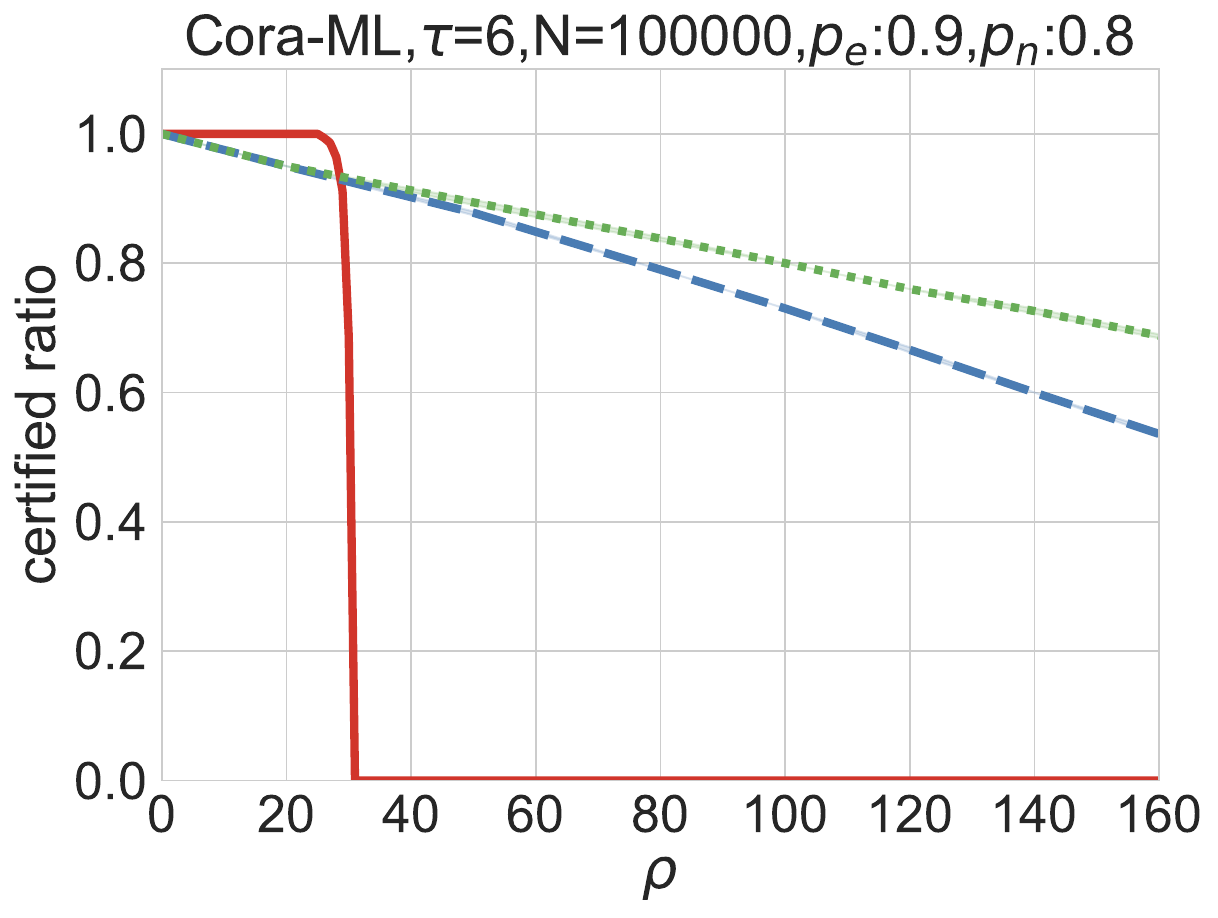}
    }
    \subfigure[]{
    \includegraphics[width=0.255\textwidth,height=2.8cm]{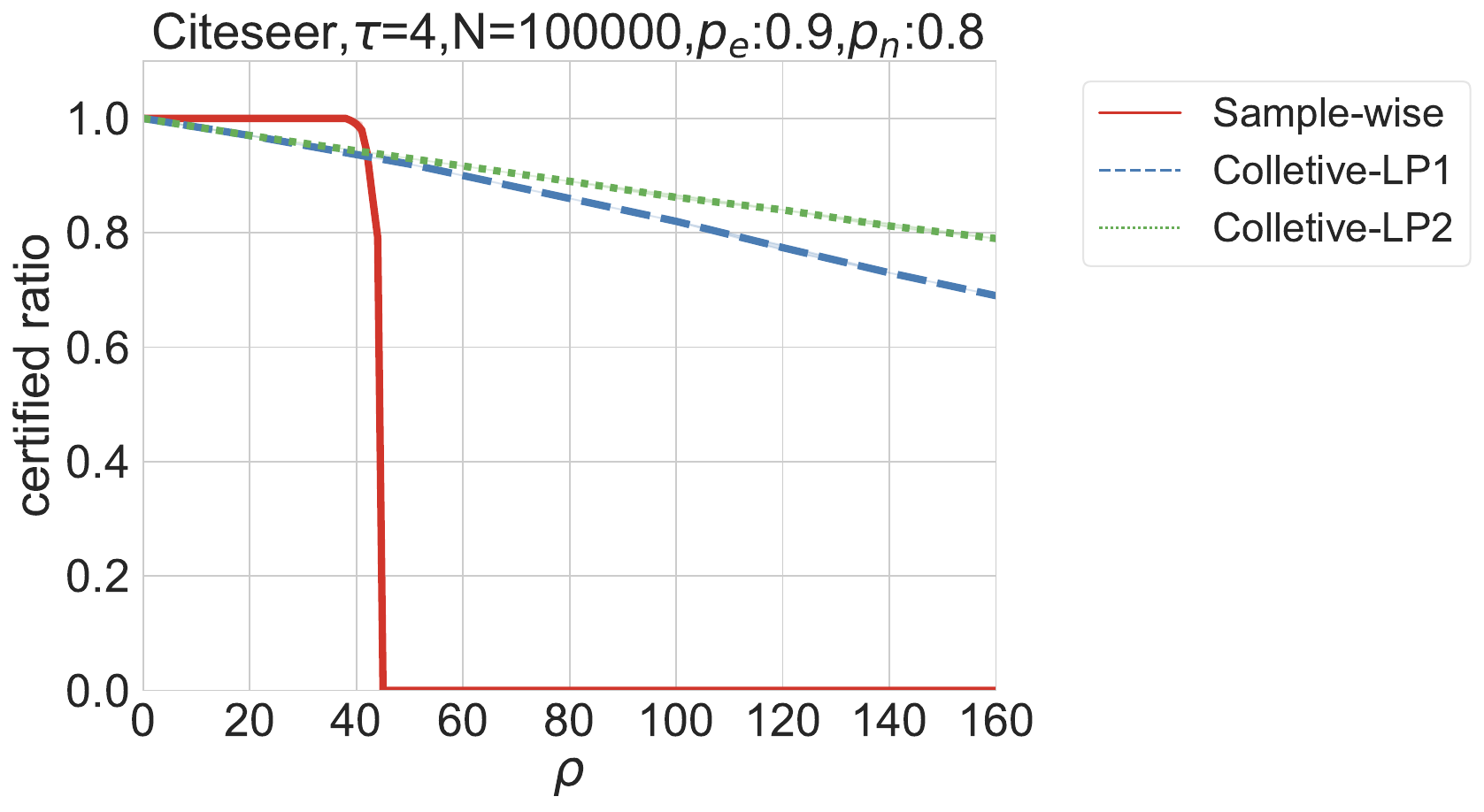}
    }
\caption{Certified ratio of our collective model under various smoothing parameters on GCN model.}
\label{fig:GCN_cer_curve_with_different_parameters}
\end{figure}

\subsection{GCN certified ratio of our methods under different smoothing parameters}

In addition, we conducted experiments to compare the performance of our methods with the sample-wise method under different combinations of parameters $p_e$ and $p_n$ on the Cora and Citeseer datasets. The results are shown in Figure. \ref{fig:GCN_cer_curve_with_different_parameters}.

From the figures, we can observe that our proposed methods always exhibit a larger certifiable radius. For example, when $\rho$ exceeds $60$, the sample-wise method fails to defend against any attacks, while our methods are still able to provide certifiable guarantees.


\subsection{Time complexity comparison of two relaxations}
Furthermore, we provide more detailed results on the runtime of the two proposed methods with different parameters in Figure. \ref{fig:cer_time_LP}. From the figures, we can observe that as the attack budget $\rho$ increases, the proposed Collective-LP2 method demonstrates superior efficiency compared to Collective-LP1 in both datasets. This efficiency advantage is particularly evident when $\rho$ exceeds $120$.
Notably, when $\rho=160$, the Collective-LP1 takes approximately $1,000$ seconds to complete the computation. On the other hand, the time consumption of Collective-LP2 remains consistently below $90$ seconds.

These results highlight the computational advantage of Collective-LP2 over Collective-LP1, especially for larger attack budgets. The reduced runtime of Collective-LP2 ensures the practicality and efficiency of our proposed method, making it suitable for real-world scenarios with larger attack budgets.

\begin{figure}[hbt!]
\centering
    \subfigure[Collective-LP1]{
    \includegraphics[width=0.19\textwidth,height=2.8cm]{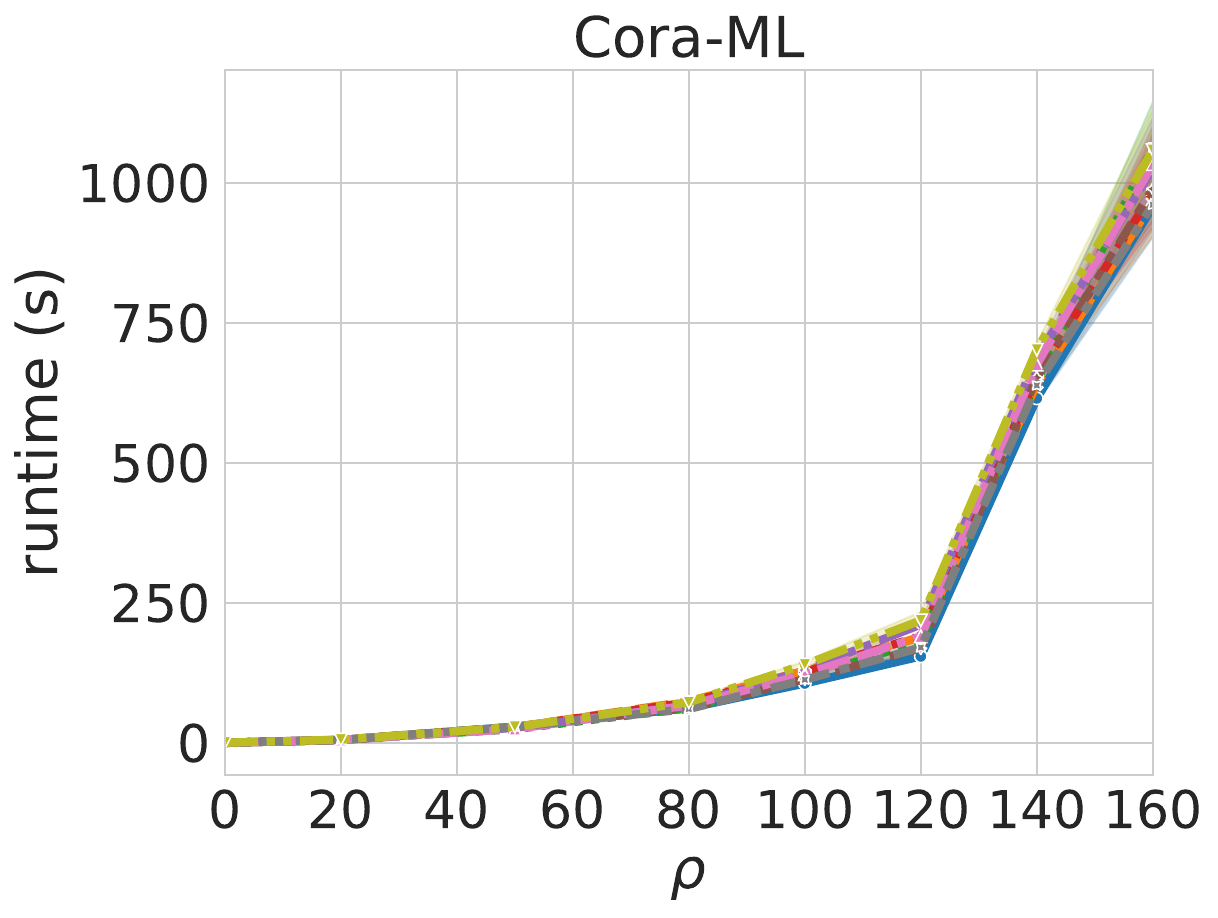}
    }
    \subfigure[Collective-LP1]{
    \includegraphics[width=0.255\textwidth,height=2.8cm]{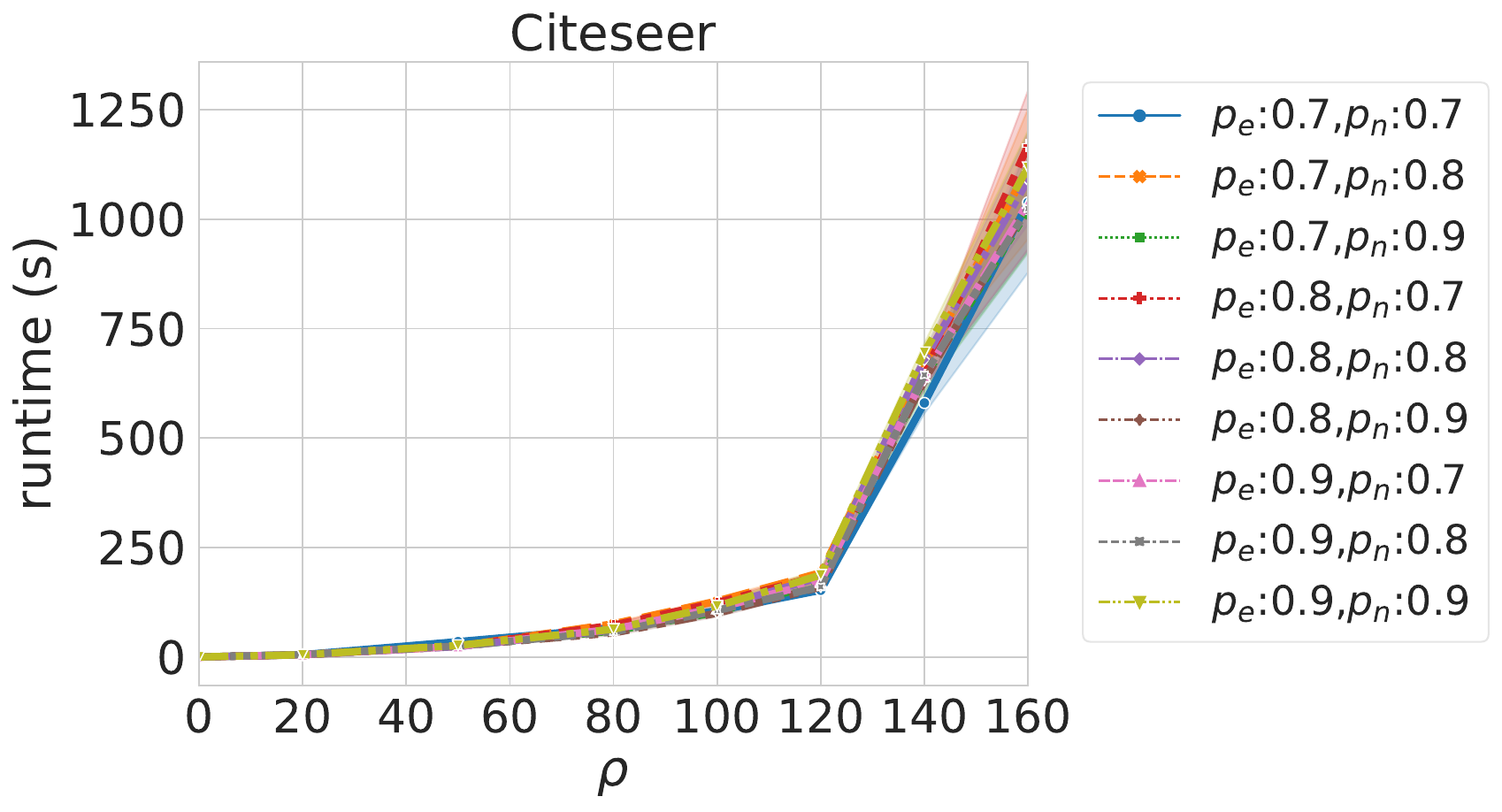}
    }
    \subfigure[Collective-LP2]{
    \includegraphics[width=0.19\textwidth,height=2.8cm]{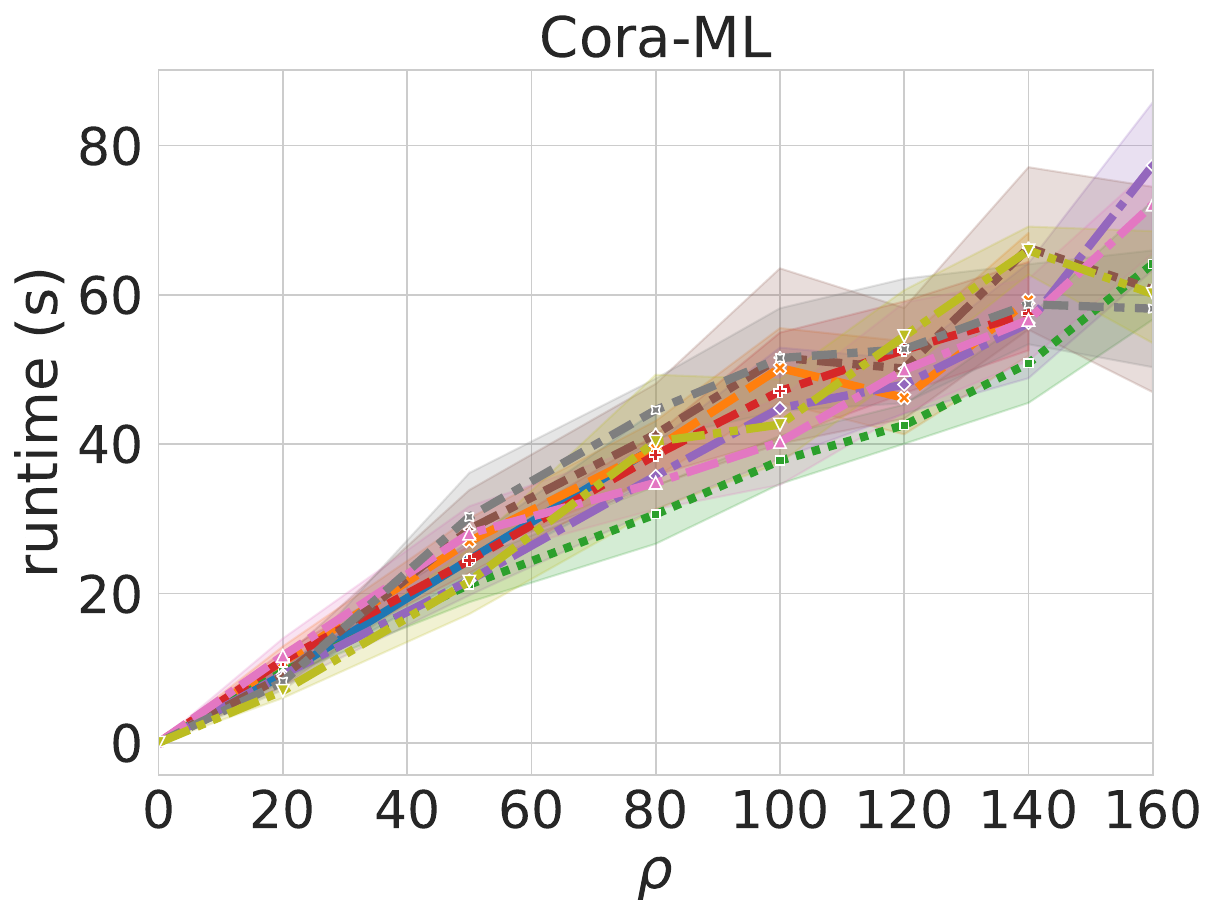}
    }
    \subfigure[Collective-LP2]{
    \includegraphics[width=0.255\textwidth,height=2.8cm]{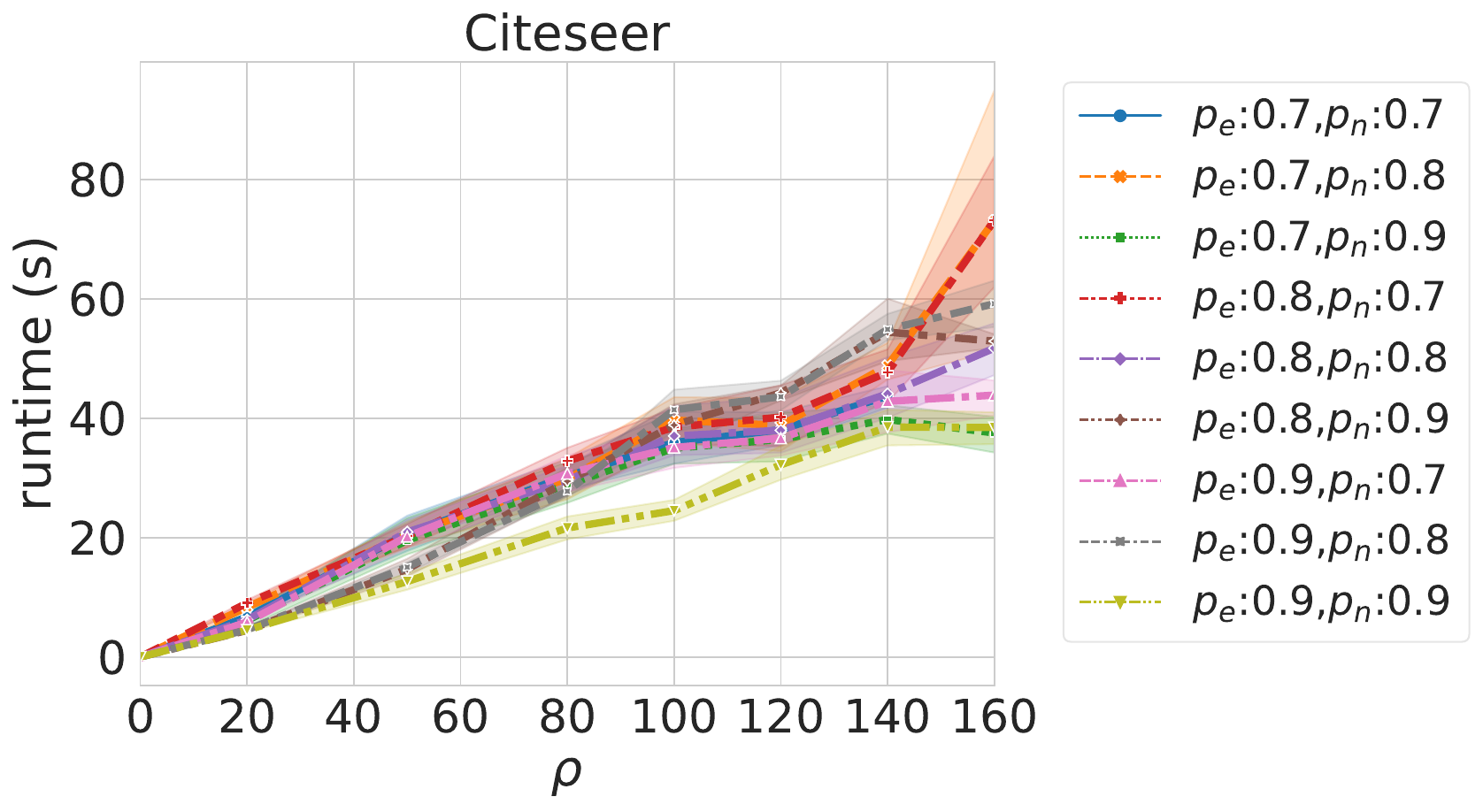}
    }
\caption{Runtime of our collective model under various smoothing parameters. }
\label{fig:cer_time_LP}
\end{figure}

\subsection{Against Global Attack: Verifying all testing nodes in a time}
\label{Sec:Appendix_global}

Alternatively, instead of verifying a subset of target nodes $\mathbb{T}$, we can extend our approach to verify all the testing nodes in the graph, as illustrated in Figure~\ref{fig:cer_all}. In this scenario, we measure the certified accuracy, which represents the ratio of nodes that are both correctly classified and certified to be consistent, as well as the runtime of our customized approach (Collective-LP2).

We have observed that the certified accuracy of our collective certificate only experiences a slight decrease as the attack budget increases, while the sample-wise approach can only certify the case of $\rho$ less than $50$. This indicates that our approach maintains a high level of certified robustness even when facing more severe adversarial attacks.

Furthermore, it is worth noting that our Collective-LP2 formulation exhibits excellent computational efficiency. Despite the presence of more than $1500$ testing nodes, the problem can be solved in less than $3$ minutes, even when the number of injected nodes $\rho$ is set to $140$ (approximately $5\%\times n$). This demonstrates the scalability and practicality of our customized relaxation approach (Collective-LP2) in real-world scenarios.

\begin{figure}[hbt!]
\centering
    \subfigure[Certified accuracy]{
    \includegraphics[width=0.19\textwidth,height=2.8cm]{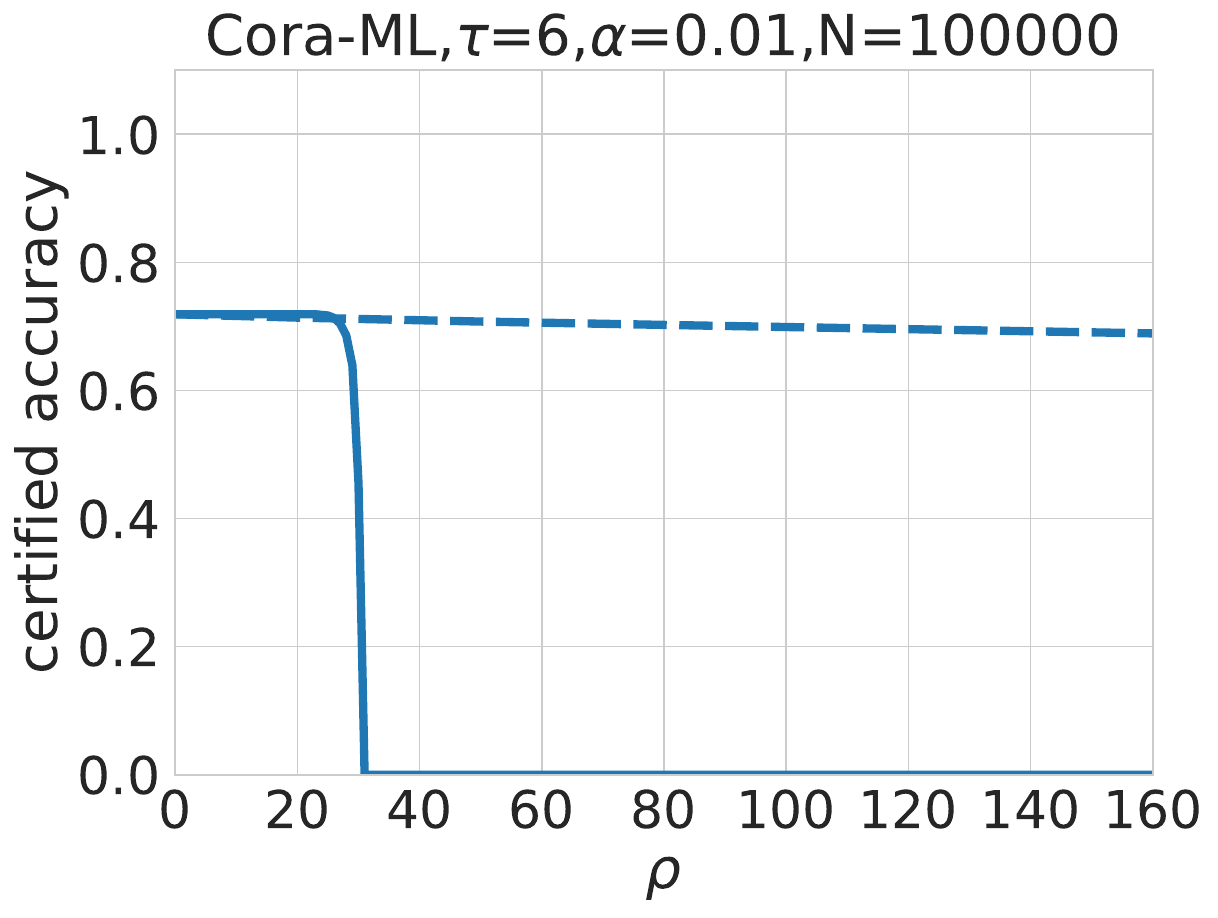}
    }
    \subfigure[Certified accuracy]{
    \includegraphics[width=0.255\textwidth,height=2.8cm]{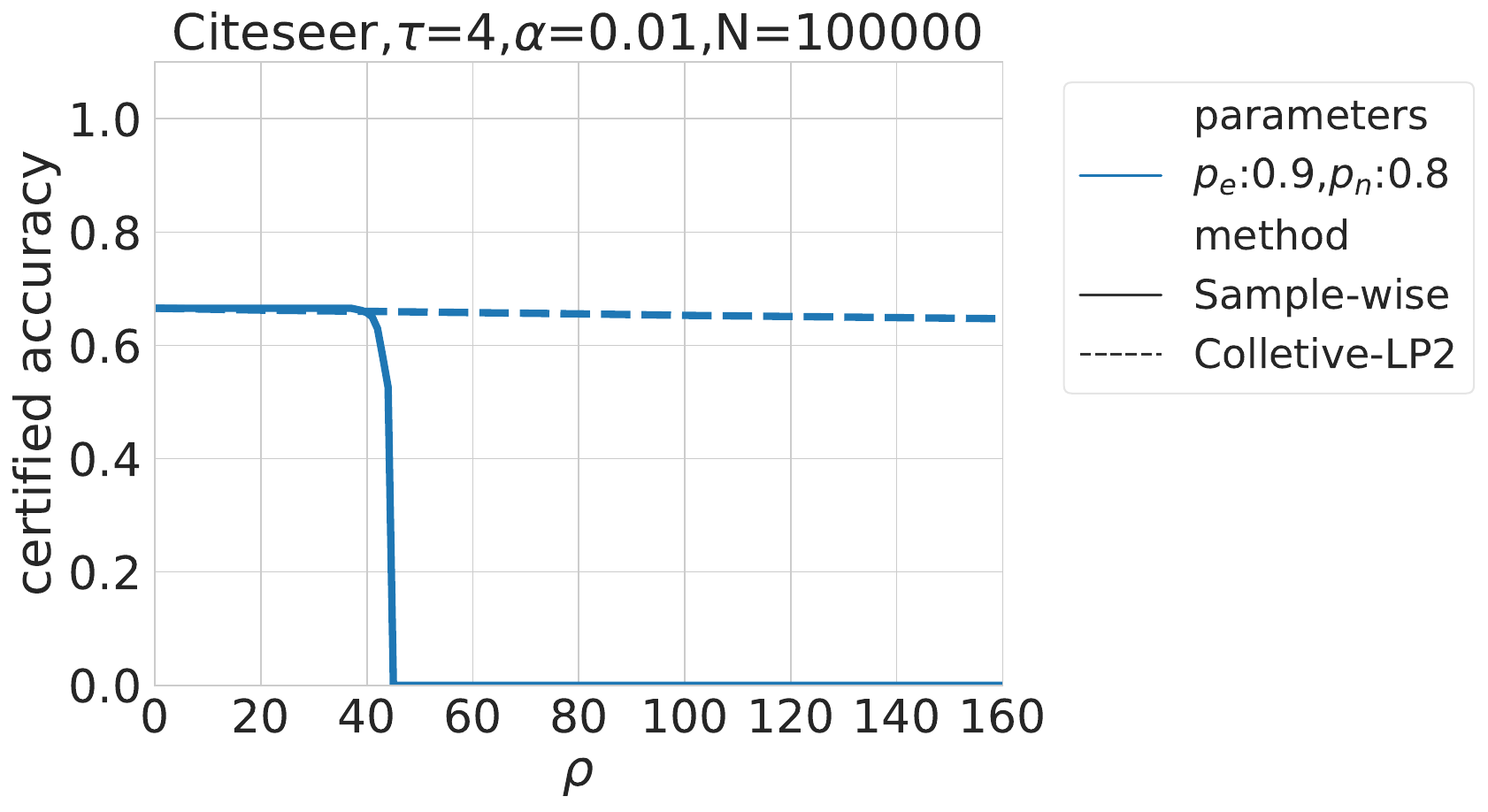}
    }
    \subfigure[Runtime]{
    \includegraphics[width=0.19\textwidth,height=2.8cm]{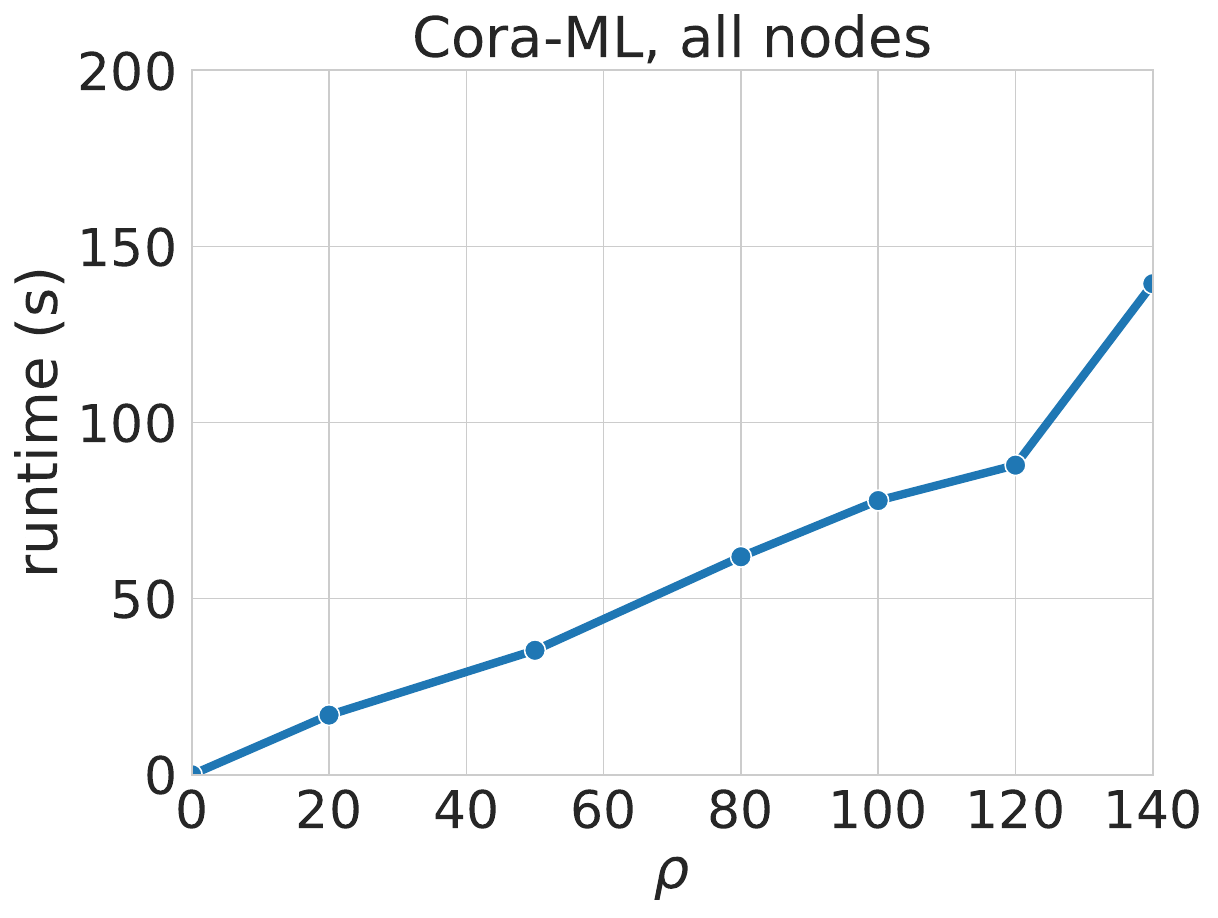}
    }
    \subfigure[Runtime (Collective-LP2)]{
    \includegraphics[width=0.255\textwidth,height=2.8cm]{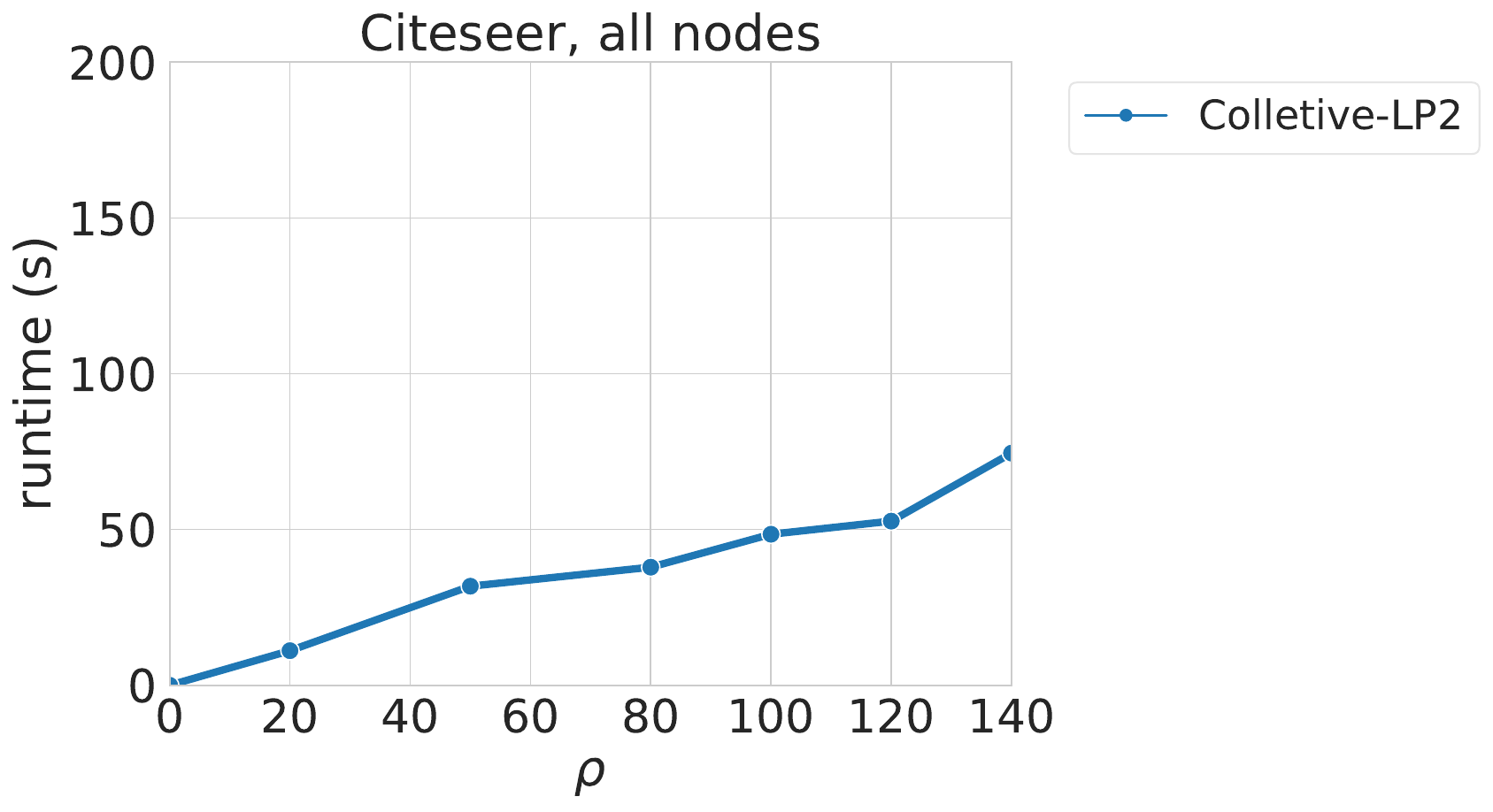}
    }
\caption{Certified accuracy and runtime in the case of setting all the testing nodes as $\mathbb{T}$. }
\label{fig:cer_all}
\end{figure}

\section{Limitations and Future Works}

Our collective certificate is obtained through the solution of a relaxed Linear Programming (LP) problem, which effectively reduces the computational complexity to linear time. However, this relaxation does come at a cost, as it introduces an integrity gap that compromises the certified performance. Consequently, in situations where the attack budget $\rho$ is small and the sample-wise certificate proves effective, the collective certificate may not yield superior results.

Nevertheless, in practical scenarios, we can easily combine the sample-wise and collective certificates with minimal effort to achieve stronger certified performance across a range of attack budgets, whether small or large. It is worth noting that since both the sample-wise and collective models share the same smoothed model, we only need to estimate the smoothing prediction once, avoiding computational overhead. By integrating both certificates, we can leverage their respective strengths and enhance the overall robustness of the system.

In future research, we plan to explore the development of tighter relaxations, such as semi-definite programming (SDP), to better handle the quadratic constraints. This could potentially yield improved certified performance and further enhance the robustness of our approach. Furthermore, we plan to extend the relaxation technique to accommodate polynomial constraints for deeper Graph Neural Networks (GNNs) where $k>2$. This extension will allow us to address more complex scenarios and further strengthen the applicability of our approach in real-world settings. 

    

\end{document}